\newtheorem{thm}{Theorem}
\newtheorem{lemma}[thm]{Lemma}
\newtheorem{defn}[thm]{Definition}
\newtheorem*{remark}{Remark}
\definecolor{oranje}{cmyk}{0,0.50,0.84,0}
\newcommand{\Cov}{\mathrm{Cov}}
\newcommand{\Var}{\mathrm{Var}}
\newcommand{\argmin}[1]{\underset{#1}{\mathrm{argmin}}\ }
\newcommand{\argmax}[1]{\underset{#1}{\mathrm{argmax}}\ }
\newcommand{\bs}[1]{\boldsymbol{#1}}
\newcommand{\ee}{\mathrm{exp}}
\newcommand{\dd}{\mathrm{d}}
\newcommand{\mc}[1]{\mathcal{#1}}
\title{Flexible co-data learning for high-dimensional prediction}
\author[1]{Mirrelijn M. van Nee\thanks{ 
The first author is supported by ZonMw TOP grant COMPUTE CANCER (40-
00812-98-16012).}}
\author[2]{Lodewyk F.A. Wessels}
\author[1]{Mark A. van de Wiel}
\affil[1]{{\small{Epidemiology and Biostatistics, Amsterdam University Medical Centers}}}
\affil[2]{{\small{Molecular Carcinogenesis, Oncode Institute and Netherlands Cancer Institute}}}
\date{}
\begin{document}

\maketitle

\begin{abstract}
Clinical research often focuses on complex traits in which many variables play a role in mechanisms driving, or curing, diseases. 
Clinical prediction is hard when data is high-dimensional, but additional information, like domain knowledge and previously published studies, may be helpful to improve predictions. 
Such complementary data, or co-data, provide information on the covariates, such as genomic location or p-values from external studies.
Our method enables exploiting multiple and various co-data sources to improve predictions. 
We use discrete or continuous co-data to define possibly overlapping or hierarchically structured groups of covariates.
These are then used to estimate adaptive multi-group ridge penalties for generalised
linear and Cox models. 
We combine empirical Bayes estimation of group penalty hyperparameters with an extra
level of shrinkage. 
This renders a uniquely flexible framework as any type of shrinkage can be used on the group level. 
The hyperparameter shrinkage learns how relevant a specific co-data source is, counters overfitting of hyperparameters for many groups, and accounts for structured co-data. 
We describe various types of co-data and propose suitable forms of hypershrinkage.
The method is very versatile, as it allows for integration and weighting of multiple co-data sets, inclusion of unpenalised covariates and posterior variable selection. 
We demonstrate it on two cancer genomics applications and show that it may improve the performance of other dense and parsimonious prognostic models substantially, and stabilises variable selection.
\end{abstract}

% \begin{keyword}
% \kwd{clinical prediction}
% \kwd{omics}
% \kwd{prior information}
% \kwd{penalised generalised linear models}
% \kwd{empirical Bayes.}
% \end{keyword}

\section{Introduction}
% It is important for the abstract and introduction to say What scientific/policy question(s) the paper addresses, Why the topic is of interest, and How the data analyses will address these questions, in as non-technical a manner as possible

%Inleidende alinea algemeen: (clinical) prediction, in particular binary and survival response, high-dimensional data, (multiple) auxiliary information complementing data, co-data, can help to improve predictions (and connect domain knowledge to the learning process for better interpretable predictors). 
%Alinea setting/application (cancer) omics, e.g. genomics, copy number alteration. Prediction questions of interest.
%Alinea various types of co-data, for which unknown if indeed informative for the prediction setting at hand. 

%high-dimensional data, prediction&covariate selection, application
High-dimensional data is increasingly common in clinical research in the form of omics data, e.g. data on gene expressions, methylation levels and copy number alterations.
Omics are used in clinical applications to predict various outcomes, in particular binary and survival, possibly using clinical covariates like age and gender in addition to omics in the predictor. 
% Whereas parsimonious predictors are often preferred due to interpretability or budget reasons, it is not necessary that the true, underlying mechanism generating the outcome can always be represented and therefore predicted by a small set of covariates. 
% For complex traits such as cancer it has been argued that most of the genome has an effect \citep{boyle2017expanded}.
% Nonetheless, covariate selection techniques may be used for biomarker discovery and to obtain predictive panels of small-sized selections of biomarkers.
Examples in cancer genomics include predicting diagnosis of cancer, therapy response and time to recurrence of a tumour.

% Clinical applications in cancer genomics include 
% predicting various outcomes such as diagnosis of cancer as binary outcome and time to recurrence of a tumour as survival outcome.
% Predictors may include clinical covariates like age and gender next to the omics covariates.
% building classifiers, using omics possibly next to clinical covariates like age and gender, to predict diagnosis or type of cancer, and to predict time to recurrence of a tumour. 

%hier alinea over co-data: vaak beschikbaar+voorbeelden 
Unfortunately, many clinical omics studies are hampered by small sample size (e.g. $n = 100$), either due to budget or practical constraints. 
In addition to the main data, however, auxiliary information on the covariates usually exists, in the form of domain knowledge and/or results from external studies. 
In cancer genomics, acquired domain knowledge is made available in published disease related signatures and online encyclopediae such as the Gene Ontology \citep{ashburner2000gene} and Kegg pathways \citep{kanehisa2000kegg}. 
External, similar studies are available in repositories like The Cancer Genome Atlas \cite[TCGA]{tomczak2015cancer}, from which summary statistics like p-values or false discovery rates can be derived.
% External research may be similar to the main data set, but not exactly the same, e.g. because of differing types of disease, differing research populations like animals or cell lines, or differing technical methods of measuring the main data type. 
% The desired predictor is the one that best predicts new samples obtained in the same study setup as used for the main data. Hence, the external data cannot simply be concatenated to the main data. 
% Nonetheless, the external data may still contain some information valuable to predictions.
In general we use the term \textit{co-data}, for \textit{\underline{co}mplementary \underline{data}}, to refer to any data that complements the main data by providing information on the covariates.
Figure \ref{fig:codata} depicts some examples.

%alinea verschil co-data en meta-analyse
We contrast co-data learning with meta-analysis, as both use multiple data sources. 
For the latter, the focus lies on \textit{estimation} of model parameters over a \textit{common} research population. Hence, the research question and population should be similar for all data sources.
For learning from co-data on the other hand, the focus lies on \textit{prediction} for new samples from the \textit{main} research population.
External research may therefore differ from the main data in outcome, e.g. different types of disease, or in research population, e.g. animals versus cell lines. 
While these external data cannot be used in a meta-analysis, or simply be concatenated to the main data, it may still contain valuable information for predictions and be summarised for use as co-data.

%hier alinea over co-data: dat het niet zeker is of de co-data inderdaad informatief is
One would like to build upon all relevant existing knowledge when learning predictors and selecting covariates for the main data, thus learn from multiple and various co-data sets.
Co-data vary in relevance and type of data.
How much, if anything at all, can be learnt from co-data depends on the application and data at hand, and is in general unknown. 
The type of co-data may be continuous or discrete, e.g. external p-values or group membership, possibly further constrained or structured, e.g. hierarchical groups.

%\item Alinea link naar statistiek/GLMs maken + methodes in literature plus voor- en nadelen, of alleen kern:

% (penalised likelihood/shrinkage prior, decoupling selection and shrinkage, flexible use of multiple prior sources, include unpenalised covariates, flexible approach w.r.t. outcome.)

%methoden voor 1 type uitkomst/co-data
Various methods have been developed which focus on predicting a specific type of response combined with one source of co-data, see for instance \citep{boonstra2013incorporating,tai2007incorporating,treppmann2017integration}. 
% Examples include predictors for continuous responses using auxiliary information from external data \citep{boonstra2013incorporating}, classifiers using Gene Ontology \citep{lottaz2005molecular} or using pathways or gene sets \citep{tai2007incorporating}, and predictors for right-censored survival times using continuous summary statistics from other omics in multi-omics data \citep{treppmann2017integration} or using gene sets \citep{dereli2019path2surv}.
Extending these methods to different types of response or co-data is not always straightforward, as approximation or optimisation algorithms often do not generalise trivially.
% methodes: wel structuur op groepen, discrete co-data, niet adaptive dus niet robuust voor non-informative co-data, vaak voor 1 type uitkomst
Typically, co-data or, more general, prior information is included in statistical prediction models by letting it guide the choice for a specific penalty (or prior) that penalises (or shrinks) model parameters. 
As this choice highly affects the model fit in high-dimensional data, the ability of the fitted prediction model to generalise well to new samples heavily relies on a carefully tuned penalty or prior.
%, optimising the balance between relying on the prior information or on the main data.
Penalties for group lasso \citep{yuan2006model} and for latent overlapping group lasso \citep{jacob2009group} penalise covariates in groups to favor \textit{group sparse} solutions, selecting groups of covariates. 
While being able to use these group penalties to incorporate additional structure on the group level such as grouped trees \citep{liu2010moreau} and hierarchical groups \citep{yan2017hierarchical}, only one overall hyperparameter is used to tune the penalty. 
This makes the penalty unable to adapt locally to the main data when part of the groups or structure is non-informative for, or in disagreement with the main data, leading to sub-optimally performing prediction models.
% Model fits estimated using these penalties then generalise suboptimally to new samples.
% The (structured) group penalties are tuned by cross-validating one overall hyperparameter, hence do not flexibly adapt to local divergences from the prior and are not robust against (partly) uninformative co-data. 

% methods: wel adaptief en daarom in theorie robuust voor non-informative co-data, however, starts overfitting when number of groups is large. Niet altijd voor veel soorten uitkomsten, niet voor verdere structuur op groepen
% grridge apart: linear\& logistic, nog wel overfitten, alleen monotonicity constraint mogelijk
Recent work has focused on \textit{group adaptive} penalties, see \citep{wiel2016better,munch2018adaptive,velten2018adaptive}, in which groups of covariates share the same prior or penalty parameterised by a group-specific hyperparameter.
The hyperparameters are learnt from the data, effectively learning how informative the co-data is and how important each covariate group is for the prediction problem at hand.
% using the posterior distribution in a fully Bayesian approach, or using the marginally most likely hyperparameters in an empirical Bayes approach, as explained and reviewed in \citep{van2019learning}. 
Whereas these penalties or priors are able to adapt locally on the group level, these methods do not allow for including any structure on the groups. Moreover, the methods tend to overfit in the number of hyperparameters for an increasing number of groups.

% overview EB for omics \citep{van2019learning}
% gren \citep{munch2018adaptive}
% grridge \citep{wiel2016better}
% graper \citep{velten2018adaptive}
% smith \citep{smith2017empirical}

%Alinea onze methode, voor- en nadelen, wat kan je ermee (resultaten benadrukken):
%Predictor that generalises well, for generalised linear models
%Fast method and scalable in terms of $n$ and $p$
%Combine various co-data
%Stable prior parameter estimates $\hat{\bs{\tau}}$

Here we present a method for ridge penalised generalised linear models that is the first to combine adaptivity on the group level with the ability to handle multiple and various types of co-data.
While the main data still drives the regression parameter estimation, the co-data can impact the penalties which act as inverse weights in the regression.
By adequately learning penalties from valuable co-data, prediction and covariate selection for omics improve.
The method is termed \texttt{ecpc}, for Empirical bayes Co-data learnt Prediction and Covariate selection. 
A moment-based empirical Bayes approach is used to estimate the adaptive group ridge penalties efficiently, opening up the possibility to introduce an extra layer of shrinkage on the group level.
Any type of shrinkage can be used in this layer, rendering a unique, flexible framework to improve predictions because:
\begin{enumerate}[noitemsep,nolistsep]
\item much as a penalty on the covariate level shrinks regression coefficients towards $0$ to counter overfitting and improve parameter estimates, a penalty on the group level shrinks adaptive group penalties to an ordinary, non-adaptive ridge penalty. 
Therefore, the method is able to learn how informative co-data is, ranging from no shrinkage for informative, stable co-data, to full shrinkage for non-informative co-data;

\item instead of including group structure on the covariate level, a structured penalty is included on the group level directly. 
The method utilises this facet to incorporate known structure of overlapping groups, to handle hierarchically structured groups and to handle continuous data by using a data-driven adaptive discretisation.
\end{enumerate}
Multiple co-data are handled by first combining each co-data set with a penalty suitable for that specific co-data source, then integrating various co-data by learning co-data weights with the same moment-based empirical Bayes approach.
Lastly, the framework allows for unpenalised covariates and posterior variable selection. 
Our approach to use a dense model (ridge regression) plus posterior selection is motivated by a three-fold argument: i) biology: for complex traits such as cancer most of the genome is likely to have an effect \citep{boyle2017expanded}; ii) statistics: even in sparse settings dense modelling plus posterior selection can be rather competitive to sparse modelling \citep{bondell2012consistent}, while better facilitating to shift on the grey-scale from sparse to dense; iii) data:  the use of co-data aligns well with dense modelling, allowing it to have a smooth impact on penalties and parameter estimates.

%Alinea overview opzet paper:
% \item (Alinea korte beschrijving onze data sets: methylation data with p-values from external study and biological knowledge, miRNA data with FDR from two external studies, summary data from those studies, and abundance and standard deviation derived from the data set itself)

The paper is outlined as follows. 
Section \ref{par:codata} elaborates on generic types of co-data. 
Section \ref{par:method} then presents the model and methods to estimate the model parameters.
Here, we present the penalised estimator for adaptive group penalties using an extra layer of any type of shrinkage, which forms the basis for handling various types of co-data. 
Several model extensions are presented in Section \ref{par:extensions}. 
Section \ref{par:simulation} presents a simulation study illustrating how the extra layer of shrinkage enables the method to learn to shrink group weights when needed. 
Section \ref{par:application} then demonstrates the method on two applications in cancer genomics using multiple co-data, showing that \texttt{ecpc} improves or matches benchmark methods that are either group-adaptive or incorporate additional group structure, but from which none are able to incorporate both. 
Finally, Section \ref{par:discussion} concludes and discusses the method.

\section{Co-data}\label{par:codata}
Co-data complements the main data from which the predictor has to be learnt. Whereas the main data contain information about the \textit{samples}, the co-data contain information about the \textit{covariates}. 
Co-data can be retrieved from external sources, e.g. from public repositories, or derived from the main data, as long as the response is not used. 
To exemplify different types of co-data, we show some prototypical examples in Figure \ref{fig:codata}. Here we describe the generic structures of co-data underlying the examples.

\textbf{Non-overlapping groups of covariates:} the covariates are grouped in non-overlapping groups. An example in cancer genomics is groups of genes located on the same chromosome.

\textbf{Overlapping groups:} the covariate groups are overlapping, for example, groups representing pathways, i.e. for some biological process all genes involved are grouped. As genes often play a role in multiple processes, the resulting groups are overlapping.

% \textbf{Many groups:} the term ``many" could indicate one of the following. The groups are either relatively small in size, or the number of groups is large compared to the number of samples $n$, or both. The pathway groups can be relatively small, containing only tens of genes, and, depending on the sample size, can be of the same order or larger than the sample size.

\textbf{Structured groups:} relations between groups are represented in a graph. Gene ontology, for example, represents groups of genes in a directed acyclic graph. Each node in the graph represents a biological function corresponding to a group of genes that (partly) fulfill that function. Nodes at the top of the hierarchy represent general functions and are refined in more specific biological functions downwards in the graph. Each node represents a subset of genes of its parent nodes.

\textbf{Continuous co-data:} as opposed to the discrete groups in the previous examples, the co-data are continuous. An example in cancer genomics is p-values derived from a previously published, similar study. Another example is standard deviations of each covariate computed from the data without using the response.

\begin{figure}
    \centering
    \begin{subfigure}[b]{0.24\linewidth}
        \centering
        \includegraphics[width=0.8\linewidth]{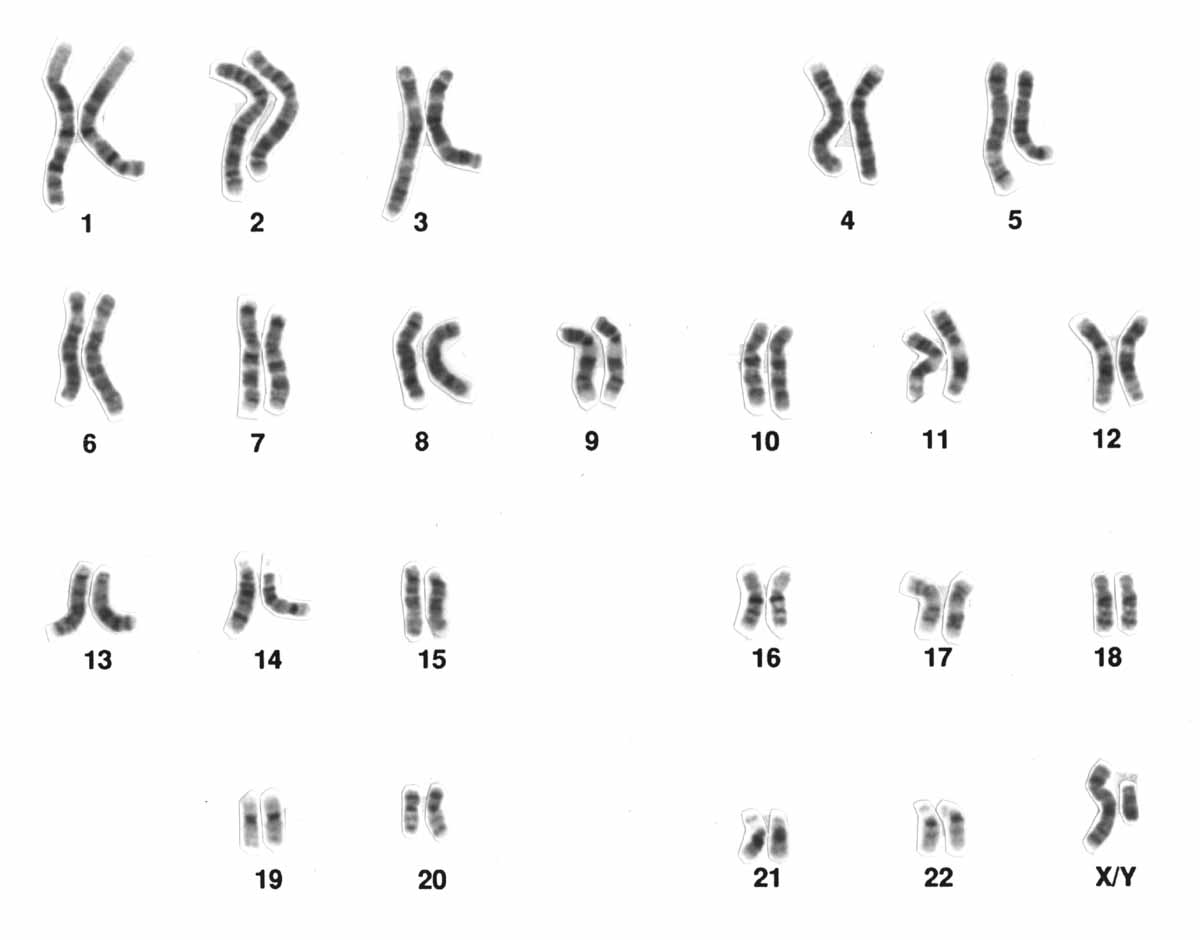}
        \caption{}
    \end{subfigure}
    \begin{subfigure}[b]{0.24\linewidth}
        \centering
        \includegraphics[width=\linewidth]{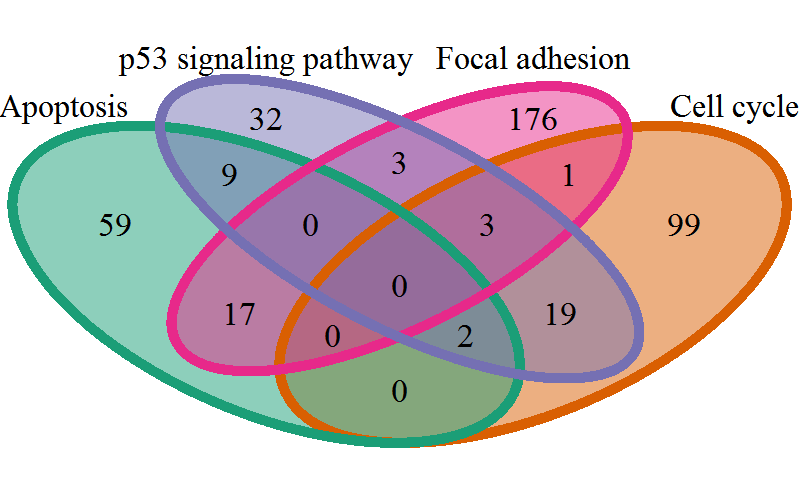}
        \caption{}
    \end{subfigure}
    \begin{subfigure}[b]{0.24\linewidth}
        \centering
        \includegraphics[width=\linewidth]{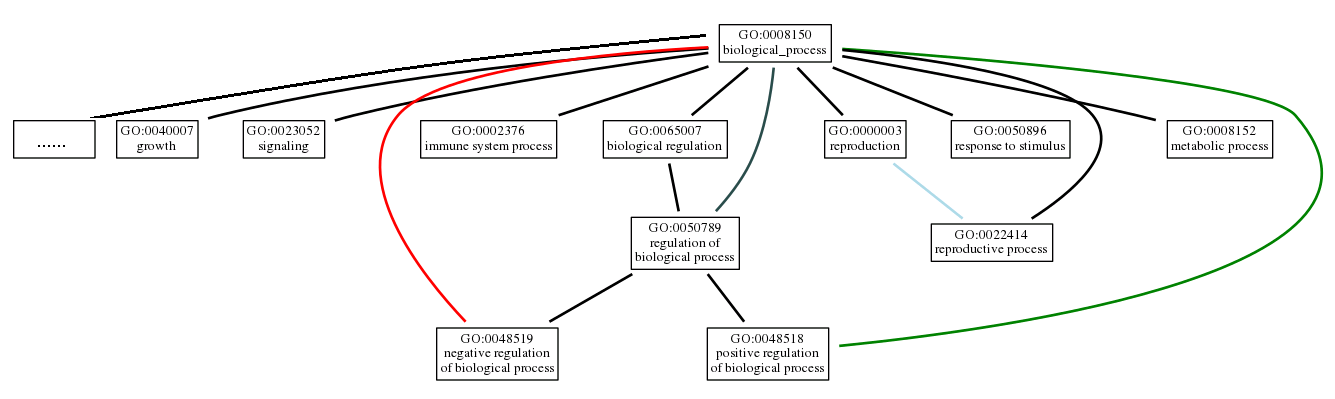}
        \caption{}
    \end{subfigure}
    \begin{subfigure}[b]{0.24\linewidth}
        \centering
        \includegraphics[width=\linewidth]{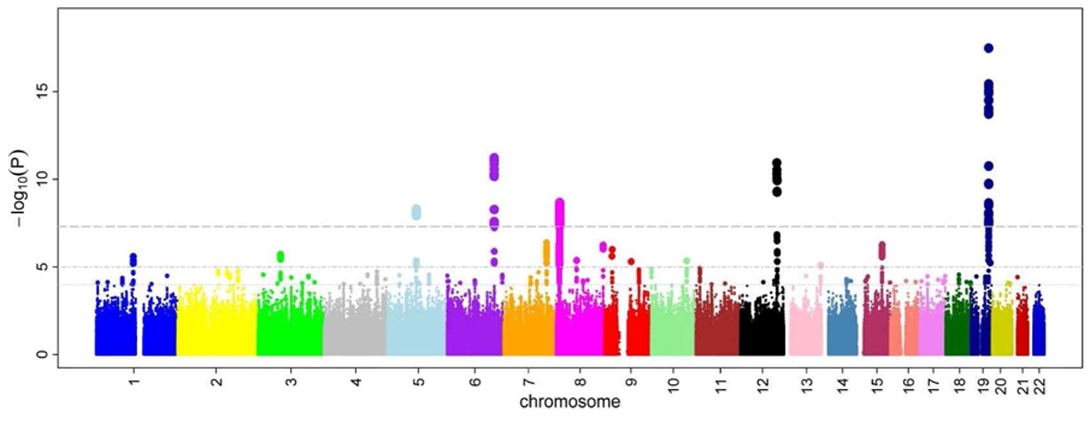}
        \caption{}
    \end{subfigure}
    \caption{Examples of different types of co-data in cancer genomics. (a) Chromosomes: non-overlapping groups of genes on the same chromosome. (b) Pathways: overlapping groups of interacting genes or molecules. (c) Gene ontology: groups structured in a directed acyclic graph (DAG) representing relationships in for example biological function. (d) p-values: continuous p-values derived from an external study.}
    \label{fig:codata}
\end{figure}

\section{Method}\label{par:method}
\subsection{Notation}
Let us first give some notation and some definitions to describe the data and co-data.
Let $\bs{Y}\in\mathbb{R}^{n}$ denote the response vector, $X\in\mathbb{R}^{n\times p}$ denote the observed high-dimensional data matrix, $p\gg n$, and let $Z^{(d)}\in\mathbb{R}^{p\times G^{(d)}},\ d=1,..,D$, defined below, denote $D$ different \textit{co-data matrices} representing \textit{groupings} of covariates:
\begin{defn}
Define each \textbf{grouping} $\mc{G}^{(d)}$, $d=1,..,D$, as a collection of sets $\mc{G}^{(d)}_g$ (called \textbf{groups}) of covariate indices in $\{1,..,p\}$, such that each covariate belongs to at least one group:
\begin{align}
    \{1,..,p\} = \bigcup_{\mc{G}^{(d)}_g\in\mc{G}^{(d)}}\mc{G}^{(d)}_g,\ \forall d=1,..,D.
\end{align}
Denote the grouping size, i.e. number of groups in each grouping, by $G^{(d)}:=|\mc{G}^{(d)}|$, and denote the group size of group $g$ in grouping $d$, i.e. the number of covariates in that group, by $G^{(d)}_g:=|\mc{G}^{(d)}_g|$.
\end{defn}
Note that we use superscripts for the \textit{groupings} number and subscripts for the \textit{group} number in that grouping. The notation is illustrated in Figure \ref{fig:groupings}.
Covariates with missing co-data should preferably be grouped in a separate group as the missingness might be informative.
The groups can possibly be overlapping or structured as in a hierarchical tree, illustrated in Figure \ref{fig:schemeEstPrior}. 
Each co-data matrix $Z^{(d)}$ is defined by a grouping as follows, and illustrated in Figure \ref{fig:groupings}.
\begin{defn}
For each grouping $\mc{G}^{(d)}$, $d=1,..,D$, we define the corresponding \textbf{co-data matrix} $Z^{(d)}$ as the matrix with matrix element $[Z^{(d)}]_{kg}$ on the $k^{th}$ row and $g^{th}$ column given by:
\begin{align}
\begin{split}
    [Z^{(d)}]_{kg}&=\left\{\begin{array}{ll}
        \frac{1}{|\mc{I}^{(d)}_k|} & \text{if }g\in \mc{I}^{(d)}_k \\
        0 & \text{if not}
    \end{array}\right.,\  d=1,..,D,\ k=1,..,p,\ g=1,..,G^{(d)},\\
\end{split}
\end{align}
 where $|\mc{I}^{(d)}_k|$ is the number of groups of grouping $d$ covariate $k$ is in, and $\mc{I}^{(d)}_k$ is the set of indices of the groups to which $\beta_k$ belongs in grouping $d$,
$ \mc{I}^{(d)}_k:=\{g\in\{1,..,G^{(d)}\}: k\in\mc{G}^{(d)}_g\}$.
Effectively, $\bs{Z}^{(d)}_k$ will be used to pool the information from the groups in grouping $d$ that $k$ belongs to.
\end{defn}

\begin{figure}
    \centering
    \includegraphics[width=0.6\linewidth]{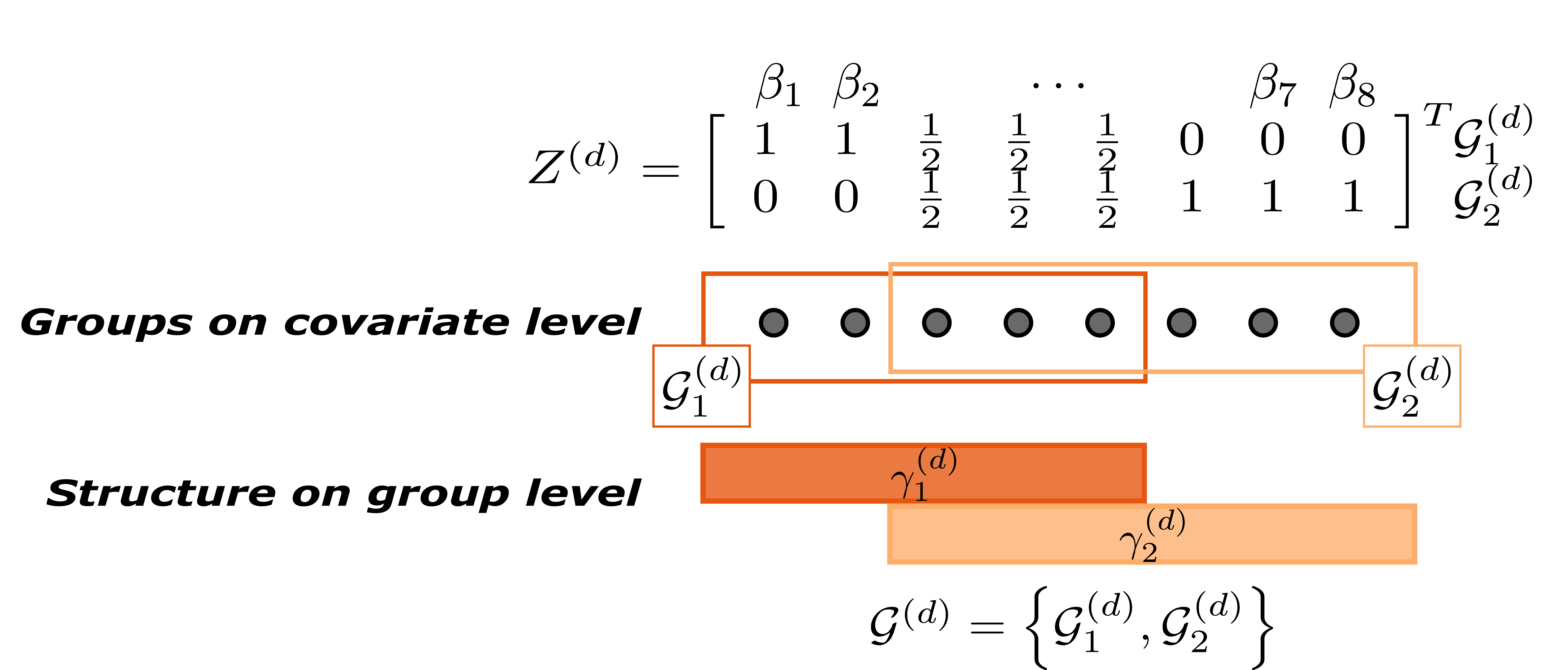}
    \caption{Illustration of notations and definitions. Grey balls represent covariates, colored rectangles groups of covariates. Grouping $\mc{G}^{(d)}$ consists of $G^{(d)}=2$ overlapping groups, $\mc{G}^{(d)}_1$ and $\mc{G}^{(d)}_2$, of sizes $G^{(d)}_1=5$ and $G^{(d)}_2=6$. The grouping defines the co-data matrix $Z^{(d)}$. Each group $\mc{G}^{(d)}_i$ corresponds to a weight $\gamma_i^{(d)}$ on the group level.}
    \label{fig:groupings}
\end{figure}

\subsection{Model}\label{par:model}
We regress $\bs{Y}$ on $X$ using a generalised linear model (GLM) with regression coefficient vector $\bs{\beta}\in\mathbb{R}^p$. We impose a normal prior on $\bs{\beta}$ with a global prior variance $\tau^2_{global}$ and local prior variance $\tau^2_{k,local}$. The local prior variances are regressed on the co-data $Z^{(d)}$, $d=1,..,D$, with each of the $D$ group weight vectors $\bs{\gamma}^{(d)}\in\mathbb{R}^{G^{(d)}}_+$ modeling the relative importance of the groups in grouping $d$, and the grouping weight $w^{(d)}\in\mathbb{R}_+$ the relative importance of grouping $d$. The model is then as follows:
\begin{align}
\begin{split}
    Y_i | \bs{X}_i,\beta &\overset{ind.}{\sim} \pi\left(Y_i | \bs{X}_i,\beta\right),\ E_{Y_i|\bs{X}_i,\bs{\beta}}(Y_i)=g^{-1}(\bs{X}_i\bs{\beta}),\ i=1,..,n,\\
    \beta_k&\overset{ind.}{\sim}N(0,\tau_{global}^2\tau^2_{k,local}),\ k=1,..,p,\\
    \tau^2_{k,local}&=\sum_{d=1}^Dw^{(d)}\bs{Z}^{(d)}_k\bs{\gamma}^{(d)},\ k=1,..,p,
    \end{split}\label{eq:model}
\end{align}
with $\pi\left(Y_i | \bs{X}_i,\beta\right)$ some exponential family distribution with corresponding link function $g(\cdot)$, $\bs{X}_i$ denoting the $i$th row of $X$, and $E_{Y_i|\bs{\beta}}$ denoting the expectation with respect to the probability density/mass function $\pi(Y_i|\bs{X}_i,\bs{\beta})$, where we leave out dependence on $\bs{X}_i$ since we consider $X$ as fixed. 
Note that when some groups are overlapping and say $\beta_k$ belongs to $|\mc{I}^{(d)}_k|$ different groups, we average the group weights. 
Large group weights $\gamma_g^{(d)}$ correspond to large prior variances.

We adopt the Bayesian formulation in Equation \eqref{eq:model} to estimate the prior parameters with an empirical Bayes approach explained in Section \ref{par:estimation}. For the final predictor however, we make use of the equivalence between the maximum a posteriori estimate for $\bs{\beta}$, $\hat{\bs{\beta}}$, and the penalised maximum likelihood estimate (MLE), and predict the response $Y_{new}$ for new samples $X_{new}$ in a frequentist manner. That is, we predict new samples by $\hat{Y}_{new}=g^{-1}(\bs{X}_{new}\hat{\bs{\beta}})$.

The prior is similar to the prior used in the method \texttt{GRridge} proposed in \citep{wiel2016better}, but has additional grouping weights, such that multiple groupings (called \textit{partitions} in \citep{wiel2016better}) can be evaluated simultaneously instead of iteratively. Moreover, whereas \texttt{GRridge} tends to overfit for many co-data groups, we introduce an extra level of shrinkage on the prior parameter level to counter this. This extra level has a substantial practical impact as it opens up the possibility of using the wealth of existing shrinkage literature to handle various types of co-data to improve predictions, as explained below in Section \ref{par:extrashrinkage}.

\subsection{Estimation}\label{par:estimation}
The unknown model parameters are the regression coefficients $\bs{\beta}$ and the prior parameters, also called \textit{hyperparameters}, $\left\{\tau^2_{global},\bs{\gamma}^{(1)},..,\bs{\gamma}^{(D)},w^{(1)},..,w^{(D)}\right\}$, where the local variances $\tau^2_{k,local}$ are omitted as those relate directly to $\bs{\gamma}^{(d)}$ and $\bf{w}$ via Equation \eqref{eq:model}. We use an empirical Bayes approach \citep{van2019learning}: estimate the hyperparameters and plug those in the prior to find the penalised maximum likelihood estimate for $\bs{\beta}$:
\begin{align}\label{eq:estbeta}
\begin{split}
    \hat{\bs{\beta}} &=\argmax{\beta} \left\{\log\pi(\bs{Y}|X,\bs{\beta}) - \frac{1}{\hat{\tau}^2_{global}}\sum_{k=1}^p \frac{1}{\hat{\tau}^2_{k,local}}\beta_k^2\right\}.
\end{split}
\end{align}
Note that this is just ordinary ridge regression with a weighted penalty, which can easily be solved with existing software, e.g. with the \texttt{R}-package \texttt{glmnet}.
Hence, the main task is to estimate the hyperparameters.
We do so in a hierarchical fashion in three steps, illustrated in Figure \ref{fig:schemeEstPrior}. These steps can be summarised as follows, details given below:
\begin{enumerate}[noitemsep,nolistsep]
    \item Overall level of regularisation $\hat{\tau}^2_{global}$: for linear regression, we maximise the marginal likelihood directly as it is analytical, setting all local variances to $1$. For other types of regression (for now, logistic and Cox), we use the canonical approach of cross-validation, which can be computed efficiently \citep{hastie2004efficient}.
    
    \item Group weights for each grouping, $\bs{\gamma}^{(d)},\ d=1,..,D$, given $\hat{\tau}^2_{global}$: we use penalised moment-based estimates based on an initial, ordinary ridge estimate $\tilde{\bs{\beta}}$ using the ridge penalty related to $\hat{\tau}^2_{global}$. The regularisation of the moment-based estimating equations accounts for structure in the groups and overfitting when the number of groups approaches or exceeds the number of samples. Various penalty functions can be used for various types of co-data. The penalty functions are parameterised by hyperpenalties $\lambda^{(d)}$, which are estimated in a data-driven way using splits of the groups. 
    
    \item Grouping weights $\bs{w}=(w^{(1)},..,w^{(D)})^T$, given $\hat{\tau}^2_{global}$ and $\hat{\bs{\gamma}}^{(1)},..,\hat{\bs{\gamma}}^{(D)}$: 
    we use moment-based estimates for the grouping weights. 
\end{enumerate}

\subsubsection{Group weights for each grouping, \texorpdfstring{$\bs{\gamma}^{(d)},\ d=1,..,D$}{gamma}}\label{par:extrashrinkage}
We use the empirical Bayes method of moments (MoM) to estimate the group weights for each grouping separately \citep{van2019learning}. 
\texttt{GRridge} \citep{wiel2016better} implements the moment-based estimates for the prior variance for linear and logistic regression. Here, to present a coherent framework we first repeat the main steps. After, we explain the new, extra level of shrinkage, used to obtain stable local variance estimates. Below, we sometimes refer to the extra level of shrinkage as \textit{hypershrinkage}, to clearly distinguish shrinking regression coefficients on the covariate level from shrinking hyperparameters on the group level. We provide details for the MoM estimating equations for linear, logistic and Cox regression in Section \ref{ap:mom} in the Supplementary Material. As a last note, throughout this paper we assume a zero prior mean, as given in Equation \eqref{eq:model}. The MoM can easily be extended to include estimates for a prior mean $\mu_k$, $k=1,..,p$, in case $\beta_k$ should be shrunk to a non-zero target $\mu_k$. Details are given in Section \ref{ap:mom} in the Supplementary Material. 

Let the estimate $\hat{\tau}^2_{global}$ be given, estimated as explained above. The ordinary ridge MLE corresponding to this level of regularisation, $\tilde{\bs{\beta}}(\bs{Y},\hat{\tau}^2_{global})$, is a function of the data $Y$. 
Consider one grouping $\mc{G}^{(d)}$, $d\in\{1,..,D\}$. The MoM equates empirical moments to theoretical moments over all covariates $\beta_k$ in one group $\mc{G}^{(d)}_g\in\mc{G}^{(d)}$, where the theoretical moments are taken with respect to the marginal likelihood $\pi(\bs{Y}|\bs{\gamma}^{(d)},\hat{\tau}^2_{global})$. Setting up the moment equation for all $G^{(d)}$ groups in the grouping $\mc{G}^{(d)}$, we obtain the following equations:
\begin{align}
    \forall g=1,..,G^{(d)}:&\ \frac{1}{|\mathcal{G}^{(d)}_g|}\sum_{k\in\mathcal{G}^{(d)}_g} \tilde{\beta}_k^2 = \frac{1}{|\mathcal{G}^{(d)}_g|}\sum_{k\in\mathcal{G}^{(d)}_g} E_{\bs{Y}|\bs{\gamma}^{(d)},\hat{\tau}^2_{global}}\left[\tilde{\beta}^2_k(\bs{Y},\hat{\tau}^2_{global})\right] \label{eq:MoM1}\\
    &\qquad=\frac{1}{|\mathcal{G}^{(d)}_g|}\sum_{k\in\mathcal{G}^{(d)}_g} E_{\bs{\beta}|\bs{\gamma}^{(d)},\hat{\tau}^2_{global}}\left[E_{\bs{Y}|\bs{\beta}}\left[\tilde{\beta}^2_k(\bs{Y},\hat{\tau}^2_{global})|\bs{\beta}\right]\right]\label{eq:MoM2}\\
    &\qquad=\frac{1}{|\mathcal{G}^{(d)}_g|}\sum_{k\in\mathcal{G}^{(d)}_g} h\left(\bs{\gamma}^{(d)}\right) \label{eq:MoM3},
\end{align}
with $h(\cdot)$ a function of the unknown parameters $\bs{\gamma}^{(d)}$.

The theoretical moments on the right-side of the equation above are analytic for linear regression and are approximated by using a second order Taylor approximation for the inner expectation in Equation \eqref{eq:MoM2} for logistic (see \citep{le1992ridge}) and Cox regression, after which the outer expectation is analytic.
The function $h$ (or its approximation) in Equation \eqref{eq:MoM3} is linear in $\bs{\gamma}^{(d)}$, i.e. solving the moment estimating equations boils down to solving a linear system of $G^{(d)}$ equations and $G^{(d)}$ unknowns $\bs{\gamma}^{(d)}$:
\begin{align}\label{eq:linearsystem}
    A^{(d)}\bs{\gamma}^{(d)} &= \bs{b}^{(d)},
\end{align}
with $A^{(d)}\in\mathbb{R}^{G^{(d)}\times G^{(d)}}$ and $\bs{b}^{(d)}\in\mathbb{R}^{G^{(d)}}$ depending on the data $X$ and initial estimate $\tilde{\bs{\beta}}(\bs{Y},\hat{\tau}^2_{global})$.
Details are given in Section \ref{ap:mom} in the Supplementary Material.

In case of few, non-overlapping groups of equal size, it suffices to solve the linear system directly, truncating negative group weight estimates, potentially resulting from approximation or numerical errors, to $0$.
However, often we have {\em many} groups, potentially unequal in size, or structured in another, potentially hierarchical, way, which demands penalisation of the system to prevent overfitting, as demonstrated in Section \ref{par:simulation}. Hence we propose to replace the solution of Equation \eqref{eq:linearsystem}, which can be cast as a least squares minimisation, by $\hat{\bs{\gamma}}^{(d)}$: 
\begin{align}\label{eq:tauest}
    \hat{\bs{\gamma}}^{(d)}&= (\tilde{\bs{\gamma}}^{(d)})_+,\ \tilde{\bs{\gamma}}^{(d)}=\argmin{\bs{\gamma}^{(d)}} ||A^{(d)} \bs{\gamma}^{(d)} -\bs{b}^{(d)}||^2_2 + f^{(d)}_{pen}\left(\bs{\gamma}^{(d)};\hat{\lambda}^{(d)}\right),
\end{align}
where $(\cdot)_+=\max(0,\cdot)$ denotes the element-wise truncation of the elements of a vector at $0$, and where $\hat{\lambda}^{(d)}$, the estimate for the hyperpenalty parameter $\lambda^{(d)}$, is obtained as explained below. 
Note that solving Equation \eqref{eq:tauest} corresponds to solving a penalised linear regression with penalty function $f^{(d)}_{pen}$. So for most well-known penalties, such as ridge and lasso, software exists to obtain estimates for $\tilde{\bs{\gamma}}^{(d)}$.
Otherwise, a general purpose gradient-based solver may be used, which will usually suffice because $\bs{\gamma}^{(d)}$ is generally not a very large dimensional vector.
% Most of these software packages however do not allow for inclusion of additional constraints. Therefore we use a gradient-based solver, the \texttt{R}-package \texttt{Rsolnp}, to solve the constrained optimisation problem.
% Computationally, this general purpose software usually suffices, because $\bs{\gamma}^{(d)}$ is generally not a very large dimensional vector.

The modular approach of decoupling group shrinkage from direct covariate shrinkage not only relieves the computational burden for $p\rightarrow\infty$, but also accommodates generalising to any other group shrinkage scheme.
As a default hyperpenalty, we propose to use a weighted ridge penalty with target $1$ and weighted hyperpenalty parameter $\lambda^{(d)}$ governing the amount of group shrinkage.
The target of $1$ embodies the prior assumption that the grouping is not informative: all group weights are shrunk towards $1$. Then, the weighted ridge prior on the covariate level is shrunk to an ordinary ridge prior.
The hyperpenalty is weighted such that the local variances on the covariate level are a priori independent of the group sizes.
Details are given in Section \ref{ap:hypershrinkage} in the Supplementary Material.
A ridge penalty on the covariate level is used to improve regression coefficient estimates when there are many, possibly correlated covariates. In a similar sense, the ridge penalty on the group level improves the group parameter estimates when there are many groups or overlapping and therefore correlated groups.

Instead of truncating the group weight estimates $\tilde{\bs{\gamma}}$ at $0$, one could employ a penalty that has support on the positive real numbers only, such as the logarithm of the inverse gamma distribution, as it naturally models variance parameters.
Use of an inverse gamma penalty lead, however, to inferior results in our applications. 
An intuitive explanation for this is, while $\bs{\gamma}^{(d)}$ models variance parameters on the covariate level, it does not enter the least squares error criterion in a similar fashion on the group level. 

Next, we explain how we use splits in the groups to determine the hyperpenalty parameter estimates $\hat{\lambda}^{(d)}$, required for Equation \eqref{eq:tauest}.

\subsubsection{Hyperpenalties \texorpdfstring{$\lambda^{(d)},\ d=1,..,D$}{lambda}}\label{par:esthyper}
We would like to find an estimate $\hat{\lambda}^{(d)}$ such that the penalised moment-estimates $\tilde{\bs{\gamma}}^{(d)}(\hat{\lambda}^{(d)})$ are stable and follow any constraints imposed by known group structure.
Instead of using a computationally intensive approach of cross-validation (CV) on the \textit{samples}, we use random splits of the \textit{covariate groups}. This approach relates to ideas from \textit{dropout}, a technique used in deep learning where nodes, which represent functions of (groups of) covariates, are randomly dropped by some probability in each gradient descent step in the training phase to learn robust estimates of the functions \citep{gal2016dropout}, and to techniques used in \citep{wu2018optimal}, in which moment equations are perturbed to retrieve estimates invariant for those perturbations. 

The approach is as follows: split each group $\mathcal{G}_g^{(d)}$ randomly in two parts, $\mathcal{G}^{(d)}_{g,in}$ and $\mathcal{G}^{(d)}_{g,out}$. 
Use only all \textit{in}-parts in the MoM-equations in Equation \eqref{eq:MoM1} to compute a linear system as in Equation \eqref{eq:linearsystem}, with matrix $A^{(d)}_{in}$ and vector $\bs{b}^{(d)}_{in}$ depending on which covariates belong to the \textit{in}-part. 
Similarly, one retrieves a linear system for only \textit{out}-parts with corresponding matrix and vector denoted by $A^{(d)}_{out}$ and $\bs{b}^{(d)}_{out}$.
Any stable estimate $\tilde{\bs{\gamma}}^{(d)}(\lambda^{(d)})$ that adheres to the imposed group structure should fit both the linear systems corresponding to the \textit{in}-part and the \textit{out}-part well, as both parts belong to the same groups.
Therefore we use the estimate $\hat{\lambda}^{(d)}$ for which the penalised estimate $\tilde{\bs{\gamma}}^{(d)}_{in}(\lambda^{(d)})$ of the \textit{in}-part best fits the linear system of the \textit{out}-part, averaged over multiple random splits $\mc{S}$, i.e. the estimate $\hat{\lambda}^{(d)}$ minimises the following mean residual sum of squares (RSS):
\begin{align}\label{eq:lambda}
    \hat{\lambda}^{(d)} &= \argmin{\lambda^{(d)}}RSS_{\bs{\gamma}^{(d)}}(\lambda^{(d)})
    := \argmin{\lambda^{(d)}}\frac{1}{|\mathcal{S}|}\underset{\mathcal{S}}{\sum} ||A_{out}^{(d)}\tilde{\bs{\gamma}}^{(d)}_{in}(\lambda^{(d)})-\bs{b}_{out}^{(d)}||^2_2.
\end{align}
Using cross-validation on the samples would require solving the regression for $\tilde{\bs{\beta}}\in\mathbb{R}^p$ from Equation \eqref{eq:estbeta}, setting up the linear system from Equation \eqref{eq:linearsystem} and solving the penalised regression for $\bs{\gamma}^{(d)}\in\mathbb{R}^{G^{(d)}}$ from Equation \eqref{eq:tauest}, for each fold.
Using splits of the groups only requires the latter two, now not for each fold but for each split.
The computational cost associated with splitting groups is therefore far lower than that associated with cross-validating samples, as $p$ is generally of a much larger order of magnitude than $G^{(d)}$.
%the high, $p$-dimensional regression is much more expensive than the lower, $G^{(d)}$-dimensional regression.

\subsubsection{Grouping weights \texorpdfstring{$\bs{w}=(w^{(1)},..,w^{(D})^T$}{w}}
After estimating all group weights $\hat{\bs{\gamma}}^{(d)}$, $d=1,..,D$ for each grouping separately, we combine the groupings in a linear combination with grouping weights $\bs{w}=(w^{(1)},..,w^{(D})^T$. In order to obtain the estimate $\hat{\bs{w}}$, pool all $\mc{G}_{total}$ groups of all groupings and set up the moment equations as above in Equation \eqref{eq:MoM1} to find a linear system as above in Equation \eqref{eq:linearsystem}. By plugging in the estimates $\hat{\bs{\gamma}}^{(d)}$ and rearranging the equations, we obtain a linear system of $ G_{total}$ equations and $D$ unknowns, denoted by the matrix $\tilde{A}\in\mathbb{R}^{G_{total}\times D}$ and vector $\bs{b}_{\bs{w}}\in\mathbb{R}^{G_{total}}$. Details are given in Section \ref{ap:multiplecodata} in the Supplementary Material.

The grouping weights estimate $\hat{\bs{w}}$ is the ordinary least squares estimate truncated at $0$:
\begin{align}\label{eq:linsysP1}
    \hat{\bs{w}}&=(\tilde{\bs{w}})_+,\ \tilde{\bs{w}}=\argmin{\bs{w}} ||\tilde{A}\bs{w}-\bs{b}_{\bs{w}}||^2_2.
\end{align}

Note that, since $D<G_{total}$, the least squares solution leads to stable solutions. For highly correlated groupings, the grouping weights are correlated too, possibly leading to high variance in the grouping weight estimates. One should take care in interpreting grouping weights of highly correlated groupings.

\begin{figure}[h!]
    \centering
    \includegraphics[width=\linewidth]{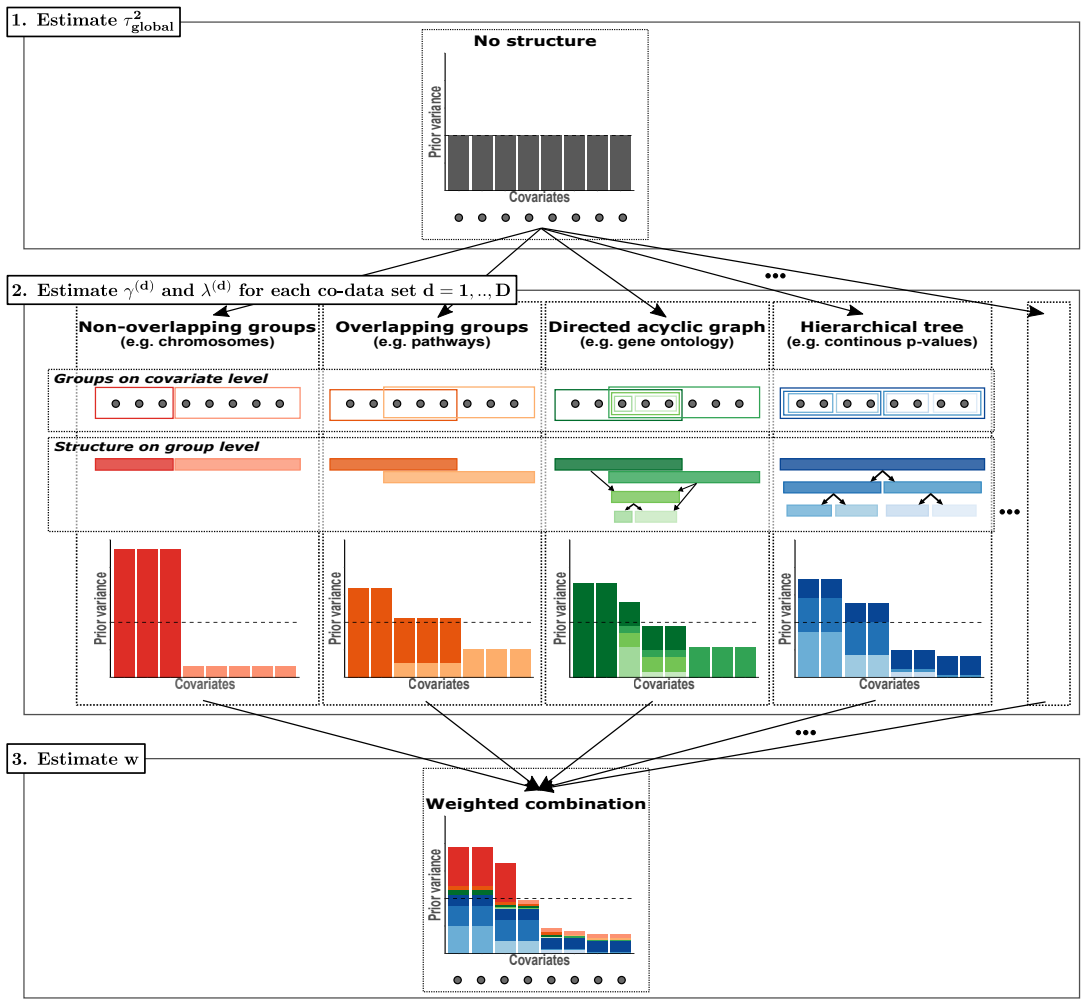}
    \caption{Schematical overview of estimating the hyperparameters. Step 1: overall level of regularisation, the global prior variance $\tau_{global}^2$, is estimated. Step 2: group weights $\bs{\gamma}^{(d)}$ and hyperpenalties $\lambda^{(d)}$, $d=1,..,D,$ are estimated for each co-data set separately using appropriate shrinkage. Step 3: grouping weights $\bs{w}$ are estimated to combine the co-data sets.
    The estimated hyperparameters are used to estimate the regression coefficients $\hat{\bs{\beta}}$ as given in Equation \eqref{eq:estbeta}.}
    \label{fig:schemeEstPrior}
\end{figure}

\subsection{Model extensions}\label{par:extensions}
We strive for a uniquely generic approach that can handle a wide variety of primary data (covariates and response) and co-data. The extensions below accommodate this aim.

\subsubsection{Continuous co-data}\label{par:continuous}
In principle, one could model a covariate specific prior variance as a (parsimonious) function of continuous co-data, like external p-values.
However, such a function is likely non-linear, and needs to be very flexible. 
We choose to approximate this function by adaptive discretisation, resulting in a piece-wise constant function. 
Adaptivity is necessary because the effect sizes are unknown, so for a continuous co-data set it might not be clear how fine a discretisation should be, if the discretisation should be evenly spaced, and if not, where on the continuous scale the discretisation should be finer. 

The approach is as follows. 
First define hierarchical groups, representing varying grid sizes:
i) define the first group as the group including all covariates, ordered according to the continuous co-data. When the co-data is not informative, using this group only would suffice. The group weight corresponding to the first group is defined to be the top \textit{node} in the hierarchical tree;
ii) recursively split each group $g$ at the median co-data value of group $g$ into two groups of half the size. The group weights corresponding to these latter two groups are defined as \textit{child nodes} from the \textit{parent node} for group weight $\gamma_g^{(d)}$ in the hierarchical tree, illustrated in Figure \ref{fig:discretisation}. We obtain a hierarchical tree where each node corresponds to a group weight.

This hierarchy is then used in a hierarchical lasso penalty (see \citep{yan2017hierarchical,jacob2009group,yang2015fast}), which is used as extra level of shrinkage in Equation \eqref{eq:tauest} to select hierarchical groups, illustrated in Figure \ref{fig:discretisation}.
The hierarchical lasso penalty can select a node only if all its parent nodes are selected.
Applied here, each selection of nodes corresponds to a selection of hierarchical groups, hence discretisation. 
For some hyperpenalty $\lambda^{(d)}$ large enough, only the top node in the hierarchy, corresponding to the group weight for the group of all covariates, is selected. For smaller values of the hyperpenalty, nodes lower in the hierarchy corresponding to large group weight estimates (i.e. small penalties) are selected first. Use the estimate for $\hat{\lambda}$ given in Equation \eqref{eq:lambda} to select group weights that correspond to a discretisation that fits the data well.

Each selected group corresponds to one moment equation in \eqref{eq:linearsystem}, enabling small groups deep in the hierarchy to have much larger weights than others. 
These moment equations are endowed with a ridge penalty as in Equation \eqref{eq:tauest} to stably estimate the final group weight estimates.

\begin{figure}
    \centering
    \begin{subfigure}[t]{0.49\linewidth}
    \centering
    \includegraphics[width=0.8\linewidth]{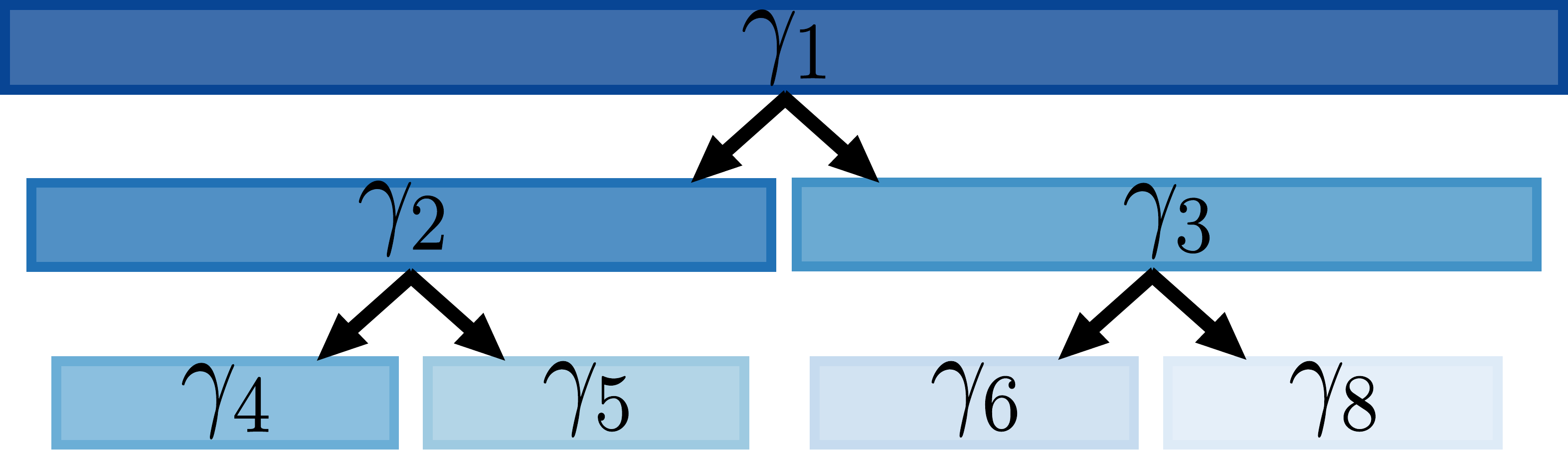}
    \end{subfigure}
    \begin{subfigure}[t]{0.49\linewidth}
    \centering
    \includegraphics[width=0.8\linewidth]{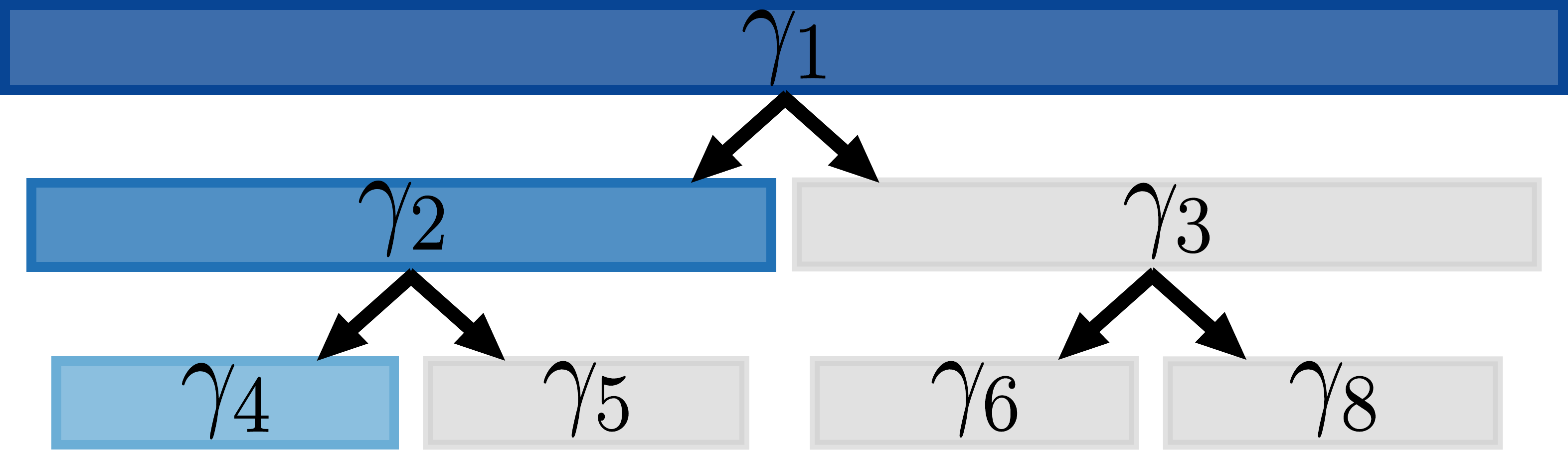}
    \end{subfigure}
    \caption{Left: discretise continuous scale in increasingly smaller groups by splitting at the median of the continuous co-data in that group. Right: use hierarchical lasso \citep{yan2017hierarchical,jacob2009group,yang2015fast} to potentially select group weights only if all its parents in the hierarchy (e.g. $\gamma_1$ is the parent of $\gamma_2$ and $\gamma_3$) are selected. Grey groups are not selected.}
    \label{fig:discretisation}
\end{figure}

\subsubsection{Group selection}
Group lasso and hierarchical lasso are popular methods to select groups of covariates on the covariate level \citep{yan2017hierarchical,jacob2009group,yang2015fast}, possibly shrinking covariates according to some given hierarchy. An alternative for obtaining a group sparse model is to use the proposed method in combination with a (hierarchical) sparse penalty on the group level; by setting group weights to $0$, all covariates in that group are set to $0$. When the number of covariates is much larger than the number of groups, it can be beneficial in terms of computational cost to use the (hierarchical) sparse penalisation on the group level. A similar two-step approach as for continuous co-data described in Section \ref{par:continuous} can be used: a lasso penalty or hierarchical lasso penalty is used to select groups on the group level, whereafter a ridge penalty is used to estimate the group weights of the selected groups.

\subsubsection{Covariate selection for prediction}\label{par:postselection}
In the applications that we consider, covariates may be (highly) correlated and the outcome might be predicted correctly only by a large group of interacting and correlated covariates. For example in genetics, gene expression is often correlated as many genes interact via complex networks of pathways. Moreover, predicting complex diseases might not always be as easy as finding few genes with large effects, as complex diseases could be the result of many small effects \citep{boyle2017expanded}. 
Penalties leading to dense predictors or group sparse predictors are well-known to handle correlated variables better than penalties leading to sparse predictors. 

In practice, however, it might be desirable to find a well-performing parsimonious predictor, e.g. due to budget constraints for practical implementation of the predictor. Various approaches have been proposed for sparsifying predictors.

First, \cite{bondell2012consistent} propose to perform variable selection based on penalised credible regions, searching for the sparsest predictor inside a penalised credible region. Their approach using marginal penalised credible regions is suitable for high-dimensional data, as they show that this approach can give consistent selection.
Second, a similar post-hoc selection strategy using an additional L1 penalty as performed in \texttt{GRridge} \citep{novianti2017better} was shown to perform well in terms of prediction for a number of cancer genomics applications \citep{novianti2017better}.
Third, decoupling shrinkage and selection \citep{hahn2015decoupling} approximates the linear predictor by a sparsified version using adaptive lasso. 

For completeness, we provide technical details of these approaches in Section \ref{ap:posthoc} in the Supplementary Material, and have included these three options in the \texttt{ecpc} software.

After selecting covariates, the regression coefficients are re-estimated using the weighted ridge prior to obtain the final predictor. Whether or not it is better to recalibrate the overall level of regularisation $\tau^2_{global}$ depends on the, unknown, underlying sparsity. If the best possible model is dense, the weighted ridge prior found in the first step should be used to prevent overestimation of the regression coefficients. If the best possible model is in fact sparse, it would be better to recalibrate $\tau^2_{global}$ and set group weights to $1$ to undo overshrinkage due to noise variables. We include both approaches as an option, which may be compared by considering predictive performance.

\subsubsection{Unpenalised covariates}
Sometimes one wishes to include unpenalised covariates, for example clinical covariates like tumour size or age of a patient. It can be shown that, conveniently, the moment estimates for penalised groups are independent of the group parameters for the group of unpenalised covariates. Details are given in Section \ref{ap:unpenalised} in the Supplementary Material. Then, in the model given in Equation \eqref{eq:model}, the Gaussian prior is only imposed on those covariates which are to be penalised.

\section{Simulation study}\label{par:simulation}
We use two applications to cancer genomics in Section \ref{par:application} to illustrate the method, termed \texttt{ecpc}: Empirical bayes Co-data learnt Prediction and Covariate selection, and to compare \texttt{ecpc} to other methods.
The purpose of the simulations is to show the benefit of using an extra level of shrinkage on the group weights. 
We demonstrate that when the co-data is not informative, the group weights and therefore local variances are shrunk to $1$, retrieving prediction errors similar to ordinary ridge. When the co-data is informative, the group weight estimates are shrunk little, improving the predictions compared to ordinary ridge. 

We consider linear regression for some fixed vector of regression coefficients $\beta^0$. We simulate $100$ pairs of training and test sets with the number of samples $n=100$ and the number of covariates $p=300$. We simulate for each pair of training and test sets, for variance parameters $\sigma^2=1,\tau^2=0.1$:
\begin{align}
\begin{split}
    &\bs{\beta}^0\sim N\left(0,\tau^2I_{p\times p}\right),\ \left[X_{train}\right]_{ij},\left[X_{test}\right]_{ij}\overset{i.i.d.}{\sim}N(0,1),\  i=1,..,n,\ j=1,..,p,\\
    &\bs{Y}_{train}\sim N\left(X_{train}\bs{\beta}^0,\sigma^2I_{n\times n}\right),\ \bs{Y}_{test}\sim N\left(X_{test}\bs{\beta}^0,\sigma^2I_{n\times n}\right).
\end{split}
\end{align}
Consider the following non-informative and informative co-data:
\begin{enumerate}[noitemsep,nolistsep]
\item \texttt{Random}: randomly assign the $300$ covariates to $G$ approximately equally sized groups, with $G$ in the range of $1-30$.
\item \texttt{Informative}: assign the covariates to $G$ approximately equally sized groups based on the ranking of the size of each regression coefficient, $|\beta^0_k|$, $k=1,..,p$. So there exists an ordering of the groups such that for each pair of two groups $\mc{G}_i,\mc{G}_j$, $1 \leq i<j\leq G$, and for all $k\in\mc{G}_i$, $l\in\mc{G}_j$: $|\beta^0_k|<|\beta^0_l|$.
\end{enumerate}

We use the default ridge penalty as hypershrinkage for the group weights with $1$ as target, such that the global-local prior variances $\tau^2_{global}\tau^2_{k,local}$ are shrunk to the global prior variance, corresponding to an ordinary ridge prior on the covariate level. We train the following models on the training data for both types of co-data and an increasing number of groups $G$: 1) \texttt{ecpc} with hypershrinkage; 2) \texttt{ecpc} without hypershrinkage, i.e. optimise the objective in Equation \eqref{eq:tauest} without any added penalty function; 3) \texttt{GRridge} \citep{wiel2016better}, which uses a regularisation on the group level based on permutations of the covariates' group indices, and 4) \texttt{ordinary ridge}, a ridge model that uses one overall penalty irrespective of the co-data groups. 

Figure \ref{fig:MSEgroups} shows the mean squared error (MSE) of the predictions on the test data as performance measure.
When the co-data is non-informative, \texttt{ecpc} with hypershrinkage performs similarly to \texttt{ordinary ridge} as the group weights of the random groups are shrunk towards $1$. Besides, \texttt{ecpc} with hypershrinkage outperforms both \texttt{ecpc} without hypershrinkage, as it is not able to shrink the group weights, and \texttt{GRridge}, which uses the more ad-hoc type of regularisation described above.
When the co-data is informative, \texttt{ecpc} with hypershrinkage shrinks little, performing similarly to \texttt{ecpc} without hypershrinkage, and outperforming \texttt{GRridge}. Moreover, all three methods outperform \texttt{ordinary ridge} as they benefit from the co-data. 

\begin{figure}
    \centering
    \includegraphics[width=\textwidth]{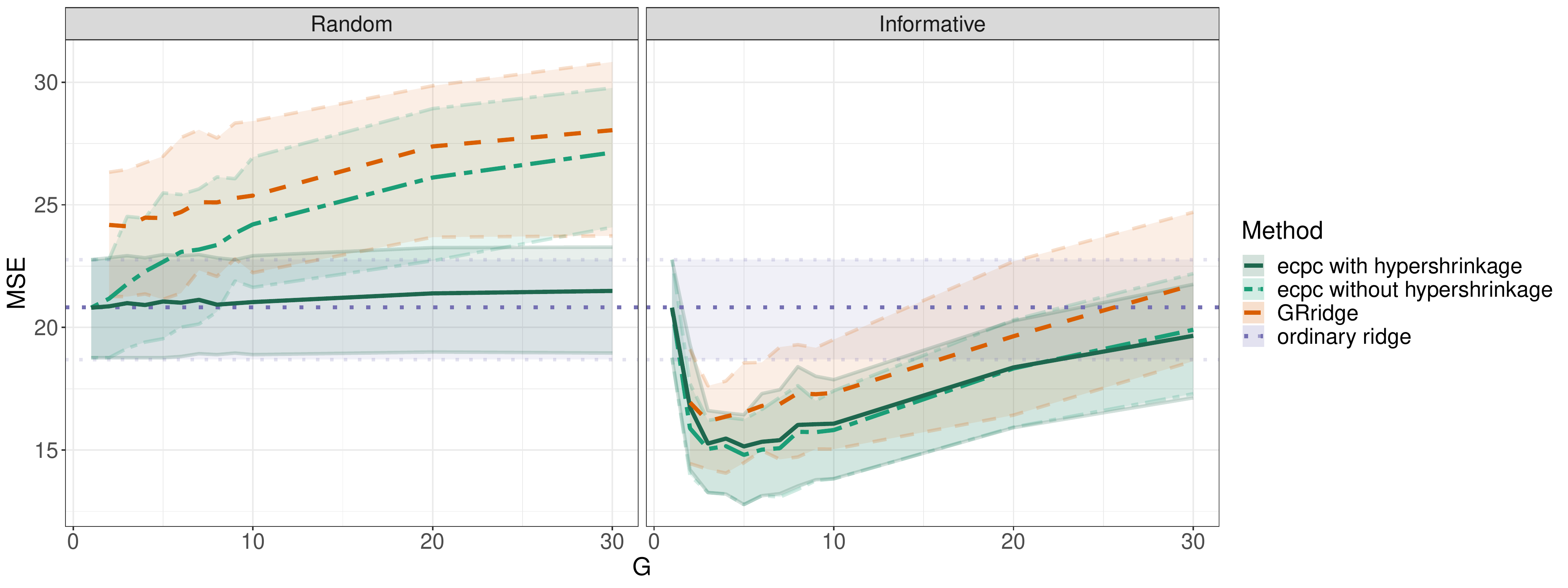}
    \caption{MSE of the predictions on the test sets for various number of groups, based on 100 training and test sets, for various methods and for random co-data (left) or informative co-data (right). The lines indicate the mean MSE and the shaded bands indicate the $25\%,75\%$ quantiles. }%Left: when groups are random/non-informative, \texttt{ecpc} with hypershrinkage performs similarly to \texttt{ordinary ridge}, and outperforms \texttt{GRridge} and \texttt{ecpc} without hypershrinkage. Right: when groups are informative, \texttt{ecpc} with and without hypershrinkage perform similar, both outperforming \texttt{ordinary ridge} as they benefit from the co-data.}
    \label{fig:MSEgroups}
\end{figure}

\section{Application}\label{par:application}
We apply the \texttt{ecpc} method to two data applications in cancer genomics. In the first application the goal is to predict therapy response in colorectal cancer based on microRNA expression. In the second, the goal is to predict cervical cancer stage based on methylation data. Below we present the main results from the first application to illustrate \texttt{ecpc} and to compare performance and covariate selection with other widely used methods, and give a summary of the results of the second application. 
Note that we primarily focus on comparison with methods that allow to handle \textit{multiple} co-data sources, including continuous ones, as this is what we have available for the applications.
Results for baseline predictors are also given as a benchmark to assess the added value of the co-data.
We refer the reader to Section \ref{ap:applications} in the Supplementary Material for additional figures and results for both applications.

\subsection{Predicting therapy response in colorectal cancer}\label{par:appmiRNA}
We apply \texttt{ecpc} on microRNA (miRNA) expression data from a study on colorectal cancer, extensively described in \citep{neerincx2018combination,munch2018adaptive}. The data contain $p=2114$ measured miRNA expression levels for $n=88$ independent individuals, for whom we would like to predict whether a specific therapy described in \citep{neerincx2018combination} will be beneficial (coded 1) or not (coded 0).
In a previous study, \cite{neerincx2015mir} collected tissue from primary and metatastatic tumours plus adjacent normal tissue from a \textit{different} set of non-overlapping samples. 
The miRNA expression levels were measured and compared in a pairwise fashion, comparing metastatic or primary tumor to adjacent normal, to obtain false discovery rates (FDRs).
miRNAs that are expressed differentially in the tumour tissue compared to the adjacent normal tissue are potentially relatively important for predicting the therapy response.
The FDRs have been shown to be indeed informative for the prediction in \citep{munch2018adaptive} where \texttt{gren}, a group-regularised logistic elastic net regression is used. 
Unlike \texttt{ecpc}, \texttt{gren} requires \textit{non-adaptive} partitioning of the FDRs, using fairly arbitrary thresholds. In addition, it combined the two FDRs to limit the number of groups, as it does not allow for hyperparameter shrinkage as in \texttt{ecpc}.
So, we use partly the same co-data as \texttt{gren}, but add others as this can easily be handled by \texttt{ecpc}.

We use five co-data sets, two based on statistics derived from the data and three derived from the FDRs based on the external study: 1) (\texttt{abun}, 10 groups): abundance, i.e. average expression, of the miRNAs, discretised in $10$ non-overlapping, equally-sized groups; 2) (\texttt{sd}, 10 groups): standard deviation of the miRNA expression, discretised in $10$ non-overlapping, equally-sized groups. 
As we expect weights to change at most gradually with abundance and standard deviation, this non-adaptive discretisation should be sufficient to estimate the weights. Changing the number of groups to $5$ or $20$ leads to similar performance as presented below, and are included in Figure \ref{fig:weightsmiRNAabunsd} in the Supplementary Material; 3) (\texttt{TS}, 2 groups): one group with tumor specific miRNAs and one group with the rest. A miRNA is appointed to the tumor specific group if it is differentially expressed (FDR $\leq 0.05$) in the primary and/or metastatic tumor; 4) (\texttt{FDR1}, continuous): continuous FDRs from the comparison metastatic versus adjacent normal non-colorectal tissue; 5) (\texttt{FDR2}, continuous): continuous FDRs from the comparison primary versus normal colorectal tissue.
We generate a hierarchy of groups by recursively splitting the continuous co-data: each of the FDRs are first split into two groups at FDR$=0.5$. The group with FDR$<0.5$ is then recursively split into two groups as long as the minimum group size is not smaller than 20. Only the group with the lowest FDRs is split, as these groups are expected to be of more importance. The groups and hierarchy on the group level are illustrated in Figure \ref{fig:contcodatamiRNA}.

\begin{figure}
    \centering
    \includegraphics[width=\textwidth]{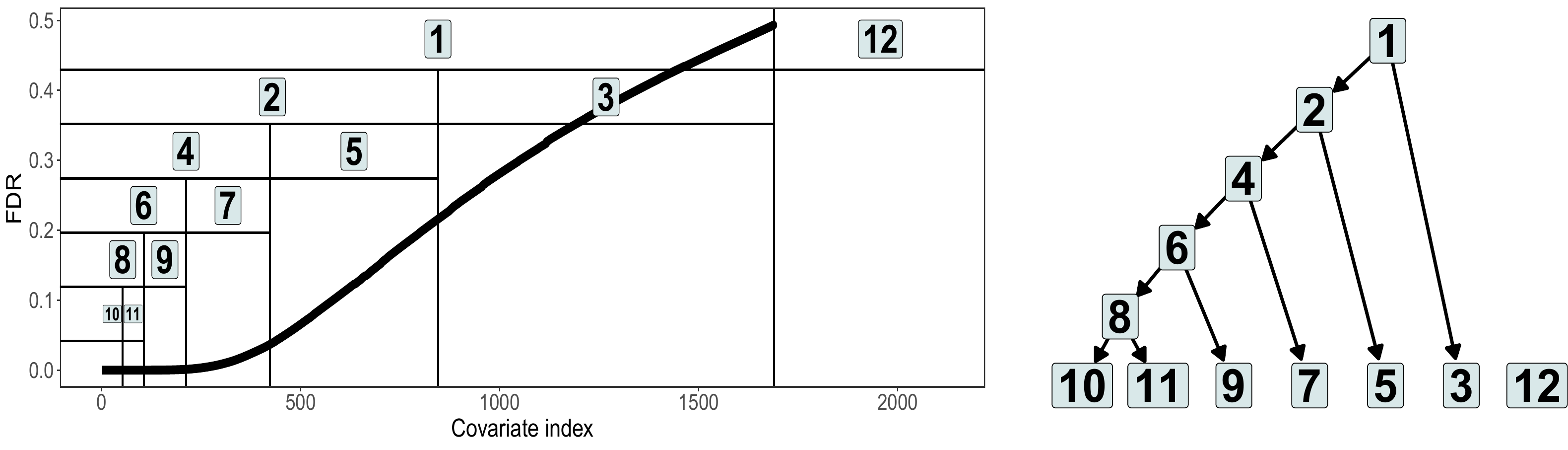}
    \caption{Illustration of the \texttt{FDR2} grouping used in the miRNA expression data. Left: covariates are first split at FDR$=0.5$ into two groups. The group with lower FDR is then recursively split at the median FDR value into two new groups. Right: the hierarchy of the groups, which is used in the extra level of shrinkage to find a discretisation that fits the data well as described in Section \ref{par:continuous}.}
    \label{fig:contcodatamiRNA}
\end{figure}

We use the default ridge penalty for the first three co-data groupings and the combination of the hierarchical lasso and ridge described in Section \ref{par:continuous} for the last two continuous co-data groupings. 
% We use the default setting of $100$ splits to estimate the optimal hyperpenalties as described in Section \ref{par:esthyper}. Using $200$ or $400$ splits rendered similar results as below.
As posterior selection strategy, we use the default strategy using an additional L1-penalty. Below we show the results for posterior selection in a dense setting, as described in Section \ref{par:postselection}. This either matched or outperformed other posterior selection strategies, included in Figure \ref{fig:AUCposthoc} in the Supplementary Material.
We perform a 10-fold cross-validation to compare performance in terms of AUC of several dense and sparse methods. Different folds rendered similar results as shown below. 

\textbf{Estimated model parameters}. Figure \ref{fig:weightsmiRNA} shows the estimated co-data grouping weights and the group weights of grouping \texttt{FDR2} across folds. 
The grouping \texttt{FDR2} obtains on average the largest grouping weight.
Groups in this grouping with lower average FDR obtain a higher prior variance weight or equivalently, lower penalty. 
This corroborates the hypothesis that miRNAs that are more likely to be differentially expressed, are on average more predictive. 
The group weights of the other groupings are shown in Figure \ref{fig:weightsmiRNArest} in the Supplementary Material. 
\texttt{FDR1} shows a similar relation of groups of lower FDR obtaining higher prior variance weight, but obtains grouping weight of $0$ in most folds. In combination with the other groupings, it is the least informative for the prediction. In particular, the comparison between metastatic and normal tissue, \texttt{FDR1}, is less informative for this prediction model than the comparison between primary tumour and normal tissue, \texttt{FDR2}.
The group with lower FDR obtains a larger prior variance in the grouping \texttt{TS} as well, but this grouping obtains low grouping weight in most folds.  
The other groupings, \texttt{abun} and \texttt{sd} contain some information for the prediction, as the group weights are not fully shrunk to $1$. 
Moreover, the grouping weights are non-zero in most folds, indicating that these groupings are informative for the prediction.
The distribution of the estimated regression coefficients is more heavy-tailed when \texttt{ecpc} is used compared to when \texttt{ordinary ridge} is used, illustrated in Figure \ref{fig:miRNAheavytails} in the Supplementary Material. This facilitates posterior selection as the difference between small-sized regression coefficients and large-sized ones is larger.

\textbf{Performance}. Figure \ref{fig:AUCmiRNA} shows the cross-validated AUC for several dense and sparse models. The dense \texttt{ecpc} outperforms \texttt{GRridge}, \texttt{ordinary ridge} and \texttt{random forest}. The sparse \texttt{ecpc} is obtained by combining \texttt{ecpc} with the default post-hoc selection using an additional L1 penalty, tuned such that a fixed number of covariates is selected in each fold. It outperforms \texttt{elastic net} and \texttt{GRridge} with the same post-hoc selection. 
Whereas \texttt{ecpc} handles various co-data simultaneously and is able to give higher weight to more informative co-data, \texttt{GRridge} iterates over all co-data sets and suffers from including redundant co-data. 
Figure \ref{fig:AUCposthoc} in the Supplementary Material shows the performance of \texttt{ecpc} combined with other post-hoc selection methods. The proposed default of an added L1-penalty outperforms the other post-hoc selection methods. 
Besides, \texttt{ecpc} is combined with a lasso penalty on the group level to obtain a group sparse model. Figure \ref{fig:AUCgroupsparsemiRNA} in the Supplementary Material shows the AUC and number of selected covariates for several group sparse models. \texttt{ecpc} selects more groups and therefore more variables than group lasso and hierarchical lasso, and outperforms the latter two in terms of cross-validated AUC. 
%This group sparse \texttt{ecpc} (AUC$=0.77$) outperforms group lasso (AUC$=0.70$) and hierarchical lasso (AUC$=0.69$).
Furthermore, the results of \texttt{gren} on this data set are presented in Figure 1b in \citep{munch2018adaptive}. These are competitive to ours, with an AUC around $0.8$, but only for \texttt{gren} using the \texttt{elastic net} parameter $\alpha=0.5$, which is not automatically chosen; other values of $\alpha$ render worse results. Besides, the number of covariates selected by \texttt{gren} is around $75$, which is much larger than the approximate $25$ covariates required by \texttt{ecpc} (Figure \ref{fig:AUCmiRNA}).
Lastly,  we compare with another recent group-adaptive method, \texttt{graper} \citep{velten2018adaptive}, which  can, however, not include overlapping (hierarchical) groups or multiple groupings. Therefore, as \texttt{ecpc} showed grouping \texttt{FDR2} to be informative (Figure \ref{fig:weightsmiRNA}), \texttt{graper} was applied to the leaf groups of this hierarchical grouping. This resulted in an AUC of $0.74$ (sparse setting) or $0.75$ (dense setting), hence somewhat lower than \texttt{ecpc} (Figure \ref{fig:AUCmiRNA}). Note that while the default sparse setting of \texttt{graper} provides inclusion probabilities for all covariates, it does not select covariates.

\textbf{Covariate selection stability}. We fit \texttt{ecpc} and \texttt{elastic net} for $\alpha=0.3$ and $\alpha=0.8$ on subsamples of size $\approx\frac{2}{3}n$, stratified for response, to assess stability of covariate selection. We use leave-one-out cross validation to estimate the global prior variance in \texttt{ecpc}, use the default post-hoc selection procedure to select 25 covariates for each subsample and count the number of overlapping miRNAs in each pairwise comparison of selected sets. For \texttt{elastic net}, we keep the value of $\alpha$ fixed and tune $\lambda$ to select 25 covariates. We repeat the analysis for a selection of $50$ covariates.
The AUC performance on the 50 test sets corresponding to the subsamples is included in Figure \ref{fig:AUCsubsamplesmiRNA} in the Supplementary Material.
Figure \ref{fig:overlapmiRNA} shows histograms of the amount of overlapping covariates between selections. \texttt{ecpc} results in a larger overlap between selections, indicating improved stability of the selection.

\begin{figure}
    \centering
    \begin{subfigure}[c]{0.49\textwidth}
    \centering
    \includegraphics[width=\linewidth]{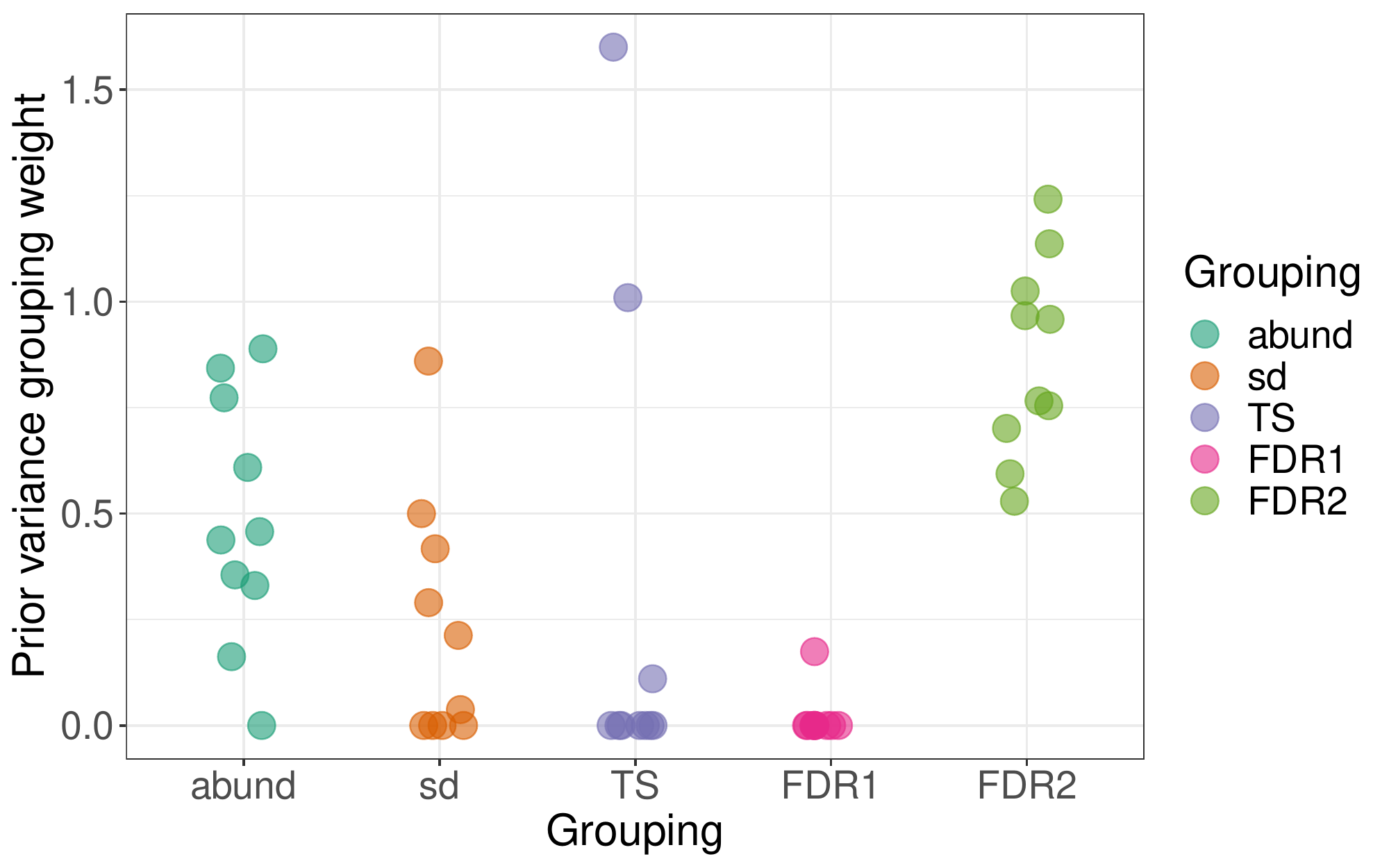}
    \end{subfigure}
    \begin{subfigure}[c]{0.49\textwidth}
    \centering
    \includegraphics[width=\linewidth]{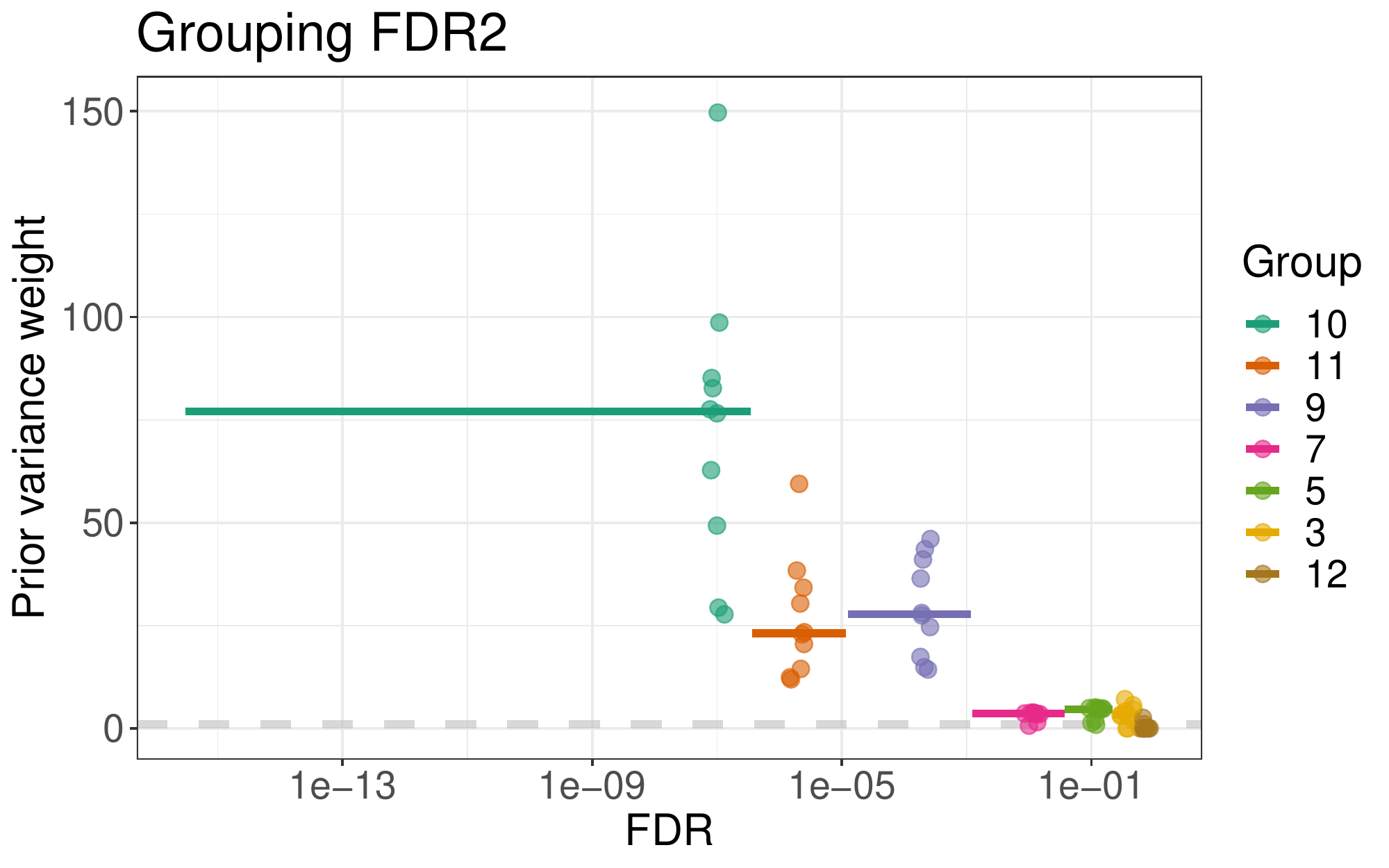}
    \end{subfigure}
    \caption{Results of 10-fold CV in miRNA data example. Left: estimated co-data grouping weights in each fold.
    Right: estimated local prior variance weight in \texttt{FDR2} for different continuous FDR values.
    The horizontal line segments indicate the median local prior variance in the leaf groups of the hierarchical tree illustrated in Figure \ref{fig:codataVerlaat}, ranging from the minimum to maximum p-value in that group.
    The points indicate the estimates in different folds, jittered along the median p-value in the leaf groups. 
    The dashed line at $1$ corresponds to ordinary ridge weights for non-informative co-data. A larger prior variance corresponds to a smaller penalty.
    }
    \label{fig:weightsmiRNA}
\end{figure}

\begin{figure}
    \centering
    \begin{subfigure}[c]{0.49\textwidth}
    \centering
    \includegraphics[width=\linewidth]{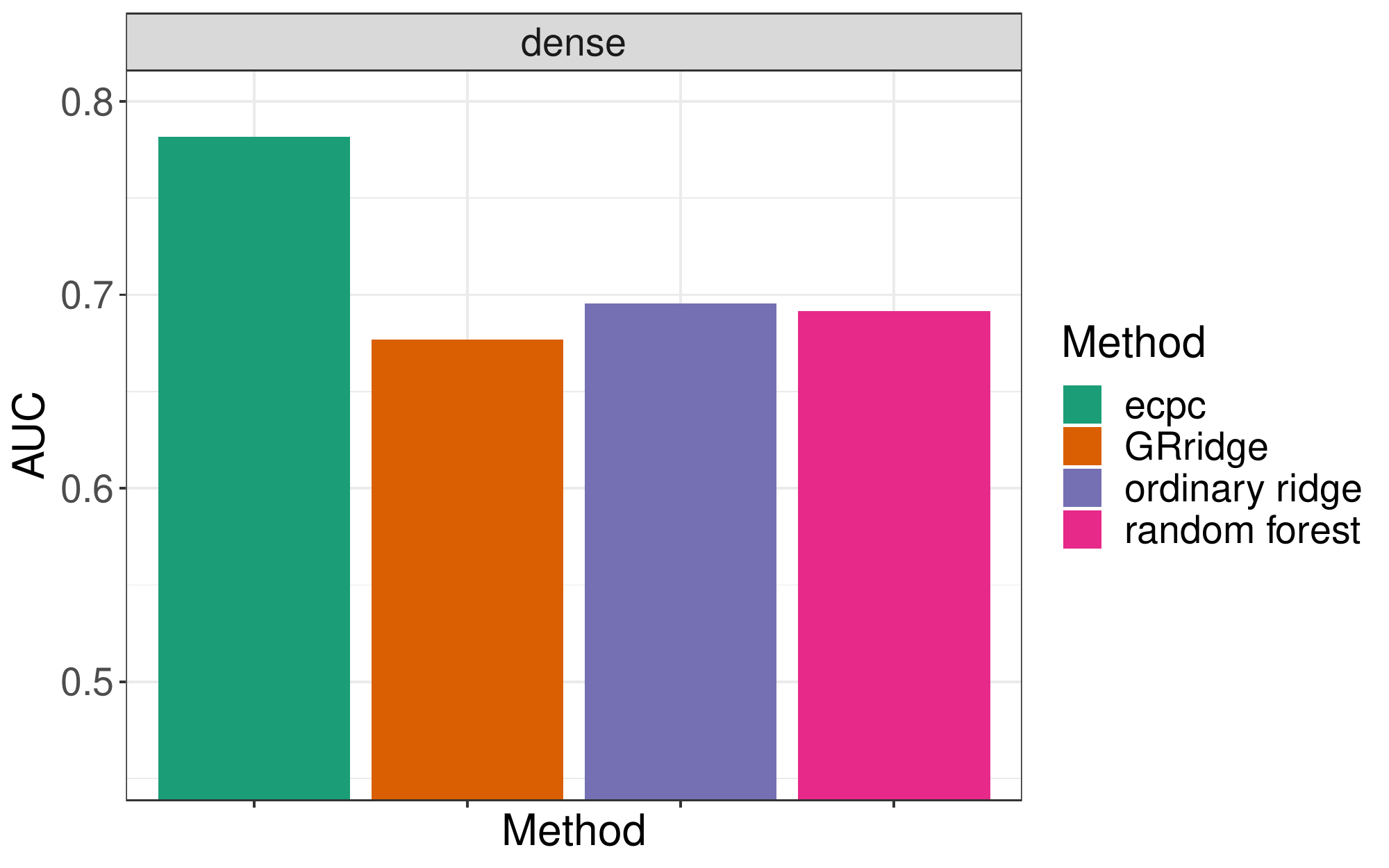}
    \end{subfigure}
    \begin{subfigure}[c]{0.49\textwidth}
    \centering
    \includegraphics[width=\linewidth]{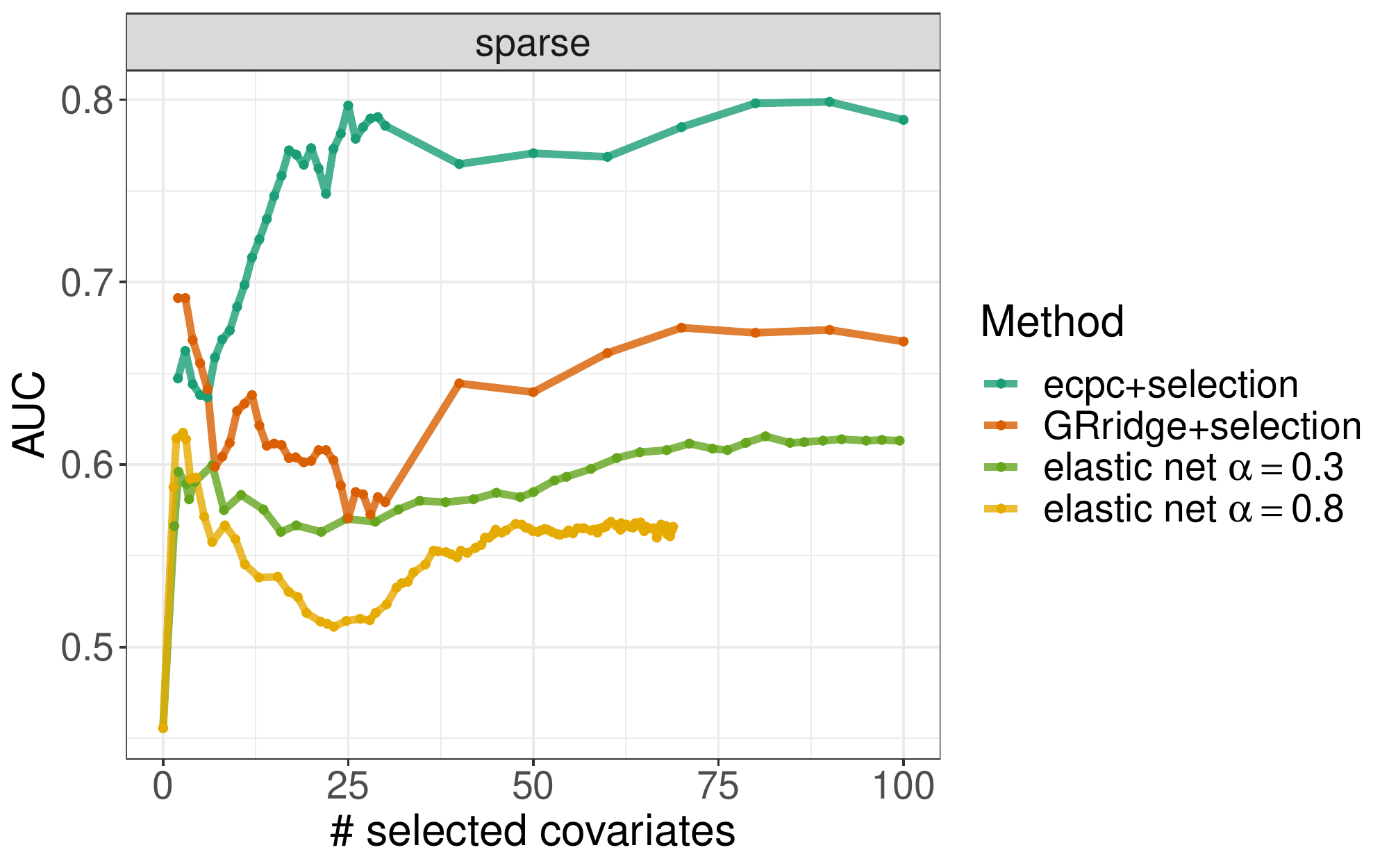}
    \end{subfigure}
    \caption{Results of 10-fold CV in miRNA data example. AUC in various dense models (left) and sparse models (right).}
    \label{fig:AUCmiRNA}
\end{figure}

\begin{figure}
    \centering
    \begin{subfigure}[c]{0.49\textwidth}
    \centering
    \includegraphics[width=\linewidth]{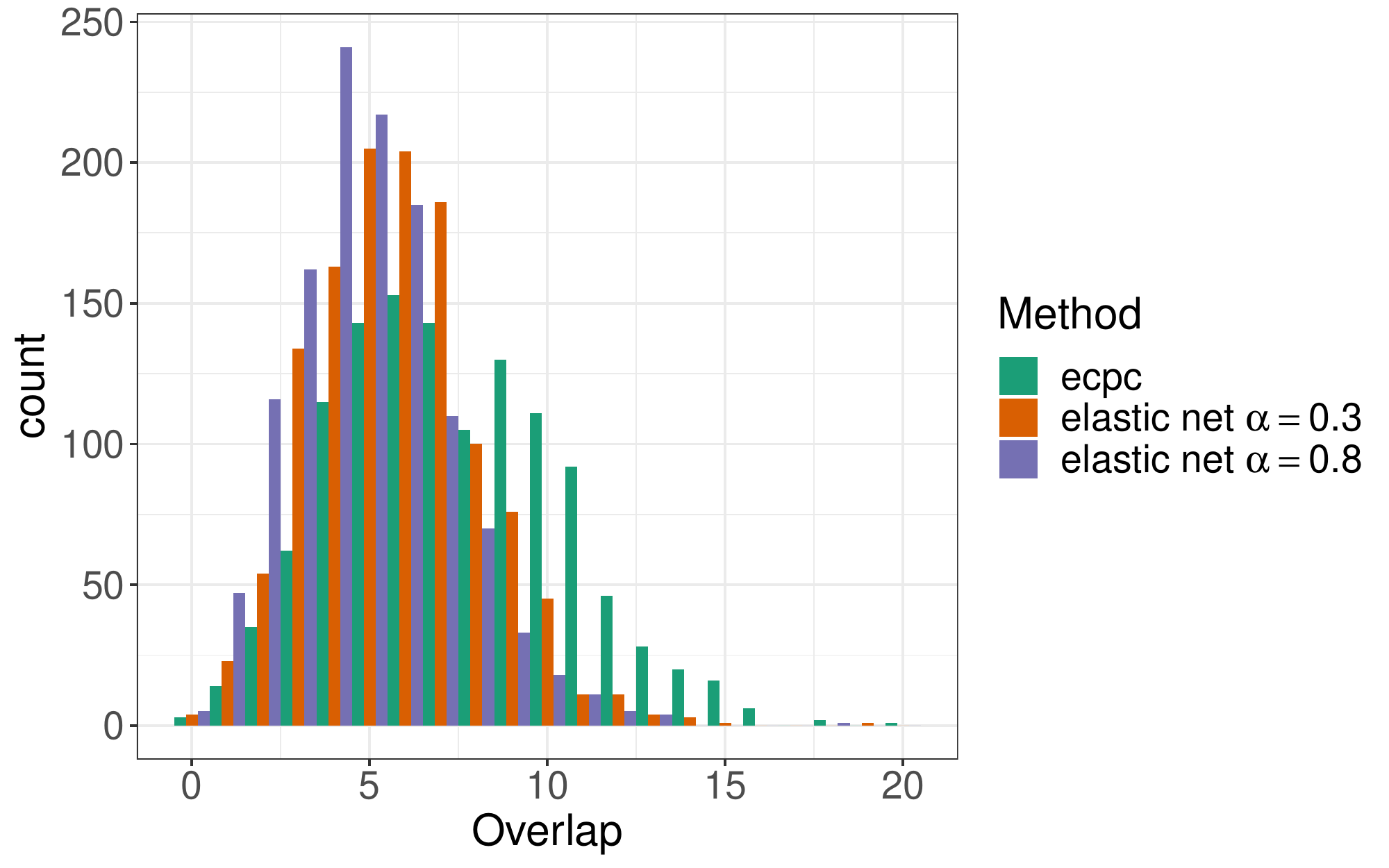}
    \end{subfigure}
    \begin{subfigure}[c]{0.49\textwidth}
    \centering
    \includegraphics[width=\linewidth]{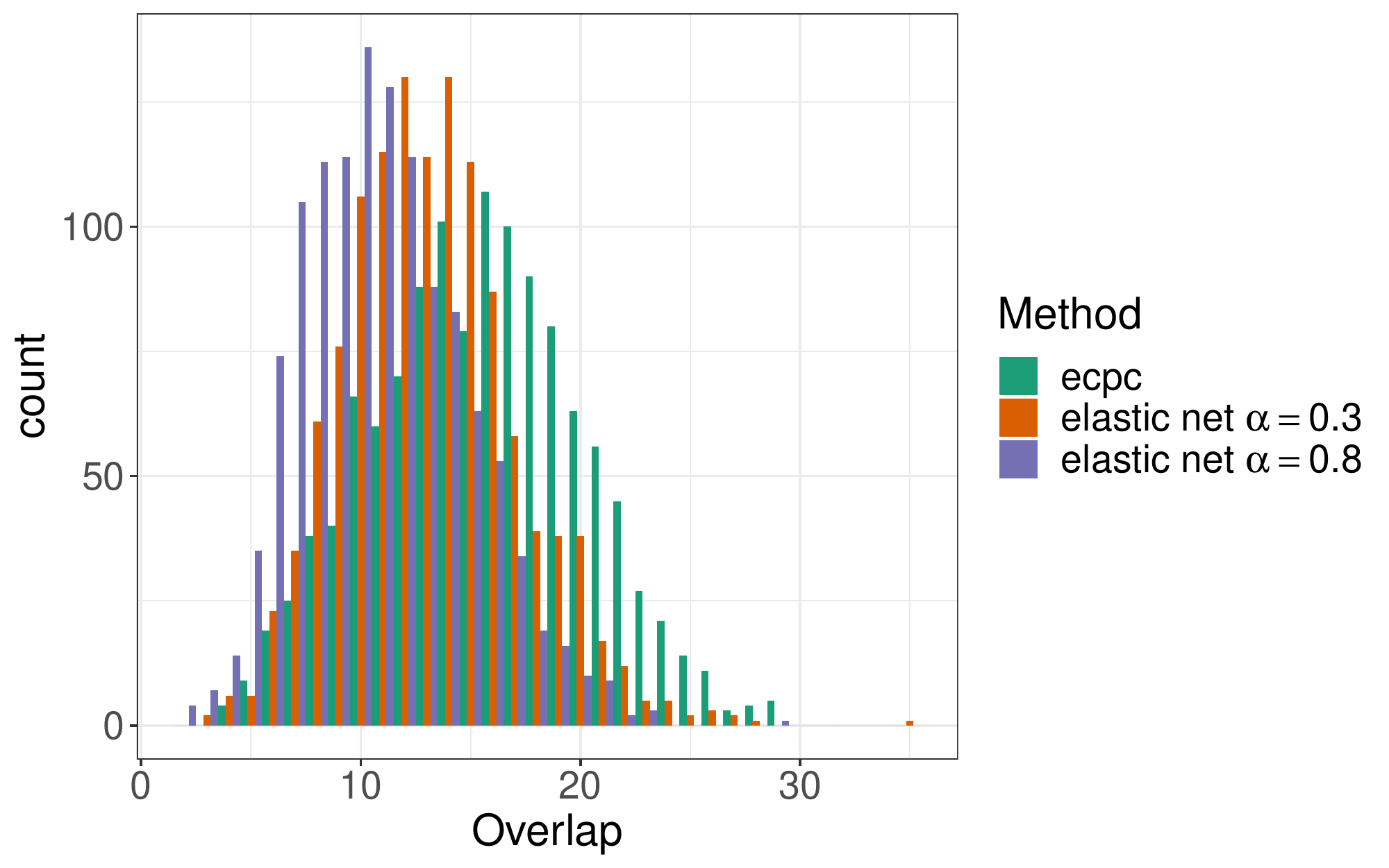}
    \end{subfigure}
    \caption{Results based on 50 stratified subsamples in miRNA data example. Histogram of number of overlapping variables in pairwise comparisons of selections of 25 covariates (left) or 50 covariates (right) in each subsample, for the methods \texttt{ecpc}, \texttt{elastic net} with $\alpha=0.3$ and $\alpha=0.8$..}
    \label{fig:overlapmiRNA}
\end{figure}

\subsection{Classifying cervical cancer stage}\label{par:appVerlaat}
We use methylation data from a study on cervical cancer extensively described in \citep{verlaat2018identification}. 
%\textbf{Biological background}
A CpG-location is a location on the DNA where a C base precedes a G base, with regions of a relatively high ratio of CpG locations called CpG-islands.
DNA methylation is a molecular mechanism that is known to play a role in cancer development. 
%\textbf{Data description}
The goal is to find a classifier that best distinguishes normal tissue from CIN3 tissue, a stage with a high risk of progressing to cervical cancer, in self-taken samples of cervical tissue of women \citep{verlaat2018identification}. 
The methylation levels are measured in $n=64$ independent individuals with normal tissue (control) or CIN3 tissue (case). After prefiltering, the data consists of methylation levels of $p=2720$ probes corresponding to unique CpG-locations in the DNA.

%\textbf{Co-data description}
We apply \texttt{ecpc} with and without post-hoc selection with the following two co-data sets, illustrated in Figure \ref{fig:codataVerlaat} in the Supplementary Material: 1) \texttt{CpG-islands}: five non-overlapping groups based on the genomic annotation of distance to the closest CpG-island. The five groups are, ordered in increasing distance: CpG-island, North Shore, South Shore, North Shelf and South Shelf. We use the default ridge shrinkage as extra level of shrinkage on the group level;
2) \texttt{p-values}: continuous p-values for each probe are obtained from an external, similar study \citep{farkas2013genome}. These data cannot be used directly for the classifier as the contamination by different cell types in these samples differs substantially from that of the primary data, the self-obtained samples. However, probes with lower p-values can be expected to be more important for the prediction than probes with high p-values. We adaptively discretise the p-values in a similar manner as the FDRs as described above.

%\subsubsection{Results}
We perform a 20-fold cross-validation to assess performance in terms of AUC for various dense, group sparse and covariate sparse methods. Different folds rendered similar results as shown below.
Again, we show the results for the default posterior selection strategy, using an additional L1-penalty. This matched or outperformed other posterior selection strategies, included in Figure \ref{fig:AUCposthocVerlaat} in the Supplementary Material.
Including standard deviations as another co-data grouping as in the first application rendered similar results in terms of performance.
Here we summarise the results. 

\textbf{Estimated model parameters}. 
Figure \ref{fig:weightsVerlaat} in the Supplementary Material shows the estimated grouping weights and group weights across the folds. The \texttt{p-value} grouping is the only grouping that is selected in all folds, indicating that this grouping is more informative for the prediction than the \texttt{CpG-islands} grouping. The Island and South shelf group obtain group weights higher than $1$ and are deemed more important for the prediction. Groups with lower average p-value obtain a higher prior variance or equivalently, lower penalty.
Similarly as in the first data application, \texttt{ecpc} facilitates posterior selection, as the distribution of the estimated regression coefficients is more heavy-tailed as compared to when \texttt{ordinary ridge} is used, illustrated in Figure \ref{fig:Verlaatheavytails} in the Supplementary Material.

\textbf{Performance}. 
Figure \ref{fig:AUCVerlaat} shows the AUC versus the number of selected parameters for several dense and covariate sparse methods. 
% First, compared to other dense models, \texttt{ecpc} (AUC$=0.67$) performs similarly as \texttt{GRridge} (AUC$=0.68$) and \texttt{ordinary ridge} (AUC$=0.67$), which outperform \texttt{random forest} (AUC$=0.60$).
First, compared to other dense models, \texttt{ecpc} performs similar to \texttt{GRridge} and \texttt{ordinary ridge}, and outperforms \texttt{random forest}.
Then, compared to other covariate sparse models, \texttt{GRridge} outperforms the other methods for models with more than five selected covariates. \texttt{ecpc} results in a peak performance of an AUC$=0.73$ at $4$ parameters, outperforming the benchmark \texttt{elastic net} with $\alpha=0.3$ and $\alpha=0.8$. 
While \texttt{ecpc} is slightly superior to \texttt{GRridge} for very sparse models, its performance initially decreases when including more covariates, and then closes up on \texttt{GRridge} again when approaching $100$ covariates. 
We conjecture that this is due to the extremer weights \texttt{ecpc} assigns to the smallest p-value group. 
Besides, \texttt{ecpc} is combined with a lasso penalty on the group level to obtain a group sparse model. As shown in Figure \ref{fig:AUCVerlaatgroupsparse} in the Supplementary Material, \texttt{ecpc} selects more groups than group lasso and hierarchical lasso. Hierarchical lasso (AUC$=0.70$) selects only the one or two groups with lowest average p-value and slightly outperforms group lasso (AUC$=0.69$) and the group sparse version of \texttt{ecpc} (AUC$=0.67$).
Lastly, we apply \texttt{graper} to the leaf groups of the hierarchical p-value grouping, found to be most important by \texttt{ecpc} (Figure \ref{fig:weightsVerlaat} in the Supplementary Material). Then, \texttt{graper} slightly outperforms \texttt{ecpc} in the dense setting, with an AUC of $0.71$ and is competitive in the sparse setting, with an AUC of $0.70$. Note however that here \texttt{graper} uses information from \texttt{ecpc} on the most informative grouping for these data, which may introduce a benefit for the former.

\textbf{Covariate selection stability}.
We perform the same analysis based on subsamples of the data as used and described in the first data application to assess covariate selection stability. 
The AUC performance on the 50 test sets corresponding to the subsamples is included in Figure \ref{fig:AUCsubsamplesVerlaat} in the Supplementary Material.
Figure \ref{fig:overlapVerlaat} in the Supplementary Material shows histograms of the number of overlapping covariates in pairwise comparisons of selections of $25$ or $50$ covariates. 
Again, \texttt{ecpc} results in a larger overlap between selections when compared to \texttt{elastic net} for $\alpha=0.3$ and $\alpha=0.8$.

\begin{figure}
    \centering
    \begin{subfigure}[c]{0.49\textwidth}
    \centering
    \includegraphics[width=\linewidth]{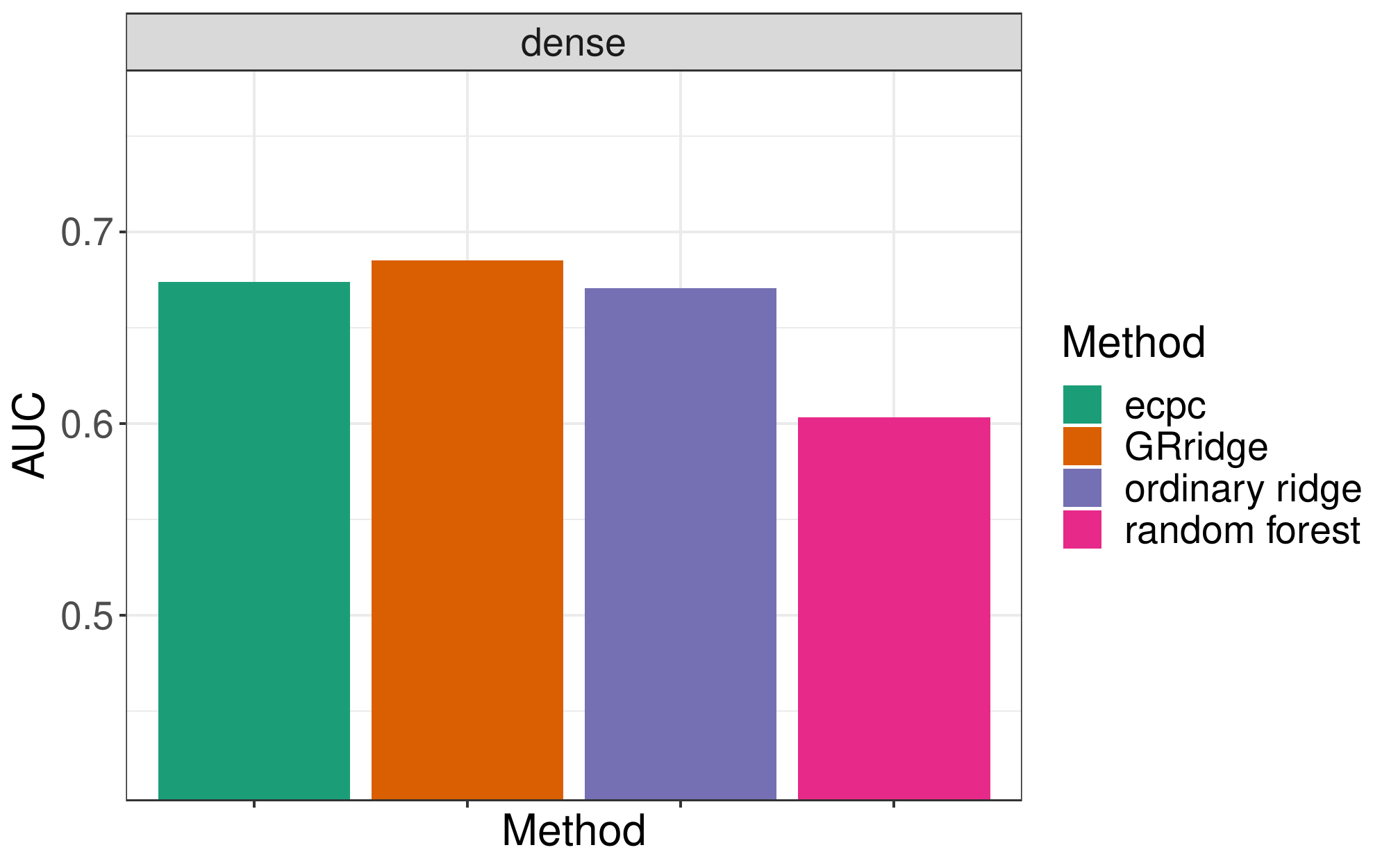}
    \end{subfigure}
    \begin{subfigure}[c]{0.49\textwidth}
    \centering
    \includegraphics[width=\linewidth]{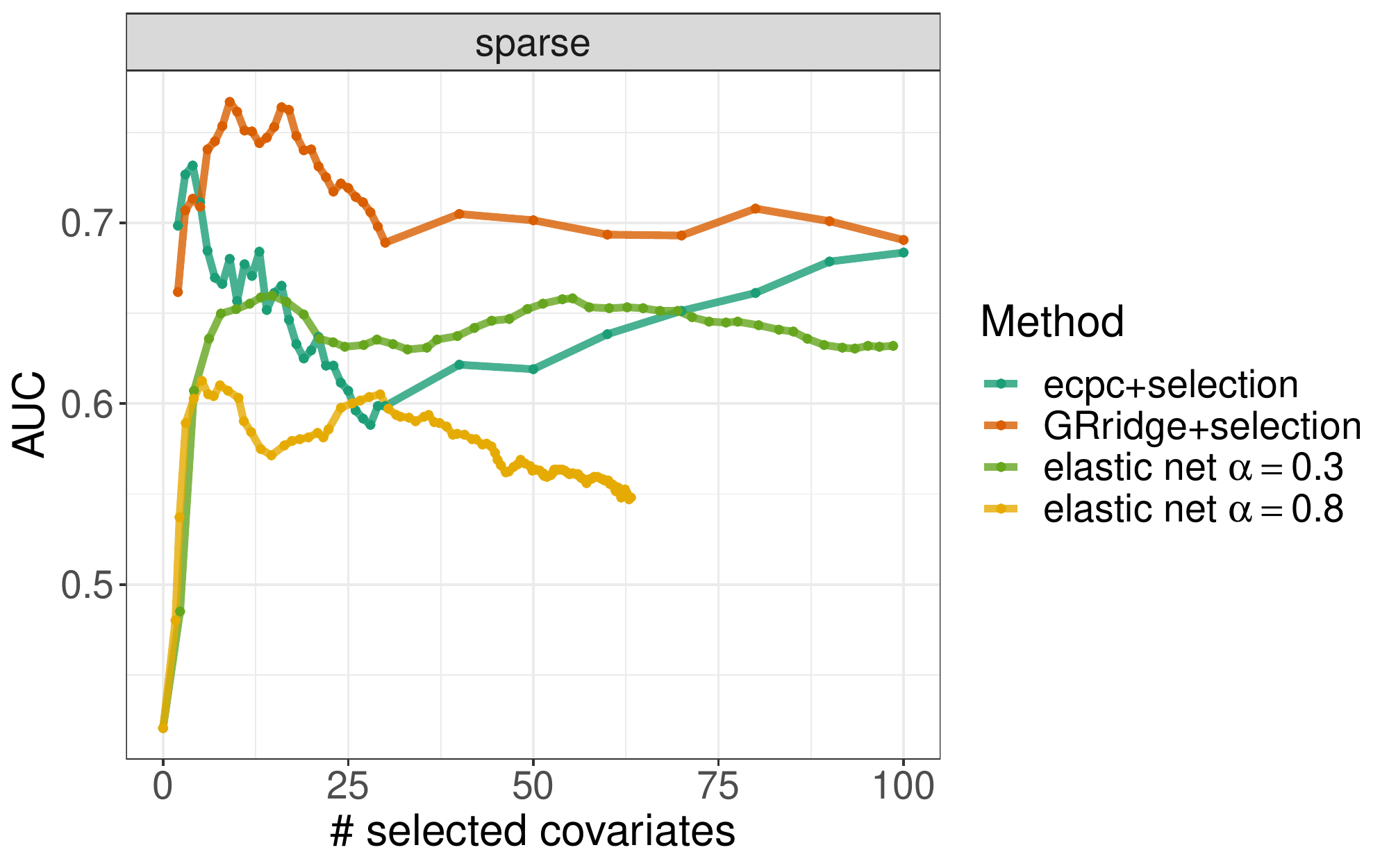}
    \end{subfigure}
    \caption{Results of 20-fold CV in Verlaat data example. AUC in various dense models (left) and sparse models (right).}
    \label{fig:AUCVerlaat}
\end{figure}

\section{Discussion}\label{par:discussion}
%samenvatting wat is ecpc: 
We presented a method, termed \texttt{ecpc}, to learn from multiple and various types of co-data to improve prediction and covariate selection for high-dimensional data, by adapting multi-group penalties in ridge penalised generalised linear models.
The method allows for missing co-data, unpenalised covariates and posterior variable selection. 
%flexible framework, any type of shrinkage is allowed on the group level
We introduced an extra level of shrinkage on the group level, rendering a unique, flexible framework that is able to obtain stable local penalties and to incorporate any additional imposed structure on the group level.
%samenvatting resultaten simulation
The benefit of stabilising the penalty estimates is illustrated in a simulation study; it prevents overfitting in the number of groups and borrows information to improve group penalty estimates.
We demonstrated the method in two cancer genomics applications using multiple, discrete and continuous, co-data. 
By adequately learning which co-data are informative and how to integrate multiple co-data sources, the method profits from the relevant co-data, without suffering from non-informative co-data. 
Thereby, it substantially improved performance in terms of AUC in the first application for dense and parsimonious models compared to the benchmark methods, ordinary ridge and elastic net. In the second application, it matched ordinary ridge and outperformed elastic net.
Moreover, for both applications \texttt{ecpc} is either competitive to or outperforming other methods that are able to include co-data, but from which none are able to handle multiple, various types of co-data adaptively.
%stabilises covariate selection
Furthermore, we showed that the method stabilised covariate selection for parsimonious predictors in both applications compared to elastic net.
% Use of sparse penalties on the group level may result in selecting one of multiple correlated, competing groups, then resulting in distinct, competing selections of covariates. 
% Subsampling may be used to assess whether multiple of these competing parsimonious predictors exist. 

%different types of group shrinkage and fusion on the covariate level
The proposed framework combines moment-based empirical Bayes estimation with an extra level of shrinkage. It is flexible as it allows many types of shrinkage on the group level, suitable for the co-data at hand. 
Our default hyperparameter shrinkage is a ridge hyperpenalty, which accounts for multiple, possibly overlapping groups. 
In addition, it may be combined with lasso-type penalties to handle group-sparsity, hierarchical co-data and continuous co-data.
The method may be extended by using different types of penalties for different types of co-data, such as fusion penalties for graphically group-structured co-data as discussed in \citep{beer2019incorporating}.
Besides fusion penalties on the group level, the method may also be extended to include fusion penalties on the covariate level. The latter is, however, less straightforward, as this changes the moment estimating equations non-trivially.

%default ridge weights for groups
We account for potential differences in group sizes by using prior `null' group weights derived 
under the assumptions that i) groups are non-overlapping, and; ii) a priori, the group-
ing is not informative (the `null'; see Section \ref{ap:hypershrinkage} in the Supplementary Material). 
Potential group overlap could be accounted for by a generalised ridge hyperpenalty matrix with non-zero off-diagonal elements, although this may be time-consuming. 
While the prior `null' weights protect against overfitting (on the group level), they may not be optimal when the groups are informative. 
An interesting extension is to replace these `null' weights by hierarchical weights parsimoniously modeled from co-data on the group level, e.g. groups of, or test statistics for, groups of covariates. 
This essentially adds another level to our model.

%group-interaction terms
The framework integrates multiple co-data by learning co-data weights, possibly deselecting non-informative co-data.
Estimating group weights per co-data source independently before integrating multiple co-data has computational advantages, as it can be done in parallel. Moreover, it again supports flexibility, as different types of hypershrinkage can be used for different types of co-data. 
Interactions between groups of different co-data sources are, however, not explicitly modelled.
If desired, co-data sources may be merged and expanded with interaction terms. This could be combined with additional hierarchical constraints on the group level, including group interaction effects only if one or both covariate groups are marginally important \citep{bien2013lasso} or vice versa, i.e. including covariate groups only when their group interaction effect is marginally important \citep{lim2015learning}.

%multi-omics
The proposed model includes one global prior variance parameter to govern the overall level of regularisation. For multi-omics data, in which different sources of data are combined in one predictor, an omics type specific global prior variance parameter may be preferable in order to set different omics types to the same scale \citep{boulesteix2017ipf}. Multiple global prior variances can easily be included by rescaling the data matrix by the associated global variance weight \citep{wiel2016better}.

%full bayesian/hybrid version for prediction intervals
The proposed empirical Bayes approach utilises the Bayesian formulation with the normal prior as given in Equation \eqref{eq:model} to estimate the hyperparameters (or prior parameters).
Predictions for new samples are also based on point estimates of the best fitting generalised linear model.
Sample and model uncertainty is therefore not propagated in the predictions, which could be interesting to obtain prediction uncertainty intervals.
More computational expensive alternatives such as hierarchical full Bayes can account for uncertainty propagation, but are less flexible in usage. 
Hybrid versions of empirical and full Bayes approaches were demonstrated to leverage a good trade-off between the computational burden and ability to propagate model errors \citep{van2019learning}.
Hence, this is an interesting future direction.

%level of sparseness & posterior variable selection
The improvement in performance using the method compared to other applications depends on the quality and relevance of available co-data, but also on the level of sparseness of the ``true'' underlying data generating mechanism. 
Our method accommodates data ranging from group sparse to dense underlying distributions.
As demonstrated in the data applications, use of co-data facilitates posterior selection.
Yet at some point, sparse penalties may outweigh the benefits of including co-data and borrowing information using dense penalties.
Most omics prediction problems, however, are unlikely to be truly sparse \citep{boyle2017expanded}, although a parsimonious predictor can still \textit{predict} well.
Others have argued for ``decoupling shrinkage and selection" in dense \citep{bondell2012consistent} and sparse \citep{hahn2015decoupling} settings.
We follow their reasoning, although with a different implementation, namely by adding an L1 penalty to the ridge penalties, which performed superior for our applications.

%software
We provide \texttt{R} scripts and data to reproduce analyses and figures, and the \texttt{R}-package \texttt{ecpc} and a script demonstrating the package on
\url{https://github.com/Mirrelijn/ecpc}. 
Currently, \texttt{ecpc} accommodates linear, logistic and Cox survival response, and multiple discrete or continuous co-data, using a ridge penalty as default hypershrinkage, possibly combined with a lasso penalty for group selection, or hierarchical lasso constraints for hierarchical group selection.

%%%%%%%%%%%%%%%%%%%%%%%%%%%%%%%%%%%%%%%%%%%%%%
%% Support information (funding), if any,   %%
%% should be provided in the                %%
%% Acknowledgements section.                %%
%%%%%%%%%%%%%%%%%%%%%%%%%%%%%%%%%%%%%%%%%%%%%%
\section*{Acknowledgements}
The first author is supported by ZonMw TOP grant COMPUTE CANCER (40-
00812-98-16012).
The authors would like to thank Soufiane Mourragui (Netherlands Cancer Insitute) for the many fruitful discussions and Magnus M\"unch (Amsterdam UMC) for preparation of the microRNA data.

%%%%%%%%%%%%%%%%%%%%%%%%%%%%%%%%%%%%%%%%%%%%%%%%%%%%%%%%%%%%%
%%                  The Bibliography                       %%
%%%%%%%%%%%%%%%%%%%%%%%%%%%%%%%%%%%%%%%%%%%%%%%%%%%%%%%%%%%%%

%\printbibliography
{\small
\bibliography{ManuscriptMvN} }

\bigskip
 
\noindent
\begin{tabular}{lll}
    {\scriptsize{\textsc{M.M. van Nee}}} & {\scriptsize{\textsc{L.F.A. Wessels}}} & {\scriptsize{\textsc{M.A. van de Wiel}}}\\
    {\scriptsize\textsc{Epidemiology and Biostatistics}} & {\scriptsize\textsc{Molecular Carcinogenesis}} & {\scriptsize\textsc{Epidemiology and Biostatistics}}\\
    {\scriptsize\textsc{Amsterdam UMC}} & {\scriptsize\textsc{Netherlands Cancer Institute}} & {\scriptsize\textsc{Amsterdam UMC}}\\
    {\scriptsize\textsc{The Netherlands}} & {\scriptsize\textsc{Computational Cancer Biology}} & {\scriptsize\textsc{The Netherlands}}\\
    {\scriptsize \textsc{E-mail:} \href{mailto:m.vannee@amsterdamumc.nl}{m.vannee@amsterdamumc.nl}} & {\scriptsize\textsc{Oncode Institute}} & {\scriptsize\textsc{MRC Biostatistics Unit}}\\
     & {\scriptsize\textsc{Intelligent Systems}}  & {\scriptsize\textsc{Cambridge University}}\\
     & {\scriptsize\textsc{Delft University of Technology}}  & {\scriptsize\textsc{UK}}\\
     & {\scriptsize\textsc{The Netherlands}}  & {\scriptsize \textsc{E-mail:} \href{mailto:mark.vdwiel@amsterdamumc.nl}{mark.vdwiel@amsterdamumc.nl}}\\
      & {\scriptsize \textsc{E-mail:} \href{mailto:l.wessels@nki.nl}{l.wessels@nki.nl}}  & \\
\end{tabular}%

%%%%%%%%%%%%%%%%%%%%%%%%%%%%%%%%%%%%%%%%%%%%%%
%% Single Appendix:                         %%
%%%%%%%%%%%%%%%%%%%%%%%%%%%%%%%%%%%%%%%%%%%%%%
%\begin{appendix}
%\section*{???}%% if no title is needed, leave empty \section*{}.
%\end{appendix}
%%%%%%%%%%%%%%%%%%%%%%%%%%%%%%%%%%%%%%%%%%%%%%
%% Multiple Appendixes:                     %%
%%%%%%%%%%%%%%%%%%%%%%%%%%%%%%%%%%%%%%%%%%%%%%
\newpage

\begin{appendix}
\renewcommand{\appendixname}{Supplementary Material}
\section*{Supplementary Material}
\renewcommand{\appendixname}{}
\renewcommand{\thesection}{S\arabic{section}}
\setcounter{section}{0}
\setcounter{table}{0}
\renewcommand{\thetable}{S\arabic{table}}%
\setcounter{figure}{0}
\renewcommand{\thefigure}{S\arabic{figure}}%
\setcounter{equation}{0}
\renewcommand{\theequation}{S.\arabic{equation}}

\section{Details model estimation}\label{ap:mom}
The Method of Moments (MoM) can be used to obtain moment estimates for the prior parameters, as has been done before in \citep{wiel2016better} for obtaining group prior variance estimates for linear and logistic regression. Whereas the theoretical moments needed for MoM are analytical for linear regression, Taylor approximations as given in \citep{le1992ridge} are used and generalised to derive approximations for other generalised linear models (GLMs) using first and second order derivatives for GLMs as given in \citep{meijer2013efficient}. Besides, the approximation is extended to include moment estimations for group prior mean parameters as well. This could be used if one would want to shrink all $\beta$ not to $0$, but to a target (see for instance \citep{vanWieringen2015lecture}) where the target itself now is estimated based on the data.
By default, we use an inverse gamma penalty on the group level to ensure stable group variance estimates that are automatically shrunk towards an ordinary ridge prior weight when co-data is non-informative. Differences in group sizes are taken into account when shrinking group variance estimates.
The penalty matrices used will first be assumed to be of full rank, which doesn't hold in particular when unpenalised covariates are to be included. However, we can show that the MoM estimating equations can be derived independently of unpenalised covariates.

Below, we derive moment-estimates for group prior means and variances $\bs{\mu},\bs{\gamma}\in\mathbb{R}^{G}$, keeping notation similar to \citep{meijer2013efficient} in order to retrieve estimating equations general for all GLMs. We then fill in details for linear, logistic and Cox survival regression, and show how to use the same estimating equations to obtain co-data weights when combining multiple co-data sets. After showing how to handle unpenalised covariates, we give the details of the inverse gamma penalty function. Lastly, we give some details on the covariate selection approaches. 

\subsection{Generalised linear models and derivatives}
Consider one co-data set coded by the co-data matrix $Z\in\mathbb{R}^{p\times G}$, leaving out all superscripts $^{(d)}$ for notational convenience. Each $\beta_k$ is a priori Gaussian distributed with some covariate-specific mean $\mu_k$ and variance $\tau_k^2$ which are a function of the group specific prior mean vector $\bs{\mu}_{G\times 1}\in\mathbb{R}^G$ and overall and local prior variance $\tau^2_{overall}$, $\bs{\gamma}\in\mathbb{R}^G$:
\begin{align*}
    \beta_k\overset{ind.}{\sim} N(\mu_k,\tau_k^2) :=N(\bs{Z}_k\bs{\mu},\tau_{global}^2\bs{Z}_k\bs{\gamma}),\ k=1,..,p.
\end{align*}
Reparameterise by $\bs{\tau}_{G\times 1}^2=\tau_{global}^2\bs{\gamma}$, assume (an estimate of) $\tau_{global}^2$ to be given. Denote the prior mean vector and precision matrix in $p$ dimensions by
\begin{align}
    \bs{\mu}_{p\times 1}=Z\bs{\mu}_{G\times 1}\in\mathbb{R}^p,\ \Omega_{p\times p}=diag(Z\bs{\tau}^2_{G\times 1})^{-1}\in\mathbb{R}^{p\times p},
\end{align}
and assume that $\Omega_{p\times p}$ is of full rank.

The penalised log likelihood, denoted by $\ell^\lambda(\bs{\beta})$ in \citep{meijer2013efficient}, is, up to a constant $c$ independent of $\bs{\beta}$, the same as the log of the joint distribution over $Y$ and $\bs{\beta}$ given the penalty or prior parameters $\bs{\mu}_{G\times 1},\bs{\tau}_{G\times 1}$: $\pi(Y,\bs{\beta}|\bs{\mu}_{G\times 1},\bs{\tau}_{G\times 1})$:
\begin{align*}
    \ell^\lambda(\bs{\beta}) &= \ell(\bs{\beta}) - \frac{1}{2}[\bs{\beta}-\bs{\mu}_{p\times 1}]^T \Omega_{p\times p} [\bs{\beta}-\bs{\mu}_{p\times 1}] +c\\
    &= \log\pi(Y|\bs{\beta}) + \log\pi(\bs{\beta}|\bs{\mu}_{G\times 1},\bs{\tau}_{G\times 1}) + \frac{p}{2}\log |2\pi\Omega_{p\times p}|\\
    &= \log\pi(Y,\bs{\beta}|\bs{\mu}_{G\times 1},\bs{\tau}_{G\times 1}) + \frac{p}{2}\log |2\pi\Omega_{p\times p}|.
\end{align*}

\subsubsection{Derivatives of penalised likelihood}
Denote the first (partial) derivative of a function to a vector $\bs{\beta}$ by $\nabla_\beta$ and the second derivative by the Hessian $H_\beta$. As given in \citep{meijer2013efficient} and extended to including the target or prior mean vector $\bs{\mu}$, for a GLM with canonical link function, there exists a diagonal weight matrix $W(\bs{\beta})=\Var_{Y|\bs{\beta}}(Y)$, which is usually a function of $\beta$, such that the first and second derivative of the penalised likelihood are given by:
\begin{align}
    \frac{\partial \ell^\lambda(\bs{\beta})}{\partial \bs{\beta}} := \nabla_\beta \ell^\lambda(\bs{\beta}) 
    &= \nabla_\beta \log\pi(Y,\bs{\beta}|\bs{\mu}_{G\times 1},\bs{\tau}_{G\times 1}) = \nabla_\beta \log\pi(\bs{\beta}|Y,\bs{\mu}_{G\times 1},\bs{\tau}_{G\times 1})\nonumber\\
    &= X^T[\bs{y}-E_{\bs{y}|\bs{\beta}}(y)] - \Omega_{p\times p} [\bs{\beta}-\bs{\mu}_{p\times 1}].\\ 
    \frac{\partial^2 \ell^\lambda(\bs{\beta})}{\partial \bs{\beta} \partial \bs{\beta}^T} := H_\beta \ell^\lambda(\bs{\beta}) 
    &= H_\beta \log\pi(Y,\bs{\beta}|\bs{\mu}_{G\times 1},\bs{\tau}_{G\times 1}) = H_\beta \log\pi(\bs{\beta}|Y,\bs{\mu}_{G\times 1},\bs{\tau}_{G\times 1})\nonumber \\
    &= -X^TW(\bs{\beta})X - \Omega_{p\times p}.
\end{align}

\subsection{Moment estimating equations}

\subsubsection{Approximate mean and variance of penalised MLE}
As done in \citep{le1992ridge} for logistic regression, one can use a first order Taylor approximation of the score function in $\tilde{\bs{\beta}}(\bs{y},\tau^{overall})$ around $\bs{\beta}$ to find approximations for the mean and variance of the first smoothened estimate $\tilde{\bs{\beta}}$ using first estimates $\tilde{\bs{\mu}},\tilde{\bs{\tau}}^2,\tilde{\Omega},\tilde{W}:=W(\tilde{\bs{\beta}})$. Here we repeat some of the details, extended for GLMs with a target.

The first order Taylor approximation is given by
\begin{align}
\begin{split}
    \nabla_\beta \log\pi(\bs{y},\tilde{\bs{\beta}}|\tilde{\bs{\mu}}_{G\times 1},\tilde{\bs{\tau}}^2_{G\times 1}) & = \nabla_\beta \log\pi(\bs{y},\bs{\beta}|\tilde{\bs{\mu}}_{G\times 1},\tilde{\bs{\tau}}^2_{G\times 1}) \\
    &\qquad + H_\beta \log\pi(\bs{y},\bs{\beta}|\tilde{\bs{\mu}}_{G\times 1},\tilde{\bs{\tau}}^2_{G\times 1}) [\tilde{\bs{\beta}}-\bs{\beta}] + O(||\tilde{\bs{\beta}}-\bs{\beta}||^2).
\end{split}
\end{align}
As the score function is equal to $0$ in the penalised maximum likelihood estimate $\tilde{\bs{\beta}}$, we find the following first-order approximation for $\tilde{\bs{\beta}}$:
\begin{align}
    \tilde{\bs{\beta}} &\approx \bs{\beta} - [H_\beta\log\pi(\bs{y},\bs{\beta}|\tilde{\bs{\mu}}_{G\times 1},\tilde{\bs{\tau}}^2_{G\times 1})]^{-1}\nabla_\beta \log\pi(\bs{y},\bs{\beta}|\tilde{\bs{\mu}}_{G\times 1},\tilde{\bs{\tau}}^2_{G\times 1}).
\end{align}
For GLMs, this equation can be rewritten as:
\begin{align*}
    \tilde{\bs{\beta}} &\approx [X^TW(\bs{\beta})X + \tilde{\Omega}_{p\times p} ]^{-1}[X^T[\bs{y}-E_{\bs{y}|\bs{\beta}}(y)] - \tilde{\Omega}_{p\times p} [\bs{\beta}-\tilde{\bs{\mu}}_{p\times 1}] +[X^TW(\bs{\beta})X + \tilde{\Omega} ]\bs{\beta}] \\
    &= [X^TW(\bs{\beta})X + \tilde{\Omega}_{p\times p} ]^{-1}[X^T[\bs{y}-E_{\bs{y}|\bs{\beta}}(y)] + \tilde{\Omega}_{p\times p} \tilde{\bs{\mu}}_{p\times 1} +X^TW(\bs{\beta})X\bs{\beta}].
\end{align*}
The mean with respect to the likelihood $\pi(\bs{y}|\bs{\beta})$ is then given by:
\begin{align}\label{eq:meanbetatilde}
    E_{\bs{y}|\bs{\beta}}\tilde{\bs{\beta}}&\approx E_{\bs{y}|\bs{\beta}} \left[[X^TW(\bs{\beta})X + \tilde{\Omega}_{p\times p} ]^{-1}[X^T[\bs{y}-E_{\bs{y}|\bs{\beta}}(y)] + \tilde{\Omega}_{p\times p} \tilde{\bs{\mu}}_{p\times 1} +X^TW(\bs{\beta})X\bs{\beta}] \right] \nonumber\\
    &= [X^TW(\bs{\beta})X + \tilde{\Omega}_{p\times p} ]^{-1}[X^T[E_{\bs{y}|\bs{\beta}}(\bs{y})-E_{\bs{y}|\bs{\beta}}(y)] + \tilde{\Omega}_{p\times p} \tilde{\bs{\mu}}_{p\times 1} +X^TW(\bs{\beta})X\bs{\beta}] \nonumber \\
    &= [X^TW(\bs{\beta})X + \tilde{\Omega}_{p\times p} ]^{-1}[\tilde{\Omega}_{p\times p} \tilde{\bs{\mu}}_{p\times 1} +X^TW(\bs{\beta})X\bs{\beta}] \nonumber\\
    &= \tilde{\bs{\mu}}_{p\times 1} + [X^TW(\bs{\beta})X + \tilde{\Omega}_{p\times p} ]^{-1} X^TW(\bs{\beta})X [\bs{\beta} - \tilde{\bs{\mu}}_{p\times 1}] \nonumber\\
    &\approx \tilde{\bs{\mu}}_{p\times 1} + [X^T\tilde{W}X + \tilde{\Omega}_{p\times p} ]^{-1} X^T\tilde{W}X [\bs{\beta} - \tilde{\bs{\mu}}_{p\times 1}],
\end{align}
and the variance is given by the diagonal of the covariance matrix:
\begin{align}\label{eq:varbetatilde}
    \Cov_{\bs{y}|\bs{\beta}}\tilde{\bs{\beta}}&\approx \Cov_{\bs{y}|\bs{\beta}} \left[[X^TW(\bs{\beta})X + \tilde{\Omega}_{p\times p} ]^{-1}[X^T[\bs{y}-E_{\bs{y}|\bs{\beta}}(y)]  \right. \nonumber \\
    &\qquad + \tilde{\Omega}_{p\times p} \tilde{\bs{\mu}}_{p\times 1} \left.+X^TW(\bs{\beta})X\bs{\beta}] \right] \nonumber \\
    &= [X^TW(\bs{\beta})X + \tilde{\Omega}_{p\times p} ]^{-1}X^T\Cov_{\bs{y}|\bs{\beta}} \left[\bs{y}\right]X[X^TW(\bs{\beta})X + \tilde{\Omega}_{p\times p} ]^{-1} \nonumber \\
    &= [X^TW(\bs{\beta})X + \tilde{\Omega}_{p\times p} ]^{-1}X^TW(\bs{\beta}) X[X^TW(\bs{\beta})X + \tilde{\Omega}_{p\times p} ]^{-1} \nonumber \\
    &\approx [X^T\tilde{W}X + \tilde{\Omega}_{p\times p} ]^{-1}X^T\tilde{W} X[X^T\tilde{W}X + \tilde{\Omega}_{p\times p} ]^{-1}.
\end{align}
Note that we approximate the sample variance matrix $W$, which is still a function of $\bs{\beta}$, by $\tilde{W}$. For linear regression this approximation is in fact exact since $W$ does not depend on $\bs{\beta}$.

\subsubsection{Moment equations for prior mean}
The prior mean vector $\bs{\mu}_{G\times 1}$ can be computed by using the first moment. Denote $P_{G\leftarrow p}\in\mathbb{R}^{G\times p}$ as the matrix that averages the moments over each group, i.e. $[P_{G\leftarrow p}]_{gk}:=|\mc{G}_g|^{-1}\mathds{1}_{k\in\mc{G}_g}$. The system of moment estimating equations is given by:
\begin{align}
    &\left\{\begin{array}{l}
         \frac{1}{|\mathcal{G}_1|}\sum_{k\in\mathcal{G}_1} \tilde{\beta}_k = \frac{1}{|\mathcal{G}_1|}\sum_{k\in\mathcal{G}_1} E_{\bs{\beta}|\bs{\mu}_{G\times 1},\bs{\tau}_{G\times 1}}\left[E_{\bs{Y}|\bs{\beta}}\left[\tilde{\beta}_k\right]\right],\\
         \vdots\\
         \frac{1}{|\mathcal{G}_G|}\sum_{k\in\mathcal{G}_G} \tilde{\beta}_k = \frac{1}{|\mathcal{G}_G|}\sum_{k\in\mathcal{G}_g} E_{\bs{\beta}|\bs{\mu}_{G\times 1},\bs{\tau}_{G\times 1}}\left[E_{\bs{Y}|\bs{\beta}}\left[\tilde{\beta}_k\right]\right],\\
    \end{array}
    \right.\\
    &\Leftrightarrow \nonumber\\
    & P_{G\leftarrow p}\tilde{\bs{\beta}}=P_{G\leftarrow p}E_{\bs{\beta}|\bs{\mu}_{G\times 1},\bs{\tau}_{G\times 1}}\left[E_{\bs{Y}|\bs{\beta}}\left[\tilde{\bs{\beta}}\right]\right].
\end{align}

Plugging in the mean of Equation \eqref{eq:meanbetatilde} and further rewriting gives:
\begin{align*}
    P_{G\leftarrow p}\tilde{\bs{\beta}}&=P_{G\leftarrow p}E_{\bs{\beta}|\bs{\mu}_{G\times 1},\bs{\tau}_{G\times 1}}\left[E_{\bs{Y}|\bs{\beta}}\left[\tilde{\bs{\beta}}\right]\right]\\
    &\approx P_{G\leftarrow p}E_{\bs{\beta}|\bs{\mu}_{G\times 1},\bs{\tau}_{G\times 1}}\left[\tilde{\bs{\mu}}_{p\times 1} + [X^T\tilde{W}X + \tilde{\Omega}_{p\times p} ]^{-1} X^T\tilde{W}X [\bs{\beta} - \tilde{\bs{\mu}}_{p\times 1}]\right]\\
    &=P_{G\leftarrow p}\left[\tilde{\bs{\mu}}_{p\times 1} + [X^T\tilde{W}X + \tilde{\Omega}_{p\times p} ]^{-1} X^T\tilde{W}X [Z\bs{\mu}_{G\times 1} - \tilde{\bs{\mu}}_{p\times 1}]\right]. 
\end{align*}
If we define a matrix $C$ as follows then we can write the above as follows:
\begin{align}
    &C:=[X^T\tilde{W}X + \tilde{\Omega}_{p\times p} ]^{-1} X^T\tilde{W}X,\\
    &P_{G\leftarrow p}[\tilde{\bs{\beta}}-\tilde{\bs{\mu}}_{p\times 1}] = P_{G\leftarrow p} C Z[\bs{\mu}_{G\times 1}-\tilde{\bs{\mu}}_{G\times 1}].
\end{align}
So we find the following linear system:
\begin{align}\label{eq:linsysmu1}
&A_\mu\bs{\mu}_{G\times 1} = \bs{b}_\mu,\\ 
&A_\mu:= P_{G\leftarrow p} C Z,\\
&\bs{b}_\mu:= A_\mu\tilde{\bs{\mu}}_{G\times 1} + P_{G\leftarrow p}[\tilde{\bs{\beta}}-\tilde{\bs{\mu}}_{p\times 1}] = P_{G\leftarrow p}[\tilde{\bs{\beta}}-[I_{p\times p}-C]\tilde{\bs{\mu}}_{p\times 1}].
\end{align}
Note that $A_\mu\in\mathbb{R}^{G\times G}$ for $G\ll p$. Lastly, we can write each element of $A_\mu$ and $\bs{b}_\mu$ in the format of summing over groups as:
\begin{align}\label{eq:linsysmu2}
    [A_\mu]_{g,h} &= \frac{1}{|\mc{G}_g|}\sum_{k\in\mc{G}_g}\sum_{l\in\mc{G}_h}\frac{[C]_{k,l}}{|\mc{I}_l|},\\
    [b_\mu]_{g} &= \frac{1}{|\mc{G}_g|}\sum_{k\in\mc{G}_g}[\tilde{\bs{\beta}}-[I_{p\times p}-C]\tilde{\bs{\mu}}_{p\times 1}]_k.
\end{align}

\begin{remark}
In high-dimensional data, by default we will shrink to $0$, so $\tilde{\bs{\mu}}_{G\times 1}=\bs{0}=\bs{\mu}_{G\times 1}$.
\end{remark}

\subsubsection{Moment equations for prior variance}
The prior variance vector $\bs{\tau}_{G\times 1}$ can be computed by using the second moment equations and the estimate for $\bs{\mu}_{G\times 1}$. Use the same notation as above to denote $P_{G\leftarrow p}\in\mathbb{R}^{G\times p}$ as the matrix that averages the moments over each group, where $.^2$ denotes element-wise squaring:
\begin{align}
    &\left\{\begin{array}{l}
         \frac{1}{|\mathcal{G}_1|}\sum_{k\in\mathcal{G}_1} \tilde{\beta}_k^2 = \frac{1}{|\mathcal{G}_1|}\sum_{k\in\mathcal{G}_1} E_{\bs{\beta}|\bs{\mu}_{G\times 1},\bs{\tau}_{G\times 1}}\left[E_{\bs{Y}|\bs{\beta}}\left[\tilde{\beta}_k^2\right]\right],\\
         \vdots\\
         \frac{1}{|\mathcal{G}_G|}\sum_{k\in\mathcal{G}_G} \tilde{\beta}_k^2 = \frac{1}{|\mathcal{G}_G|}\sum_{k\in\mathcal{G}_g} E_{\bs{\beta}|\bs{\mu}_{G\times 1},\bs{\tau}_{G\times 1}}\left[E_{\bs{Y}|\bs{\beta}}\left[\tilde{\beta}_k^2\right]\right],\\
    \end{array}
    \right.\\
    &\Leftrightarrow \nonumber\\
    & P_{G\leftarrow p}\tilde{\bs{\beta}}.^2=P_{G\leftarrow p}E_{\bs{\beta}|\bs{\mu}_{G\times 1},\bs{\tau}_{G\times 1}}\left[E_{\bs{Y}|\bs{\beta}}\left[\tilde{\bs{\beta}}.^2\right]\right].
\end{align}
Use $diag(M):=([M]_{11},[M]_{22},..,[M]_{pp})^T$ to denote the diagonal vector of some matrix $M\in\mathbb{R}^{p\times p}$. Then we can derive, plugging in expressions of Equations \eqref{eq:meanbetatilde} and \eqref{eq:varbetatilde}:
\begin{align*}
    P_{G\leftarrow p}\tilde{\bs{\beta}}.^2&=P_{G\leftarrow p}E_{\bs{\beta}|\bs{\mu}_{G\times 1},\bs{\tau}_{G\times 1}}\left[\Var_{\bs{Y}|\bs{\beta}}\left[\tilde{\bs{\beta}}\right] + \left[E_{\bs{Y}|\bs{\beta}}\left[\tilde{\bs{\beta}}\right]\right].^2\right]\\
    &=P_{G\leftarrow p}\left\{E_{\bs{\beta}|\bs{\mu}_{G\times 1},\bs{\tau}_{G\times 1}}\left[\Var_{\bs{Y}|\bs{\beta}}\left[\tilde{\bs{\beta}}\right]\right] \right.\\
    &\qquad \left.+ \Var_{\bs{\beta}|\bs{\mu}_{G\times 1},\bs{\tau}_{G\times 1}}\left[E_{\bs{Y}|\bs{\beta}}\left[\tilde{\bs{\beta}}\right]\right] + E_{\bs{\beta}|\bs{\mu}_{G\times 1},\bs{\tau}_{G\times 1}}\left[E_{\bs{Y}|\bs{\beta}}\left[\tilde{\bs{\beta}}\right]\right].^2\right\}\\
    &=P_{G\leftarrow p}\left\{E_{\bs{\beta}|\bs{\mu}_{G\times 1},\bs{\tau}_{G\times 1}}\left[diag\left([X^T\tilde{W}X + \tilde{\Omega}_{p\times p} ]^{-1}X^T\tilde{W} X[X^T\tilde{W}X + \tilde{\Omega}_{p\times p} ]^{-1}\right)\right] \right.\\
    &\qquad \left. + \Var_{\bs{\beta}|\bs{\mu}_{G\times 1},\bs{\tau}_{G\times 1}}\left[[X^T\tilde{W}X + \tilde{\Omega}_{p\times p} ]^{-1}[\tilde{\Omega}_{p\times p} \tilde{\bs{\mu}}_{p\times 1} +X^T\tilde{W}X\bs{\beta}]\right] \right.\\
    &\left.\qquad + E_{\bs{\beta}|\bs{\mu}_{G\times 1},\bs{\tau}_{G\times 1}}\left[[X^T\tilde{W}X + \tilde{\Omega}_{p\times p} ]^{-1}[\tilde{\Omega}_{p\times p} \tilde{\bs{\mu}}_{p\times 1} +X^T\tilde{W}X\bs{\beta}]\right].^2\right\}\\
    &=P_{G\leftarrow p}\left\{diag\left([X^T\tilde{W}X + \tilde{\Omega}_{p\times p} ]^{-1}X^T\tilde{W} X[X^T\tilde{W}X + \tilde{\Omega}_{p\times p} ]^{-1}\right) \right.\\
    &\qquad + diag\left([X^T\tilde{W}X + \tilde{\Omega}_{p\times p} ]^{-1} X^T\tilde{W}X \Cov_{\bs{\beta}|\bs{\mu}_{G\times 1},\bs{\tau}_{G\times 1}}\left[\bs{\beta}\right]   \right.\\
    &\qquad \qquad \cdot\left.X^T\tilde{W}X [X^T\tilde{W}X + \tilde{\Omega}_{p\times p} ]^{-1}  \right)\\
    &\left.\qquad + \left[[X^T\tilde{W}X + \tilde{\Omega}_{p\times p} ]^{-1}[\tilde{\Omega}_{p\times p} \tilde{\bs{\mu}}_{p\times 1} +X^T\tilde{W}XZ\bs{\mu}_{G\times 1}]\right].^2\right\}.
\end{align*}
Again using the matrix $C$ as above, and $\tilde{\bs{v}}$ as vector for the variance, we can write:
\begin{align}\label{eq:Cv}
    &C:=[X^T\tilde{W}X + \tilde{\Omega}_{p\times p} ]^{-1} X^T\tilde{W}X,\\
    &\tilde{\bs{v}}:=diag\left([X^T\tilde{W}X + \tilde{\Omega}_{p\times p} ]^{-1}X^T\tilde{W}X[X^T\tilde{W}X + \tilde{\Omega}_{p\times p} ]^{-1}\right),\\
    %&\Sigma := diag(Z\bs{\tau}_{G\times 1})\\
    &P_{G\leftarrow p}\tilde{\bs{\beta}}.^2 %= P_{G\leftarrow p}[\bs{v}+ diag(C \Sigma C^T) + [[I-C]\tilde{\bs{\mu}}_{p\times 1} + CZ\bs{\mu}_{G\times 1}].^2] %
    = P_{G\leftarrow p}[\tilde{\bs{v}}+ C.^2Z\bs{\tau}_{G\times 1} + [[I-C]\tilde{\bs{\mu}}_{p\times 1} + CZ\bs{\mu}_{G\times 1}].^2],
\end{align}
and then we find the linear system
\begin{align}\label{eq:linsystau1}
&A_\tau\bs{\tau}_{G\times 1} = \bs{b}_\tau,\\ 
&A_\tau:= P_{G\leftarrow p} C.^2 Z,\\
&\bs{b}_\tau:= P_{G\leftarrow p}[\tilde{\bs{\beta}}.^2-[[I-C]\tilde{\bs{\mu}}_{p\times 1} + CZ\bs{\mu}_{G\times 1}].^2 - \tilde{\bs{v}}].
\end{align}
Note that $A_\tau\in\mathbb{R}^{G\times G}$ for $G\ll p$. Again, we can write each element of $A_\tau$ and $\bs{b}_\tau$ in the format of summing over groups as:
\begin{align}\label{eq:linsystau2}
    [A_\tau]_{g,h} &= \frac{1}{|\mc{G}_g|}\sum_{k\in\mc{G}_g}\sum_{l\in\mc{G}_h}\frac{[C].^2_{k,l}}{|\mc{I}_l|},\\
    [b_\tau]_{g} &= \frac{1}{|\mc{G}_g|}\sum_{k\in\mc{G}_g}[\tilde{\bs{\beta}}.^2-[[I-C]\tilde{\bs{\mu}}_{p\times 1} + CZ\bs{\mu}_{G\times 1}].^2 - \tilde{\bs{v}}]_k.
\end{align}

\begin{remark}
In high-dimensional data, most of the times we will shrink to $0$, so $\tilde{\bs{\mu}}_{G\times 1}=\bs{0}=\bs{\mu}_{G\times 1}$
\end{remark}

\subsection{Moment equations for multiple co-data sets}\label{ap:multiplecodata}
For multiple co-data sets, each $\beta_k$ is a priori distributed as:
\begin{align*}
    \beta_k\overset{ind.}{\sim} N(\mu_k,\tau_k^2) :=N\left(\sum_{d=1}^Dw^{(d)}\bs{Z}^{(d)}_k\bs{\mu}^{(d)},\tau_{global}^2\sum_{d=1}^Dw^{(d)}\bs{Z}^{(d)}_k\bs{\gamma}^{(d)}\right),\ k=1,..,p.
\end{align*}
 We can pool all $G_{total}:=\sum_{d=1}^DG^{(d)}$ groups of all co-data sets together and use the same method of moment equations as above to derive moment estimates for the co-data weights. 
 In what follows, assume that we shrink all $\beta_k$ to $0$, i.e. $\bs{\mu}^{(d)}=0$ for all $d=1,..,D$. A similar argument using the first moments only can be used if non-zero targets are to be used. 
 To be able to use the same notation as above, define:
\begin{align}
    &Z= \left[Z^{(1)}\ \cdots \ Z^{(D)}\right],\\
    %\bs{\mu}_{G_{total}\times 1}:= \left[(w^{(1)}\bs{\mu}^{(1)})^T\ \cdots\ (w^{(D)}\bs{\mu}^{(D)})^T\right]^T,\ %
    &\bs{\tau}_{G_{total}\times 1}:= \tau_{overall}^2[(w^{(1)}\bs{\gamma}^{(1)})^T\ \cdots\ (w^{(D)}\bs{\gamma}^{(D)})^T] ^T,\\ 
    %\bs{\mu}_{p\times 1} = \sum_{d=1}^Dw^{(d)}\bs{Z}^{(d)}\bs{\mu}_{G\times 1}^{(d)} = Z\bs{\mu}_{G_{total}\times 1},\% 
    &\bs{\tau}_{p\times 1} = \tau_{overall}^2\sum_{d=1}^Dw^{(d)}\bs{Z}^{(d)}\bs{\gamma}^{(d)} = Z\bs{\tau}_{G_{total}\times 1}.
\end{align}
Then we can follow the reasoning similar to above to arrive at the linear system as in Equation \eqref{eq:linsystau1}, where we have used that $\tilde{\bs{\mu}}_{p\times 1}=0=\bs{\mu}_{p\times 1}$:
\begin{align*}
&A_w\bs{\tau}_{G_{total}\times 1} = \bs{b}_w,\\ 
&A_w:= P_{G_{total}\leftarrow p} C.^2 Z,\\
&\bs{b}_w:= P_{G_{total}\leftarrow p}[\tilde{\bs{\beta}}.^2- \tilde{\bs{v}}],
\end{align*}
but now for $A_w\in\mathbb{R}^{G_{total}\times G_{total}}$ and $\bs{b}_w\in\mathbb{R}^{G_{total}}$. Plugging in the estimates for $\hat{\tau}^2_{overall}$ and $\hat{\bs{\gamma}}^{(d)}$, $d=1,..,D$, we find the linear system for the vector of $D$ unknown co-data weights $\bs{w}=(w^{(1)},..,w^{(D)})^T$:
\begin{align*}
&\tilde{A}_w\bs{w} = \bs{b}_w,
\end{align*}
with $\tilde{A}_w\in\mathbb{R}^{G_{total}\times D}$, and each column $[\tilde{A}]_{*,d}$ given by:
\begin{align*}
    [\tilde{A}]_{*,d}&=\hat{\tau}^2_{overall}\left[A_w\right]_{*,(1+\sum_{d'=1}^{d-1}G^{(d')}):(\sum_{d'=1}^{d}G^{(d')})}\hat{\bs{\gamma}}^{(d)}.
\end{align*}

\subsection{Details for specific examples}
The moment equations boil down to a linear system for $\bs{\mu}$ as given in Equations \eqref{eq:linsysmu1} and \eqref{eq:linsysmu2} and one for $\bs{\tau}$ as given in Equations \eqref{eq:linsystau1} and \eqref{eq:linsystau2}. These equations use the matrix $C\in\mathbb{R}^{p\times p}$ and vector $v\in\mathbb{R}^p$ as defined in Equation \eqref{eq:Cv}. To retrieve the moment equations for a specific GLM with link function $g^{-1}(\cdot)$, we only need an expression for the GLM-specific variance matrix $W(\bs{\beta})=\Var_{Y|\bs{\beta}}(Y)$. 

Below we give the details for linear, logistic and Cox survival regression.

\subsubsection{Linear regression}
For linear regression, the response $Y$ is gaussian distributed around the mean $X\bs{\beta}$ with variance $\sigma^2$ and following link function:
\begin{align}
    y_i\overset{ind.}{\sim} N\left(\bs{X}_i\bs{\beta}, \sigma^2\right),\ g^{-1}(\bs{X}_i\bs{\beta})&=\bs{X}_i\bs{\beta},\ i=1,..,n.
\end{align}
The matrix $\tilde{W}:=W(\tilde{\bs{\beta}})$ is given by:
\begin{align}
    W(\tilde{\bs{\beta}}) = \sigma^2I_{n\times n}.
\end{align}
The approximations for the mean and variance in Equations \eqref{eq:meanbetatilde} and \eqref{eq:varbetatilde} are in fact exact for the linear regression case.

\subsubsection{Logistic regression}
For linear regression, the response $Y$ follows a Bernoulli distribution with the vector of probabilities denoted by $\bs{p}=(p_1,..,p_n)^T$, and with the following link function:
\begin{align}
    y_i\overset{ind.}{\sim} Ber\left(p_i\right),\ g^{-1}(\bs{X}_i\bs{\beta})=p_i:=\frac{\ee(\bs{X}_i\bs{\beta})}{1+\ee(\bs{X}_i\bs{\beta})},\ i=1,..,n.
\end{align}
The matrix $\tilde{W}:=W(\tilde{\bs{\beta}})$ is the diagonal matrix with diagonal elements given by:
\begin{align}
    [W(\tilde{\bs{\beta}})]_{ii} = \tilde{p}_i(1-\tilde{p}_i) = \frac{\ee(\bs{X}_i\tilde{\bs{\beta}})}{(1+\ee(\bs{X}_i\tilde{\bs{\beta}}))^2}.
\end{align}

\subsubsection{Cox survival regression}
In Cox survival regression, the outcome $y_i=(t_i,d_i)$ denotes at which time $t_i$ an event occurred, $d_i=1$, or was censored, $d_i=0$. Details for Cox survival regression are given in for example \citep{meijer2013efficient}. The hazard function $h_i(t)$ is proportional to a baseline hazard $h_0(t)$ with cumulative hazard $H_0(t)$:
\begin{align}
    h_i(t)=h_0(t)\ee (\bs{X}_i\bs{\beta}),\ i=1,..,n,\ H_0(t)=\int_{s=0}^t h_0(s)\dd s.
\end{align}
Similar to as mentioned in \citep{meijer2013efficient}, the vector $\bs{y}-E_{\bs{y}|\bs{\beta}}[\bs{y}]$ in Equation \eqref{eq:meanbetatilde} is replaced by the vector of martingale residuals:
\begin{align}
    \Delta_i := d_i - H_0(t_i)\ee(\bs{X}_i\tilde{\bs{\beta}}),\ i=1,..,n.
\end{align}
The $W$ matrix (denoted by $D$ in \citep{meijer2013efficient}) is given by the following diagonal matrix:
\begin{align}
    \left[W(\tilde{\bs{\beta}})\right]_{ii} &:= H_0(t_i)\ee(\bs{X}_i\tilde{\bs{\beta}}),\ i=1,..,n.
\end{align}
We use the well-known Breslow estimator to estimate $H_0$, which is based on the times of observed events, i.e. $t_i$ for which $d_i=1$:
\begin{align}
    \hat{H}_0(t)=\sum_{i:\ t_i\leq t}\hat{h}_0(t_i),\ \hat{h}_0(t_i)=d_i\left(\sum_{j:\ t_j\geq t_i} \ee(\bs{X}_j\tilde{\bs{\beta}})\right)^{-1}.
\end{align}

\subsection{Hypershrinkage ridge penalty}\label{ap:hypershrinkage}
Consider the prior model for the regression coefficients for one co-data set matrix $Z$:
\begin{align}
    \beta_k\overset{ind.}{\sim} N\left(0,\tau_{global}^2\bs{Z}_k\bs{\gamma}\right).
\end{align}
The goal is to shrink the group parameter estimates $\bs{\gamma}$ in such a way that if the co-data is not informative, we shrink towards the ordinary ridge prior as a target prior distribution, i.e. all local variances are set to $1$. Furthermore, the variance of the local variance estimates should then be the same for all $p$ covariates and should not depend on the co-data matrix $Z$. These two assumptions can be expressed as follows:
\begin{align}
    E(\bs{\tau}^2_{local})=E(Z\bs{\gamma})=\mathds{1}_{p\times 1},\ \Var(\bs{\tau}^2_{local})=\Var(Z\bs{\gamma}) = \sigma_\gamma^2I_{p\times p},
\end{align}
for some variance $\sigma^2_\gamma\geq 0$.
Rewriting the expression above gives expressions for the mean and variance of $\bs{\gamma}$:
\begin{align}
    E(\bs{\gamma})&=E((Z^TZ)^{-1}Z^TZ\bs{\gamma})=(Z^TZ)^{-1}Z^T\mathds{1}_{p\times 1}:=(Z^TZ)^{-1}Z^TZ\mathds{1}_{G\times 1}=\mathds{1}_{G\times 1},\\ 
    \Var(\bs{\gamma}) &= \Var((Z^TZ)^{-1}Z^TZ\bs{\gamma}) = \sigma_\gamma^2(Z^TZ)^{-1}Z^TZ (Z^TZ)^{-1} = \sigma_\gamma^2(Z^TZ)^{-1}.\label{eq:priorhypervar}
\end{align}
For disjunct groups, this latter expression reduces to
\begin{align}
    \Var(\bs{\gamma}) &= \sigma_\gamma^2 \left[\begin{array}{ccc}
        |\mc{G}_1| & & \emptyset  \\
         & \ddots & \\
         \emptyset & & |\mc{G}_G|
    \end{array}\right]^{-1} := \sigma_\gamma^2 W_\gamma^{-1}.
\end{align}
We rescale $\bs{\gamma}$ such that all variances are on the same scale:
\begin{align}\label{eq:priorgamma'}
    \bs{\gamma}'&= W_\gamma^{1/2}\bs{\gamma},\ E(\bs{\gamma}')=W_\gamma^{1/2}\mathds{1}_{G\times 1},\ \Var(\bs{\gamma}')=\sigma_\gamma^2I_{G\times G}.
\end{align}
We use a ridge penalty for $\bs{\gamma}'$ corresponding to the normal distribution with mean and variance given above, with hyperpenalty $\lambda_\gamma$ inversely proportional to the variance $\sigma_\gamma^2$.
Finally, given an estimate $\hat{\lambda}_\gamma$ we solve the optimisation problem given in Equation \eqref{eq:tauest} for the rescaled $\bs{\gamma}'$ and scale back to obtain the parameter estimates for $\bs{\gamma}$:
\begin{align}
\begin{split}
    W_\gamma^{1/2}\tilde{\bs{\gamma}} = \tilde{\bs{\gamma}}' &= \argmin{\bs{\gamma}'}\left\{ ||AW_\gamma^{-1/2} \bs{\gamma}' -\bs{b}||^2_2 + \hat{\lambda}_{\gamma}\sum_{g=1}^G \left(\gamma'_g-\left[W_\gamma^{1/2}\right]_{gg}\right)^2 \right\}.
\end{split}
\end{align}

%\subsection{Continuous co-data} volgens mij niet per se nodig

\subsection{Covariate selection for prediction}\label{ap:posthoc}
Below we give the technical details needed for implementation of the options for post-hoc variable selection using the approaches described in \citep{novianti2017better,carvalho2009handling,bondell2012consistent}, using an elastic net penalty, DSS criterion and marginal penalised credible intervals respectively. 

\subsubsection{Using elastic net}
As is widely known, the lasso penalty is known to be able to automatically select variables, but is not stable when covariates are correlated. The elastic net penalty, a combination of the ridge and lasso penalty, can be seen as a stabilised lasso, in the sense that the added ridge penalty stabilises the covariate selection. In a similar manner, the elastic net penalty is used in \citep{novianti2017better}, by rescaling the covariates with the weighted ridge penalty and adding a lasso penalty to perform selection.
The procedure can be summarised as follows.

First rescale $X$ and $\bs{\beta}$ to $X'$ and $\bs{\beta}'$:
\begin{align}
\Delta :=\left[\begin{array}{ccc}
    \frac{1}{\hat{\tau}_{1,local}^2} &  & \emptyset\\
     & \ddots & \\
     \emptyset &  & \frac{1}{\hat{\tau}^2_{p,local}}
\end{array}\right],\
X':= X\Delta^{-\frac{1}{2}},\   \bs{\beta}':= \Delta^{\frac{1}{2}}\bs{\beta}.
\end{align}
Note that $X'\bs{\beta}'=X\bs{\beta}$, and $\beta'_k\sim N(0,\hat{\tau}_{global}^2)$, $k=1,..,p$.
Then find the penalised maximum likelihood estimate for $\bs{\beta}'$ such that the desired number of covariates $s$ is selected:
\begin{align}
\begin{split}
    \hat{\bs{\beta}}'&=\argmax{\bs{\beta}'}\left\{\log \pi\left(Y|X',\bs{\beta}'\right) + \frac{1}{\hat{\tau}^2_{global}}||\bs{\beta}'||^2_2 + \lambda_1 ||\bs{\beta}'||_1\right\},\\
    \lambda_1&\in\{\lambda_1\in\mathbb{R}:\ |\{k:\beta_k'\neq 0\}|=s\}.
\end{split}
\end{align}
Define $\mc{I}_s=\{k\in\{1,..,p\}:\hat{\beta}_k'\neq 0\}$ as the set of indices of selected covariates. Denote by $\bs{\beta}_s$ the regression coefficients of the selected covariates and $\bs{\beta}_{-s}$ the remaining regression coefficients. 
Lastly, refit the selected covariates to obtain the sparsified predictor $\hat{\bs{\beta}}_{sp.}$ on the right scale.
\begin{align}
    \hat{\bs{\beta}}_{sp.}&=\argmax{\bs{\beta}:\ \bs{\beta}_{-s}=\bs{0}}\left\{\log \pi\left(Y|X,\bs{\beta}\right) + \frac{1}{\hat{\tau}^2_{global}}\sum_{k\in\mc{I}_s}\frac{1}{\hat{\tau}^2_{k,local}}\beta_k^2 \right\}.
\end{align}
We propose to use either the previous weighted ridge estimates for $\hat{\tau}_{global}$ and $\hat{\bs{\tau}}_{local}$ to prevent overestimating in dense models, or set the local weights to $1$ and refit $\hat{\tau}_{global}$ using maximum marginal likelihood or cross-validation to undo overshrinkage in sparse models.

\subsubsection{Using DSS}
\cite{hahn2015decoupling} propose to decouple shrinkage and selection (DSS). Decoupling here means that inference is done first using any prior, and selection is done afterwards based on the posterior, resulting in a sequence of sparse linear models. The posterior summary variable selection approach they propose is based on a loss function which balances the prediction error and sparseness of the point estimate of the regression coefficients $\bs{\beta}$. Given the posterior mean $\hat{\bs{\beta}}$, they first propose to use the following sparsified point estimate $\hat{\bs{\beta}}_{sp.}$:
\begin{align}
    \hat{\bs{\beta}}_{sp.}= \argmin{\bs{\gamma}} \lambda||\bs{\gamma}||_0 + \frac{1}{n}||X\hat{\bs{\beta}}-X\bs{\gamma}||_2^2.
\end{align}
As the optimisation problem corresponding to the $L_0$-penalty is intractable, they propose to approximate the loss function by a local linear approximation with a weighted $L_1$-penalty:
\begin{align}
    \hat{\bs{\beta}}_{sp.}= \argmin{\bs{\gamma}} \sum_j\frac{\lambda}{|w_j|}|\bs{\gamma}| + \frac{1}{n}||X\hat{\bs{\beta}}-X\bs{\gamma}||_2^2,
\end{align}
where they use $w_j=\hat{\beta}_j$. This optimisation problem can be solved with existing software like \texttt{glmnet}.

\subsubsection{Using marginal penalised credible regions}
\cite{bondell2012consistent} show that variable selection can be done consistently via penalised credible regions. They prove that their proposed approach using marginal posterior credible sets is consistent in variable selection even when $p$ grows exponentially fast relative to the sample size, useful for high-dimensional data where $p\gg n$.

They propose to use the following set $A_n$ of selected variables based on a thresholding selection rule:
\begin{align}
    A_n&=\{j:\ |\beta_j|> t_{n,j}\},
\end{align}
where the threshold $t_{n,j}$ determines the size of $A_n$, or equivalently, the number of variables that is selected. They propose to use the following threshold:
\begin{align}\label{eq:margthreshold}
    t_{n,j}=s_jt_n,\ s_j=\frac{\sqrt{\Var_{\bs{\beta}|\bs{Y}}(\beta_j)}}{\min_j \sqrt{\Var_{\bs{\beta}|\bs{Y}}(\beta_j)}}.
\end{align}
Note that whereas the selection procedure is done marginally, the threshold depends on the full posterior.

We approximate the marginal posterior standard deviation in Equation \eqref{eq:margthreshold} for GLMs penalised with a weighted ridge prior, using a Laplace approximation around the posterior mode $\hat{\bs{\beta}}$.

\textbf{Result.}
Consider a GLM with diagonal weight matrix $W=Var_{\bs{Y}|\bs{\beta}}(\bs{Y})$, that is penalised by a weighted ridge penalty, denoted by the diagonal penalty matrix $\Delta$ and corresponding prior variance $\bs{\tau}^2_{global}$. Define $\tilde{X}=W^{1/2}X\Delta^{-1/2}$ and denote the SVD of $\tilde{X}$ as $\tilde{X}=UDV^T$. The posterior standard deviation of $\beta_j$ can be approximated by: 
\begin{align}
    \sqrt{\Var_{\bs{\beta}|\bs{Y},\bs{\tau}}(\beta_j)} &\approx \Delta_{jj}^{-1/2}\sqrt{1-[VD^2(D^2+I)^{-1}V^T]_{jj}}.
\end{align}
For the linear regression case, this approximation is in fact an equality.

\textbf{Derivation}.
Denote the maximum penalised likelihood estimate, and equivalently the posterior mode, by $\hat{\bs{\beta}}$. We can approximate the posterior by a Laplace approximation using a Taylor expansion of the log posterior around the mode. The Taylor expansion is given by:
\begin{align*}
    \log\pi(\bs{\beta}|\bs{y},\bs{\tau})&\approx \log\pi(\hat{\bs{\beta}}|\bs{y},\bs{\tau}) + (\bs{\beta}-\hat{\bs{\beta}})^T\nabla_\beta\log\pi(\hat{\bs{\beta}}|\bs{y},\bs{\tau}) \\
    &\qquad + \frac{1}{2}(\bs{\beta}-\hat{\bs{\beta}})^T\nabla_\beta^2\log\pi(\hat{\bs{\beta}}|\bs{y},\bs{\tau})(\bs{\beta}-\hat{\bs{\beta}})\\
    &=\log\pi(\hat{\bs{\beta}}|\bs{y},\bs{\tau}) + \frac{1}{2}(\bs{\beta}-\hat{\bs{\beta}})^T\nabla_\beta^2\log\pi(\hat{\bs{\beta}}|\bs{y},\bs{\tau})(\bs{\beta}-\hat{\bs{\beta}}),
\end{align*}
where the approximation is in fact an equality when linear regression is considered.
Taking the exponential on both sides leads to:
\begin{align*}
    \pi(\bs{\beta}|\bs{y},\bs{\tau})\overset{\cdot}{\propto} \exp\left(\frac{-1}{2}(\bs{\beta}-\hat{\bs{\beta}})^T\left[-\nabla_\beta^2\log\pi(\hat{\bs{\beta}}|\bs{y},\bs{\tau})\right](\bs{\beta}-\hat{\bs{\beta}})\right),
\end{align*}
where we use $\overset{\cdot}{\propto}$ to denote ``approximately proportional to''. So we can approximate the posterior with the following multivariate gaussian:
\begin{align*}
    \bs{\beta}|\bs{y},\bs{\tau}\overset{\cdot}{\sim} N\left(\hat{\bs{\beta}}, \left[-\nabla_\beta^2\log\pi(\hat{\bs{\beta}}|\bs{y},\bs{\tau})\right]^{-1}\right),
\end{align*}
where we use $\overset{\cdot}{\sim}$ to denote ``approximately distributed as". The posterior covariance matrix for a GLM is approximated by:
\begin{align*}
    \Cov_{\bs{\beta}|\bs{Y},\bs{\tau}}(\bs{\beta})\approx\left[-\nabla_\beta^2\log\pi(\hat{\bs{\beta}}|\bs{y},\bs{\tau})\right]^{-1} = \left[X^TW(\hat{\bs{\beta}})X + \Delta\right]^{-1}.
\end{align*}
which in turn we can write as, using Woodbury's matrix inversion identity, substituting $\tilde{X}=W^{-1/2}X\Delta^{-1/2}$ and the SVD of $\tilde{X}$:
\begin{align*}
    \left[X^TWX + \Delta\right]^{-1} &= \bs{\Delta}^{-1} - \bs{\Delta}^{-1}X^TW^{1/2}\left(I_{n\times n}+W^{1/2}X\bs{\Delta}^{-1}X^TW^{1/2}\right)^{-1}W^{1/2}X\bs{\Delta}^{-1}\\
    &=\bs{\Delta}^{-1} - \bs{\Delta}^{-1/2}\tilde{X}^T\left(I_{n\times n}+\tilde{X}\tilde{X}^T\right)^{-1}\tilde{X}\bs{\Delta}^{-1/2}\\
    &=\bs{\Delta}^{-1} - \bs{\Delta}^{-1/2}VDU^T\left(I_{n\times n}+UDV^TVDU^T\right)^{-1}UDV^T\bs{\Delta}^{-1/2}\\
    &=\bs{\Delta}^{-1} - \bs{\Delta}^{-1/2}VD^2\left(I_{n\times n}+D^2\right)^{-1}V^T\bs{\Delta}^{-1/2}.
\end{align*}
The marginal posterior standard deviations are given by the square root of the diagonal elements:
\begin{align*}
    \sqrt{\Var_{\bs{\beta}|\bs{Y},\bs{\tau}}(\beta_j)} &\approx \Delta_{jj}^{-1/2}\sqrt{1-[VD^2(D^2+I)^{-1}V^T]_{jj}}.
\end{align*}

\subsection{Unpenalised covariates}\label{ap:unpenalised}
We can group covariates that we do not want to penalise (e.g. an intercept) in a group, say group $\mathcal{G}_0$. Not penalising corresponds to a Bayesian prior with mean $\mu^\beta_0=0$ and $\tau^2_0=\infty$, and penalty $0$. 
Furthermore, for the matrix $C$ as defined in Equation \eqref{eq:Cv}, $[C]_{kl}=0$ for every $l\in\mathcal{G}_0$, $k\neq l$:
\begin{lemma}
Let $l\in\mathcal{G}_0$ be an unpenalised covariate without correlation with other covariates. Then, for $k\neq l$:
\begin{align}
 [C]_{kl}=\left[(X^T\tilde{W}X+\tilde{\Omega})^{-1}X^T\tilde{W}X\right]_{kl}=0,
\end{align}
and therefore also $[C]_{lk}=[C]_{kl}=0$.
\end{lemma}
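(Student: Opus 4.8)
The plan is to reduce everything to the algebraic identity $C = I - M^{-1}\tilde{\Omega}$, where $M := X^T\tilde{W}X + \tilde{\Omega}$. Indeed $X^T\tilde{W}X = M - \tilde{\Omega}$, so $C = M^{-1}(M-\tilde{\Omega}) = I - M^{-1}\tilde{\Omega}$. Since $\tilde{\Omega}$ is diagonal, $[M^{-1}\tilde{\Omega}]_{kl} = [M^{-1}]_{kl}\,[\tilde{\Omega}]_{ll}$, and for a covariate $l\in\mc{G}_0$ the prior variance is $\tau^2_0=\infty$, so $[\tilde{\Omega}]_{ll}=0$. Hence $[C]_{kl} = \delta_{kl}$, which is $0$ for $k\neq l$ (and $1$ for $k=l$). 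So the content of the lemma is essentially this one line; what actually needs care is only the bookkeeping around the invertibility of $M$.

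The subtlety is that when $p>n$ and $\tilde{\Omega}$ has a zero on its diagonal, $M$ need not be invertible, so one must first argue that $C$ is well defined, and this is exactly where the no-correlation hypothesis enters. Reorder the covariates so that $l$ is last, and write $x_j$ for the $j$-th column of $X$, $X_{-l}$ for $X$ with column $l$ deleted, and $\tilde{\Omega}_{-l}$ for the corresponding submatrix of $\tilde{\Omega}$. The assumption that $l$ is uncorrelated with the other covariates means $x_j^T\tilde{W}x_l = 0$ for all $j\neq l$ (exactly so in the linear case, where $\tilde{W}\propto I$). Then both $X^T\tilde{W}X$ and $M$ are block-diagonal for the partition into $\{1,\dots,p\}\setminus\{l\}$ and $\{l\}$: the $\{l\}$-block of $M$ is the scalar $x_l^T\tilde{W}x_l>0$ (no penalty contribution, since $l$ is unpenalised), and the complementary block is $X_{-l}^T\tilde{W}X_{-l} + \tilde{\Omega}_{-l}$, which is invertible because the penalty matrix is of full rank on the penalised covariates. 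Hence $M$ is invertible and $M^{-1}$ is block-diagonal for the same partition.

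From block-diagonality of $M^{-1}$ one gets $[M^{-1}]_{kl} = [M^{-1}]_{lk} = 0$ for every $k\neq l$, and since $C = M^{-1}X^T\tilde{W}X$ is a product of two block-diagonal matrices it is itself block-diagonal; thus $[C]_{kl} = [C]_{lk} = 0$ for all $k\neq l$, which is the stated claim (and, substituted into $A_\mu$, $A_\tau$ and $A_w$, it shows those estimating equations do not involve the unpenalised covariate). The only genuine obstacle is the invertibility argument just sketched; alternatively one can sidestep it by keeping $\tau^2_0$ finite, noting $[C]_{kl} = -[M^{-1}]_{kl}/\tau^2_0$, and letting $\tau^2_0\to\infty$ — the no-correlation assumption keeps $[M^{-1}]_{kl}$ bounded (indeed equal to $0$), so the limit vanishes.
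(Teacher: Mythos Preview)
Your proof is correct and shares the paper's core step: rewriting $C=I-(X^T\tilde{W}X+\tilde{\Omega})^{-1}\tilde{\Omega}$ and then reading off $[C]_{kl}=0$ from $[\tilde{\Omega}]_{ll}=0$. The difference lies in how you handle the reverse entry $[C]_{lk}$ and the invertibility of $M$. The paper simply asserts that $C$ is symmetric ``since it is a product of sums of symmetric matrices'' and concludes $[C]_{lk}=[C]_{kl}$; but a product of symmetric matrices need not be symmetric, so that justification is not solid on its own. You instead invoke the no-correlation hypothesis to make $M$ (hence $M^{-1}$ and $C$) block-diagonal for the partition $\{l\}\cup\{1,\dots,p\}\setminus\{l\}$, which cleanly yields both $[C]_{kl}=0$ and $[C]_{lk}=0$ and simultaneously settles the invertibility of $M$ in the $p>n$ regime. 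So your argument actually \emph{uses} the hypothesis that the paper states but never explicitly invokes, and it patches the symmetry gap; the paper's route is shorter but leans on an unjustified step.
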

\begin{proof}
First, note that the matrix $C$ is equal to:
\begin{align*}
    C&=(X^T\tilde{W}X+\tilde{\Omega})^{-1}X^T\tilde{W}X=(X^T\tilde{W}X+\tilde{\Omega})^{-1}(X^T\tilde{W}X + \tilde{\Omega} - \tilde{\Omega}) \\
    &= I-(X^T\tilde{W}X+\tilde{\Omega})^{-1}\tilde{\Omega}.
\end{align*}
So, for $k\neq l$:
\begin{align*}
    [C]_{kl} &= -\left[(X^T\tilde{W}X+\tilde{\Omega})^{-1}\tilde{\Omega}\right]_{kl}\\
    &= -\sum_{i=1}^p\left[(X^T\tilde{W}X+\tilde{\Omega})^{-1}\right]_{ki}\left[\tilde{\Omega}\right]_{il}\\
    &= 0,
\end{align*}
where the latter equation holds since the $l^{\text{th}}$ column of the precision matrix corresponding to an unpenalised variable contains only $0$.
Note that $C$ is symmetric since it is a product of sums of symmetrix matrices. Therefore we can conclude that $[C]_{lk}=[C]_{kl}=0$.
\end{proof}

As a result from this lemma and Equations \eqref{eq:linsysmu2},\eqref{eq:linsystau2}, we see that:
\begin{align}
    [A_\mu]_{g0}=[A_\mu]_{0g}=0,\ [A_\tau]_{g0}=[A_\tau]_{0g}=0,\ \forall g=1,..,G.
\end{align}
Therefore we can compute the moment estimates using the block matrix of $A_\mu$ and $A_\tau$ corresponding to the penalised groups only. 
So, after we have computed $C$ using both penalised and unpenalised covariates, we only need the rows and columns of $C$ corresponding to penalised covariates to obtain the moment estimates.

\section{Data applications}\label{ap:applications}
\subsection{Predicting therapy response in colorectal cancer}\label{ap:appmiRNA}
The results of the first data application using miRNA expression are discussed in Section \ref{par:appmiRNA}. Here we provide mentioned additional figures. 
The group weights of the other groupings are shown in Figure \ref{fig:weightsmiRNArest}. 
Figure \ref{fig:weightsmiRNAabunsd} shows the performance of \texttt{ecpc} in the dense setting and covariate sparse setting when abundance and standard deviation are discretised in $5,10$ or $20$ groups. The performances are comparable, with the model based on $20$ groups in abundance and standard deviation performing slightly better in the dense setting, and slightly worse in the sparse setting. 
Figure \ref{fig:miRNAheavytails} shows the absolute values of the estimated regression coefficients for \texttt{ecpc} and \texttt{ordinary ridge}. The density plot is more heavy-tailed for \texttt{ecpc}, which facilitates posterior selection.
The performance of the group sparse models is shown in Figure \ref{fig:AUCgroupsparsemiRNA}. 
Here, \texttt{ecpc} is combined with a lasso penalty on the group level on all groups of the five groupings to obtain a group sparse model. 
Group lasso uses a latent overlapping group (LOG) penalty \citep{jacob2009group,yan2017hierarchical} on all groups of the first three co-data sources and the leaf groups in the tree of the \texttt{FDR1} and \texttt{FDR2} groupings, without distinguishing between co-data sources.
Hierarchical lasso uses a LOG penalty on all groups of all co-data sources. 
For the FDR groupings, the implied hierarchical constraints are that covariates in an FDR group can be included only when all covariates in the groups with lower FDRs are included as well \citep{yan2017hierarchical}.
\texttt{ecpc} adequately learns from co-data and outperforms group lasso and hierarchical lasso. 
Then, Figure \ref{fig:AUCposthoc} shows the AUC performance of various post-hoc selection methods on the cross-validation folds.
Lastly, Figure \ref{fig:AUCsubsamplesmiRNA} shows the AUC performance of \texttt{ecpc} and \texttt{elastic net} with $\alpha=0.3$ and $\alpha=0.8$ on the test sets corresponding to the 50 subsamples used for Figure \ref{fig:overlapmiRNA} in Section \ref{par:appmiRNA} to assess covariate selection stability.

\begin{figure}
    \centering
    \begin{subfigure}[c]{0.45\textwidth}
    \centering
    \includegraphics[width=\linewidth]{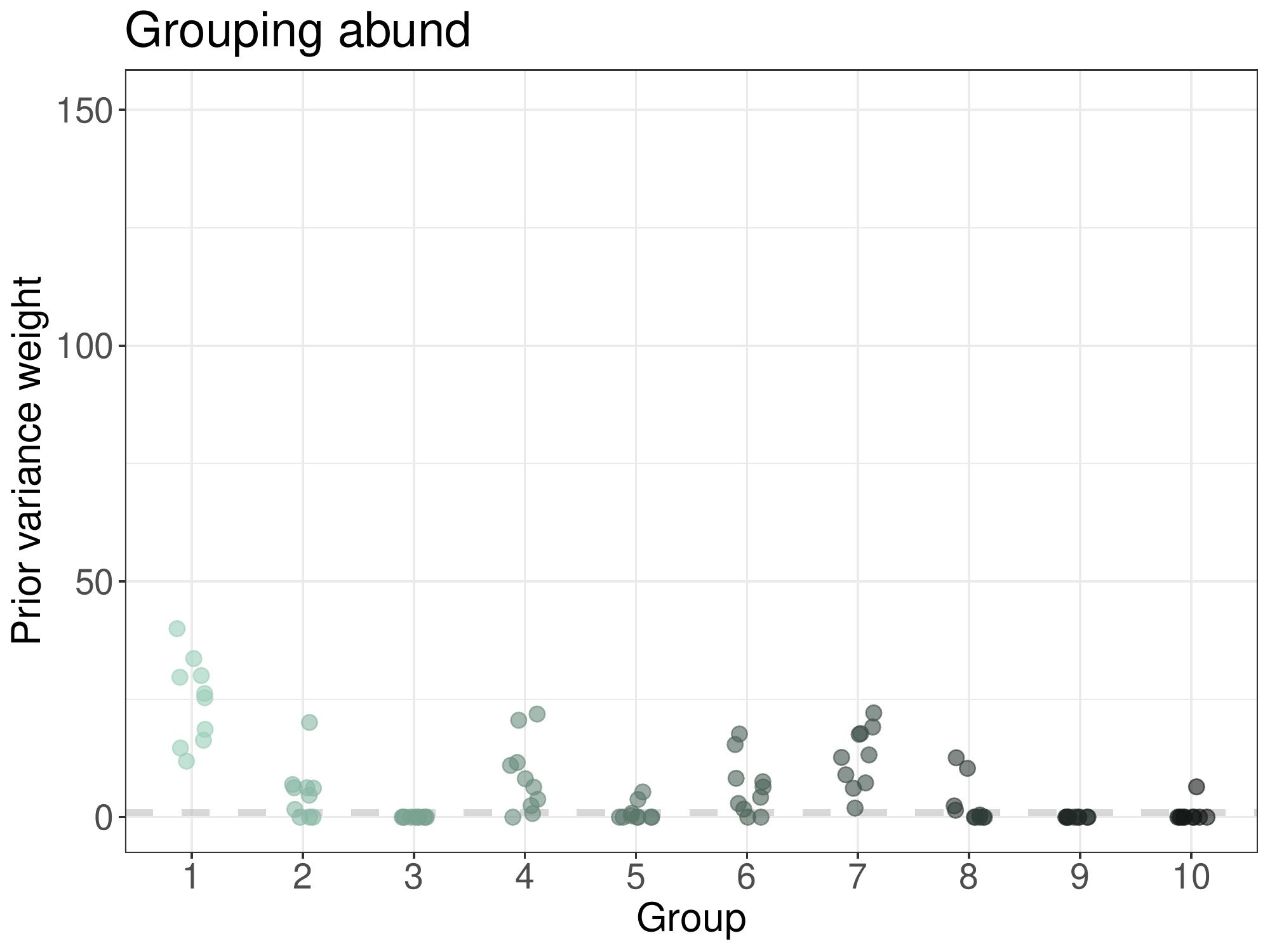}
    \end{subfigure}
    \begin{subfigure}[c]{0.45\textwidth}
    \centering
    \includegraphics[width=\linewidth]{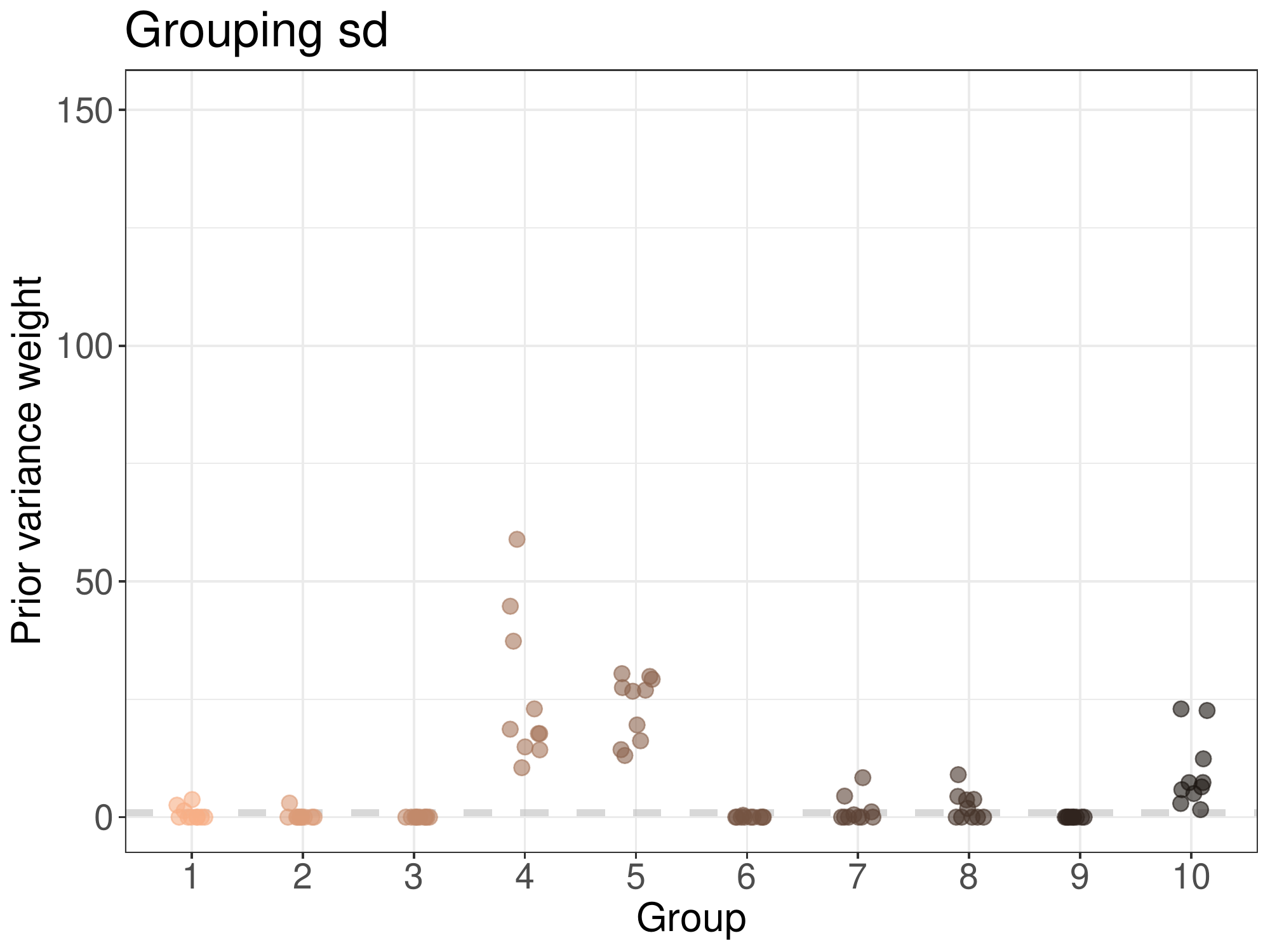}
    \end{subfigure}
        \begin{subfigure}[c]{0.45\textwidth}
    \centering
    \includegraphics[width=\linewidth]{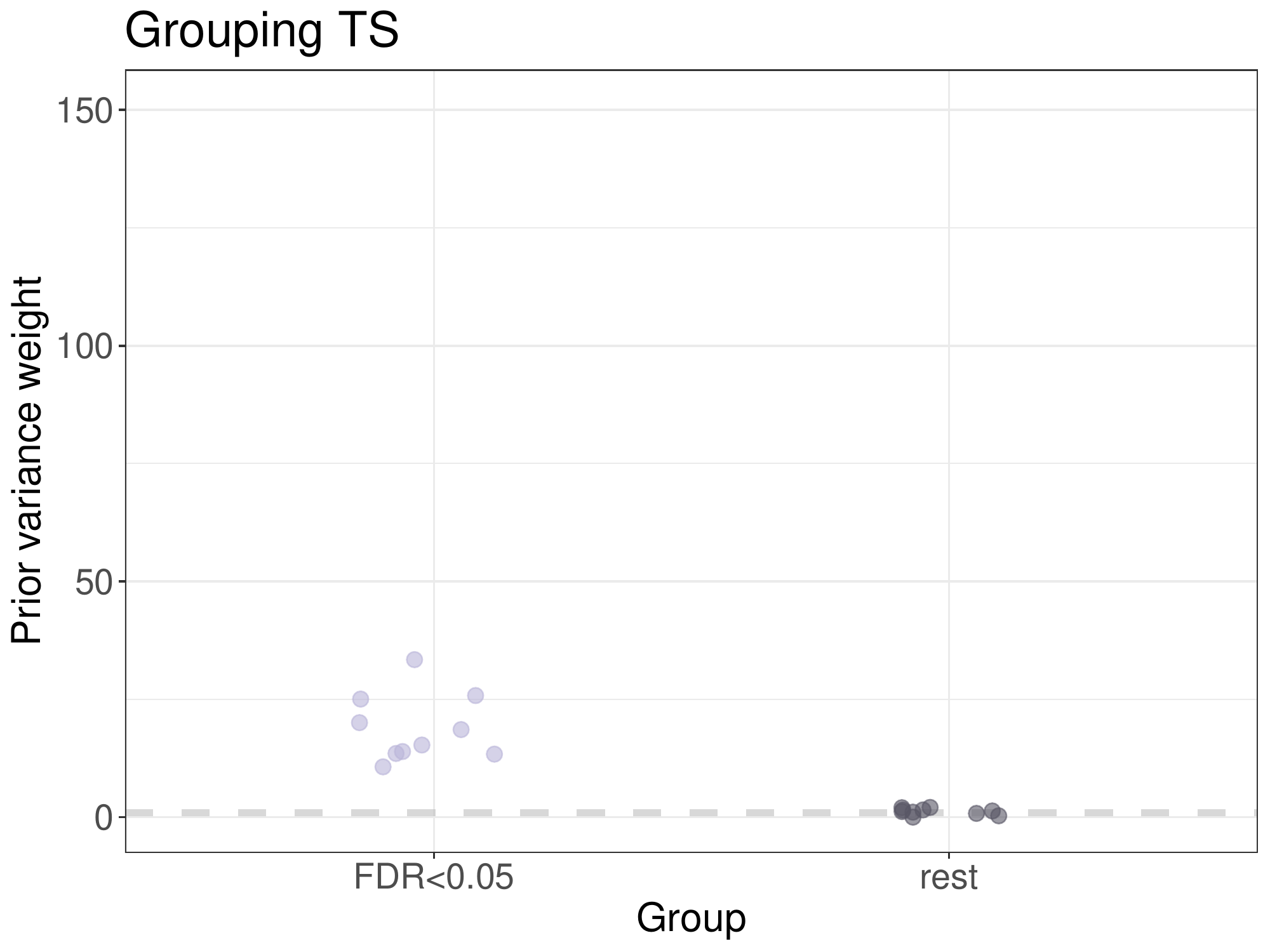}
    \end{subfigure}
    \begin{subfigure}[c]{0.45\textwidth}
    \centering
    \includegraphics[width=\linewidth]{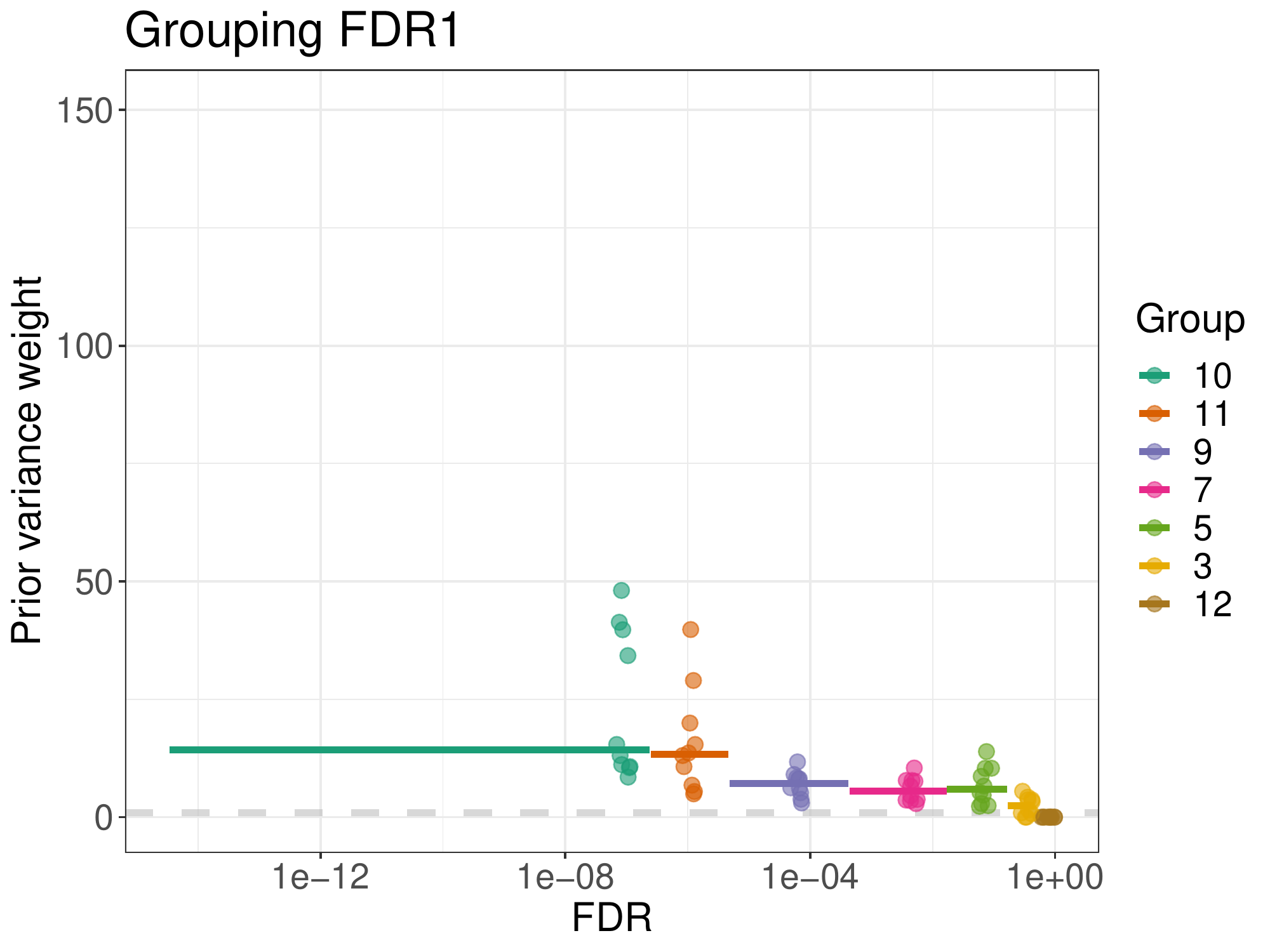}
    \end{subfigure}
    \caption{Results of 10-fold CV in miRNA data example. Estimated local variance for the first four groupings.}
    \label{fig:weightsmiRNArest}
\end{figure}

\begin{figure}
    \centering
    \begin{subfigure}[c]{0.45\textwidth}
    \centering
    \includegraphics[width=\linewidth]{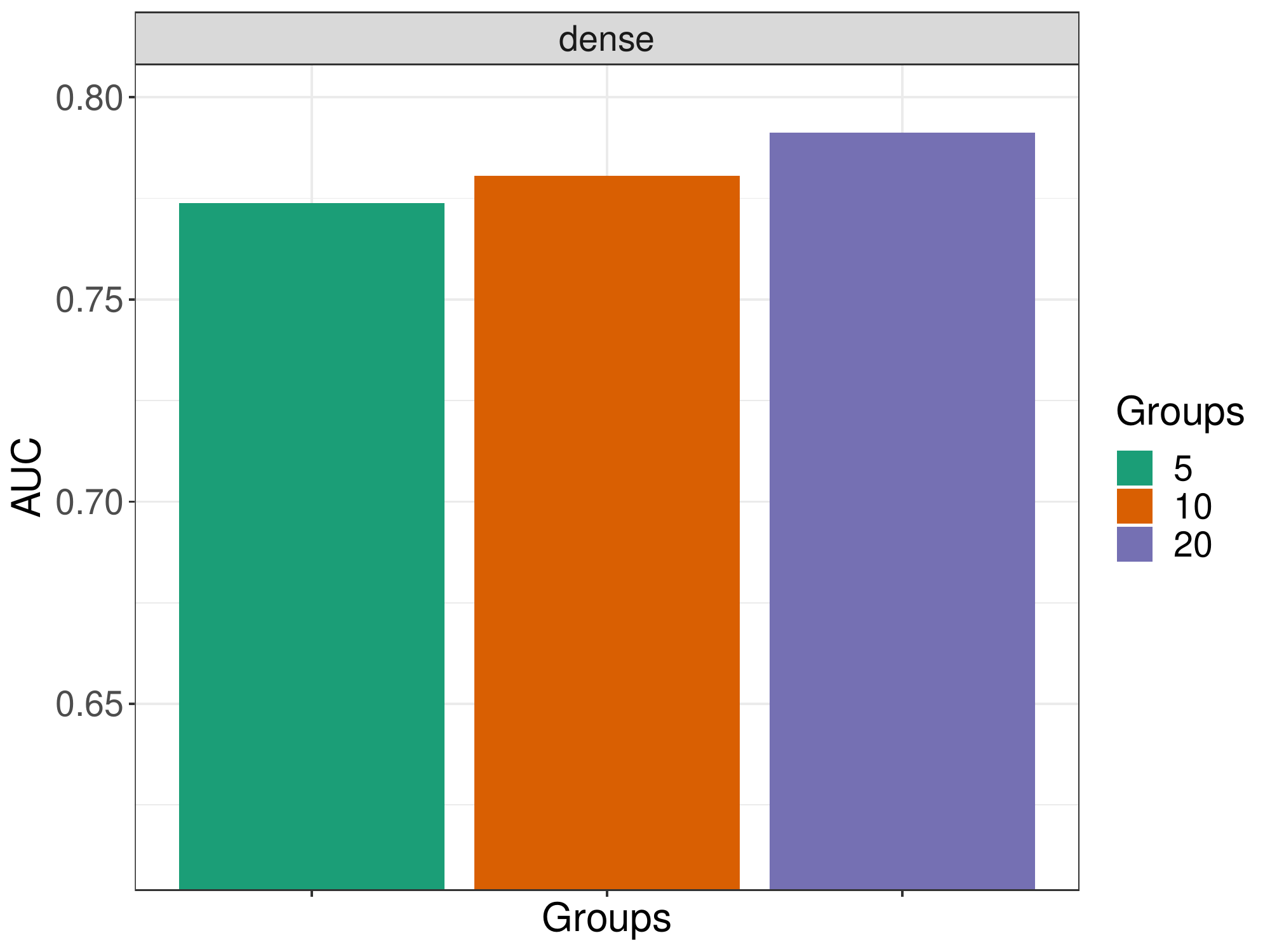}
    \end{subfigure}
    \begin{subfigure}[c]{0.45\textwidth}
    \centering
    \includegraphics[width=\linewidth]{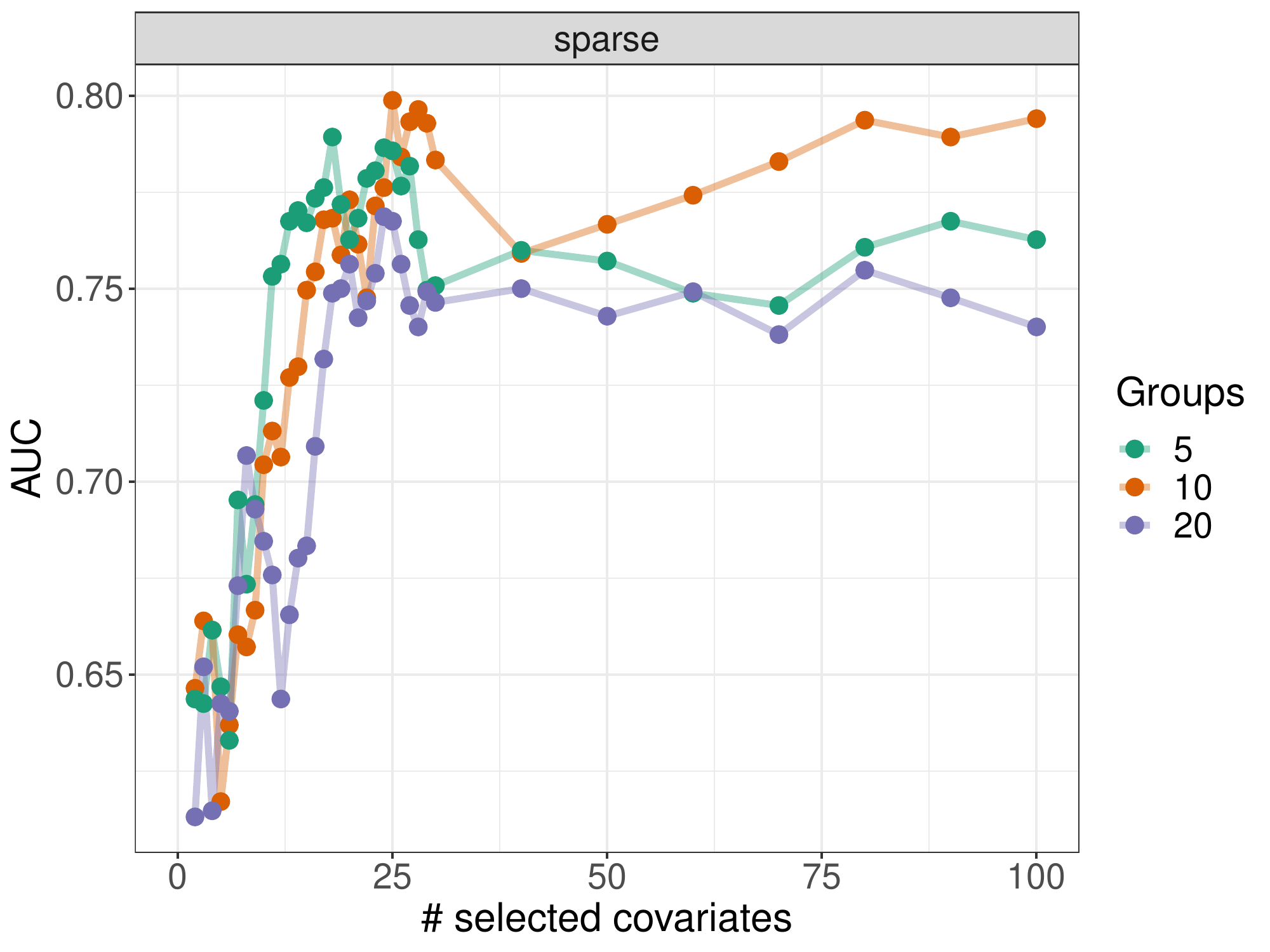}
    \end{subfigure}
    \caption{AUC performance in 10-fold CV as in miRNA data example when abundance and standard deviation are discretised in 5, 10 or 20 groups.}
    \label{fig:weightsmiRNAabunsd}
\end{figure}

% \begin{figure}
%     \centering
%     \begin{subfigure}[c]{0.45\textwidth}
%     \centering
%     \includegraphics[width=\linewidth]{}
%     \end{subfigure}
%     \begin{subfigure}[c]{0.45\textwidth}
%     \centering
%     \includegraphics[width=\linewidth]{}
%     \end{subfigure}
%         \begin{subfigure}[c]{0.45\textwidth}
%     \centering
%     \includegraphics[width=\linewidth]{}
%     \end{subfigure}
%     \begin{subfigure}[c]{0.45\textwidth}
%     \centering
%     \includegraphics[width=\linewidth]{}
%     \end{subfigure}
%     \caption{Estimated group weights in 10-fold CV in miRNA data example for the groupings based on abundance and standard deviation, discretised in 5 or 20 groups.}
%     \label{fig:weightsmiRNAabunsd}
% \end{figure}

\begin{figure}
    \centering
    \begin{subfigure}[c]{0.45\textwidth}
    \centering
    \includegraphics[width=\linewidth]{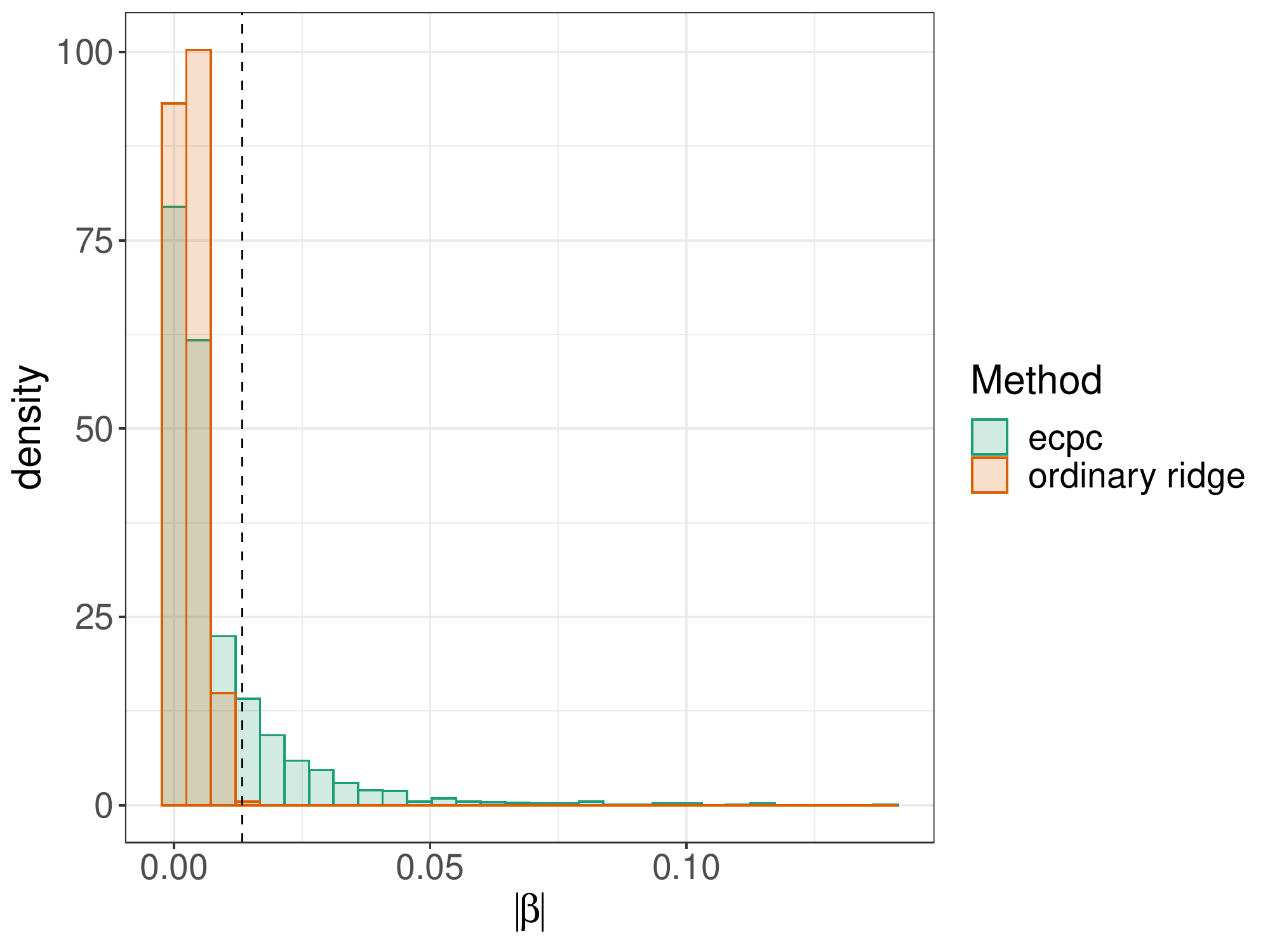}
    \end{subfigure}
    \begin{subfigure}[c]{0.45\textwidth}
    \centering
    \includegraphics[width=\linewidth]{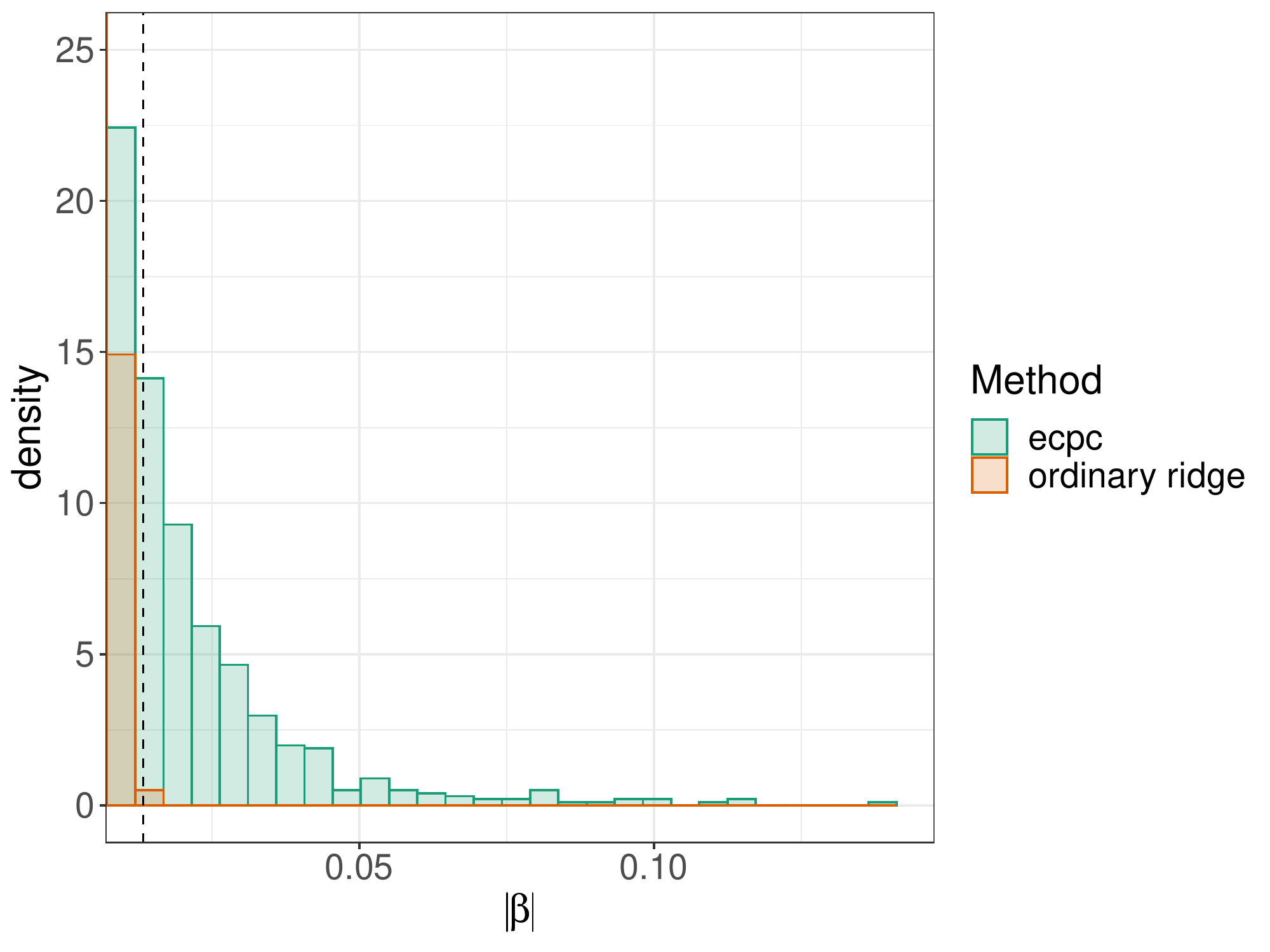}
    \end{subfigure}
    \caption{miRNA data example. Left: histogram and density plot of absolute value of estimated regression coefficients using \texttt{ecpc} or \texttt{ordinary ridge}. Right: histogram of highest 0.1 quantile of the absolute value of the regression coefficients. \texttt{ecpc} results in more heavy-tailed distributed estimates compared to \texttt{ordinary ridge}.}
    \label{fig:miRNAheavytails}
\end{figure}

\begin{figure}
    \centering
    \includegraphics[width=0.45\linewidth]{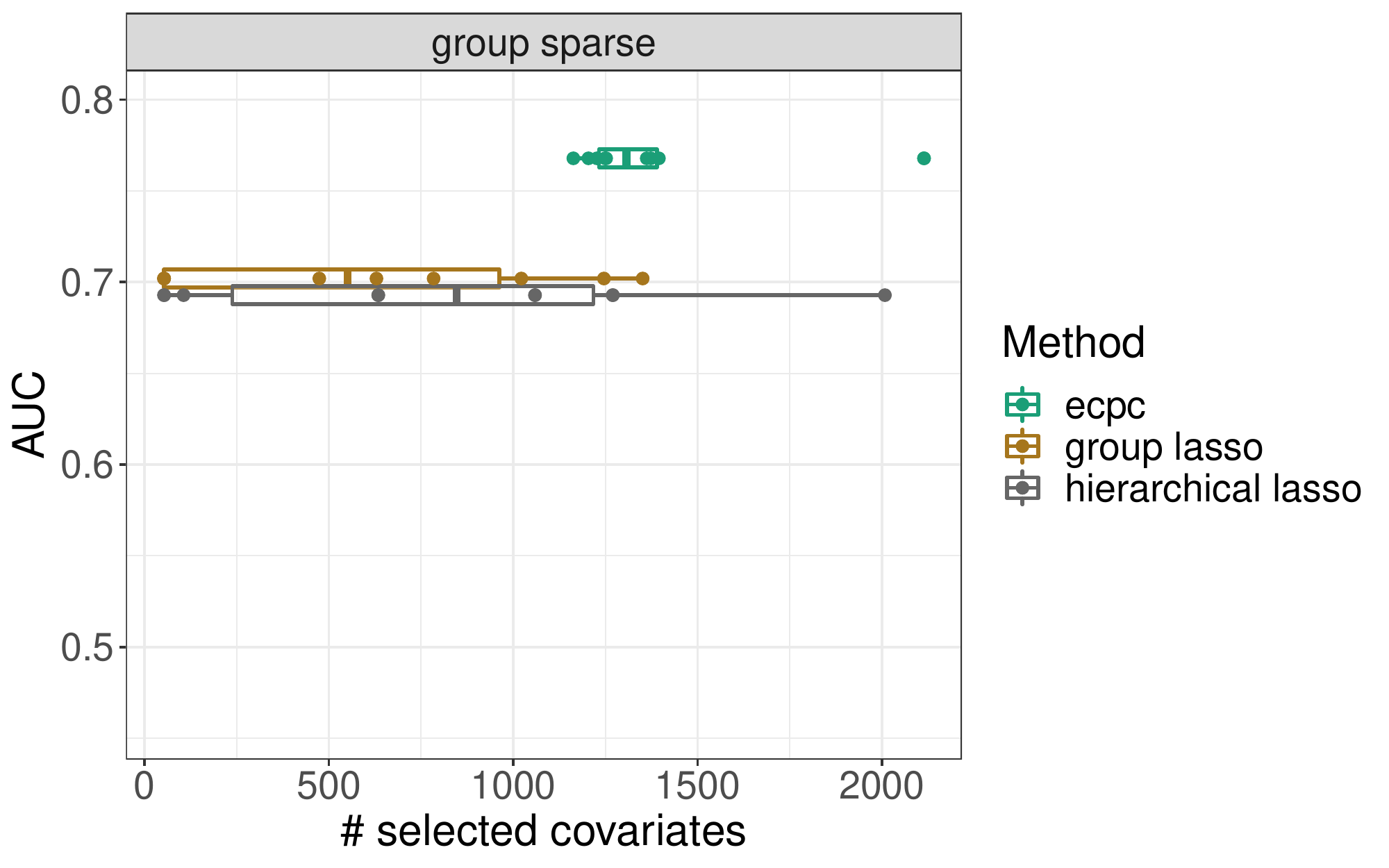}
    \caption{Results of 10-fold CV in miRNA data example. AUC in various group sparse models, with the boxplot and points illustrating the variance in selected number of variables in the folds.}
    \label{fig:AUCgroupsparsemiRNA}
\end{figure}

\begin{figure}
    \centering
    \includegraphics[width=0.55\linewidth]{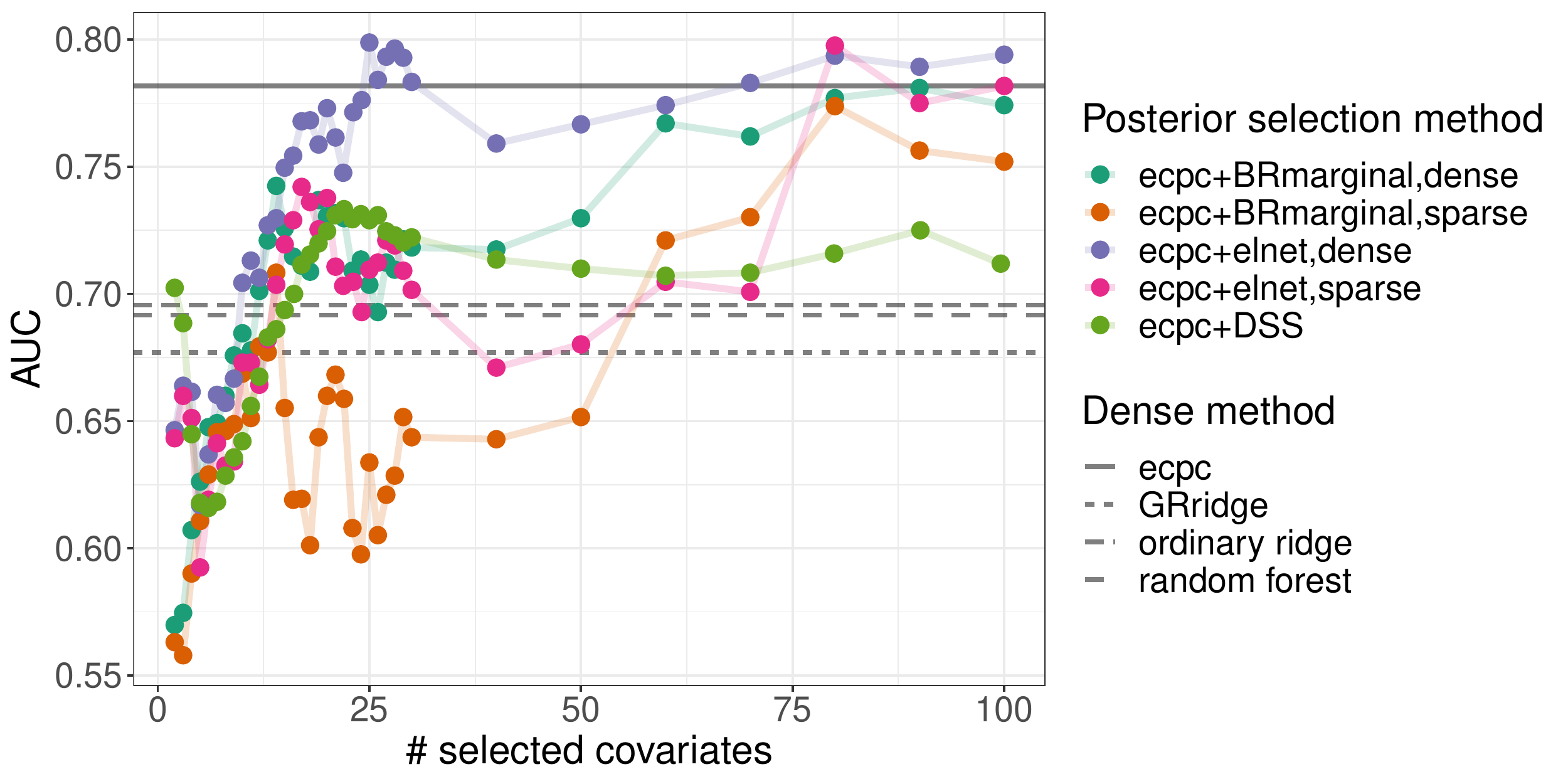}
    \caption{Results of 10-fold CV in miRNA data example. AUC for sparse models using various post-hoc selection methods.}
    \label{fig:AUCposthoc}
\end{figure}

\begin{figure}
    \centering
    \begin{subfigure}[c]{0.45\textwidth}
    \centering
    \includegraphics[width=\linewidth]{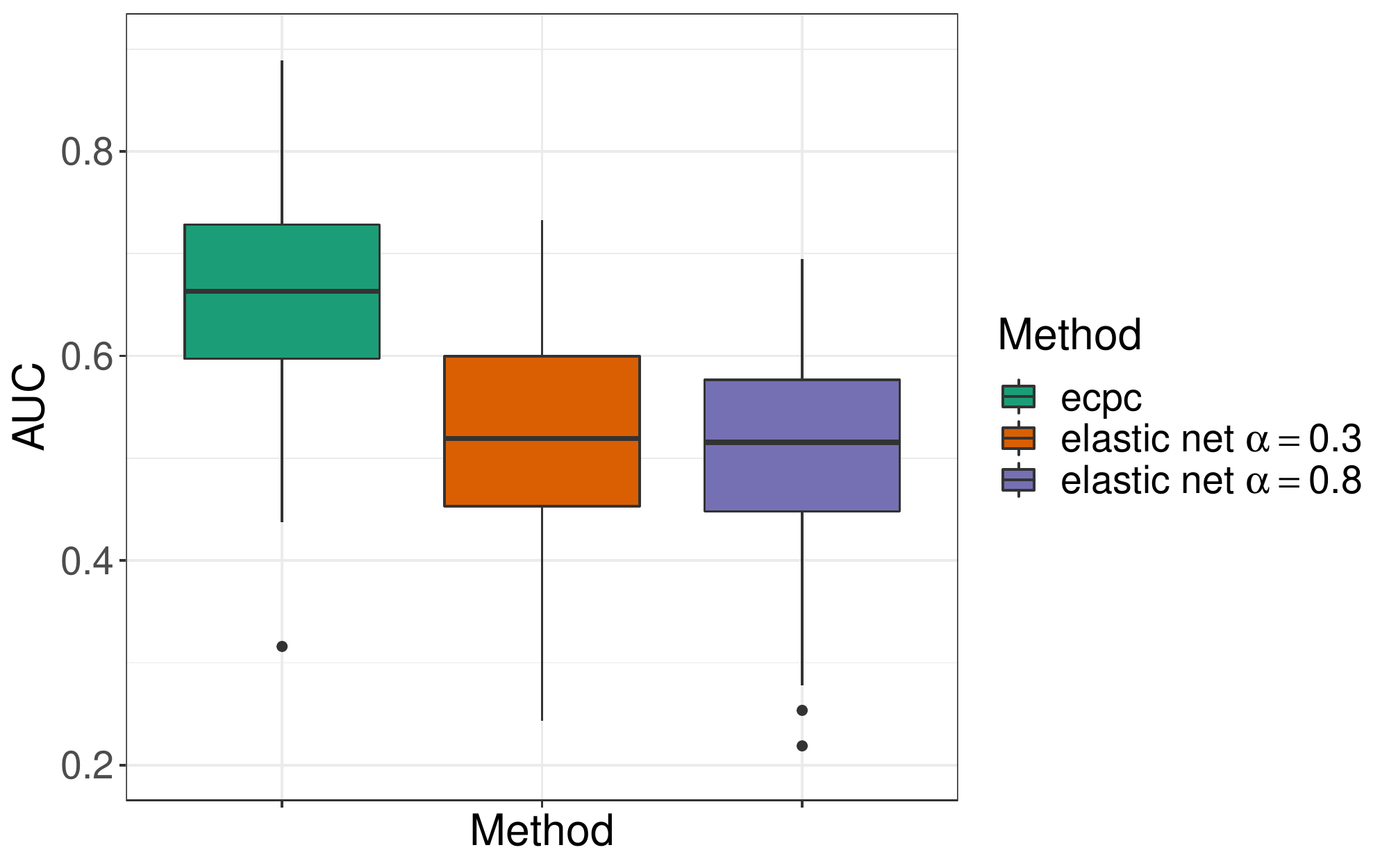}
    \end{subfigure}
    \begin{subfigure}[c]{0.45\textwidth}
    \centering
    \includegraphics[width=\linewidth]{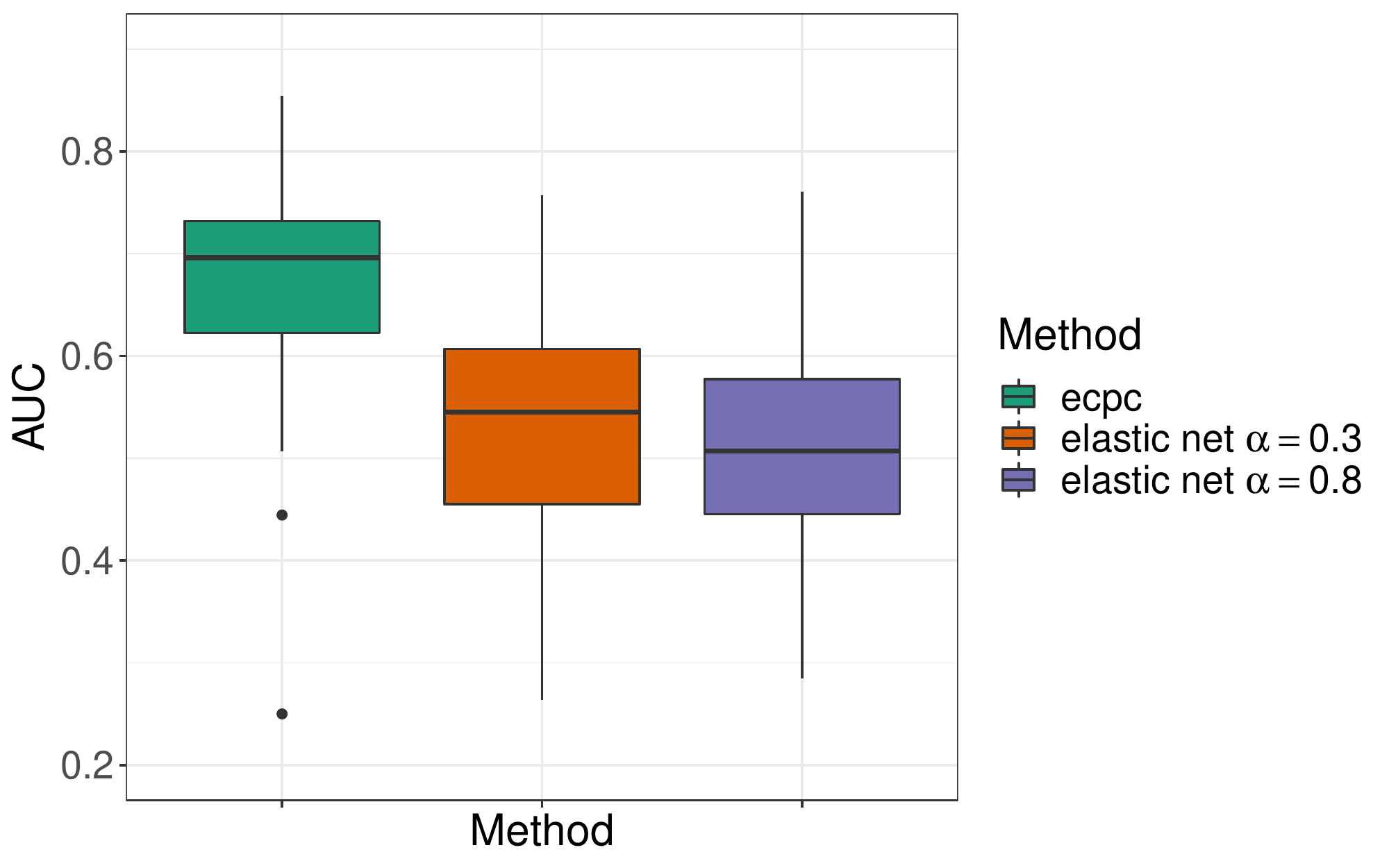}
    \end{subfigure}
    \caption{Results based on 50 stratified subsamples and corresponding test sets in miRNA data example. Boxplot of the AUC performance of \texttt{ecpc}, \texttt{elastic net} with $\alpha=0.3$ and $\alpha=0.8$ on the test set based on selections of 25 covariates (left) or 50 covariates (right) in each subsample.}
    \label{fig:AUCsubsamplesmiRNA}
\end{figure}

\subsection{Classifying cervical cancer stage}\label{ap:appVerlaat}
The results of the second data application using methylation data are discussed in Section \ref{par:appVerlaat}. Here we provide all additional figures. 
Figure \ref{fig:codataVerlaat} illustrates the used co-data groupings. 
Figure \ref{fig:weightsVerlaat} shows the estimated grouping weights and group weights in the 20 folds of the cross-validation. 
Figure \ref{fig:Verlaatheavytails} shows the absolute values of the estimated regression coefficients for \texttt{ecpc} and \texttt{ordinary ridge}. 
%The density plot is more heavy-tailed for \texttt{ecpc}, which facilitates posterior selection.
Figure \ref{fig:AUCVerlaatgroupsparse} shows the performance in terms of AUC for various group sparse models, and for various post-hoc selection methods in Figure \ref{fig:AUCposthocVerlaat}.
The group lasso and hierarchical lasso use a LOG penalty similar as used in the first data application described above.
Then, Figure \ref{fig:AUCsubsamplesVerlaat} shows boxplots of the AUC performance on the test sets corresponding to 50 subsamples, of \texttt{ecpc} and \texttt{elastic net} with $\alpha=0.3$ and $\alpha=0.8$ when $25$ or $50$ covariates are selected.
Lastly, Figure \ref{fig:overlapVerlaat} shows histograms for the amount of overlap in sets of selected covariates in these subsamples.
        
\begin{figure}
    \centering
    \begin{subfigure}[c]{0.45\textwidth}
    \centering
    \includegraphics[width=\linewidth]{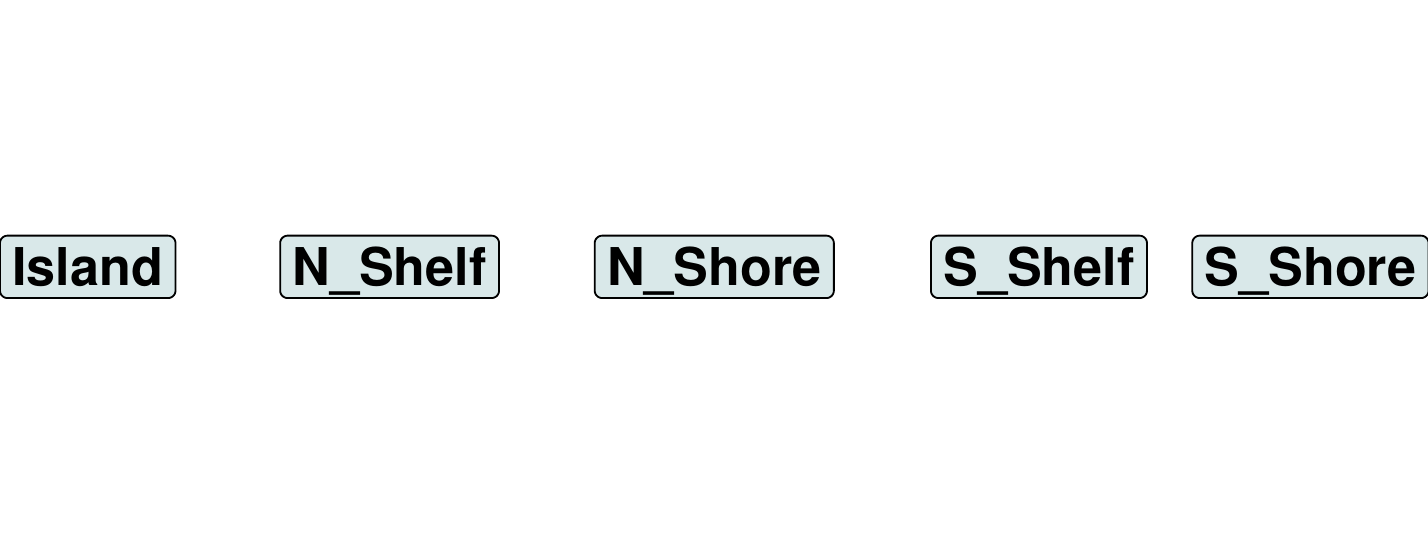}
    \end{subfigure}
    \begin{subfigure}[c]{0.45\textwidth}
    \centering
    \includegraphics[width=\linewidth]{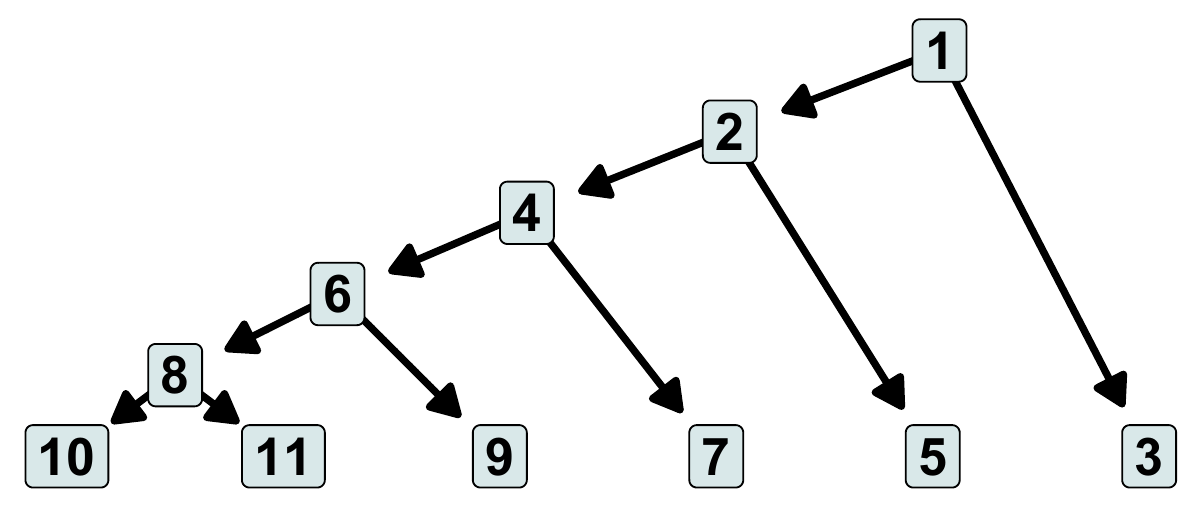}
    \end{subfigure}
    \caption{Illustration of the co-data groupings used the Verlaat data. Left: \texttt{CpG-islands}, five non-overlapping groups ordered in distance to the nearest CpG-island. Right: \texttt{p-values}, groups on the left correspond to lower p-values and are split recursively into two groups. The hierarchical structure on the groups is used in the extra level of shrinkage to find a discretisation that fits the data well as described in Section \ref{par:continuous}.}
    \label{fig:codataVerlaat}
\end{figure}

\begin{figure}
    \centering
    \begin{subfigure}[c]{0.32\textwidth}
    \centering
    \includegraphics[width=\linewidth]{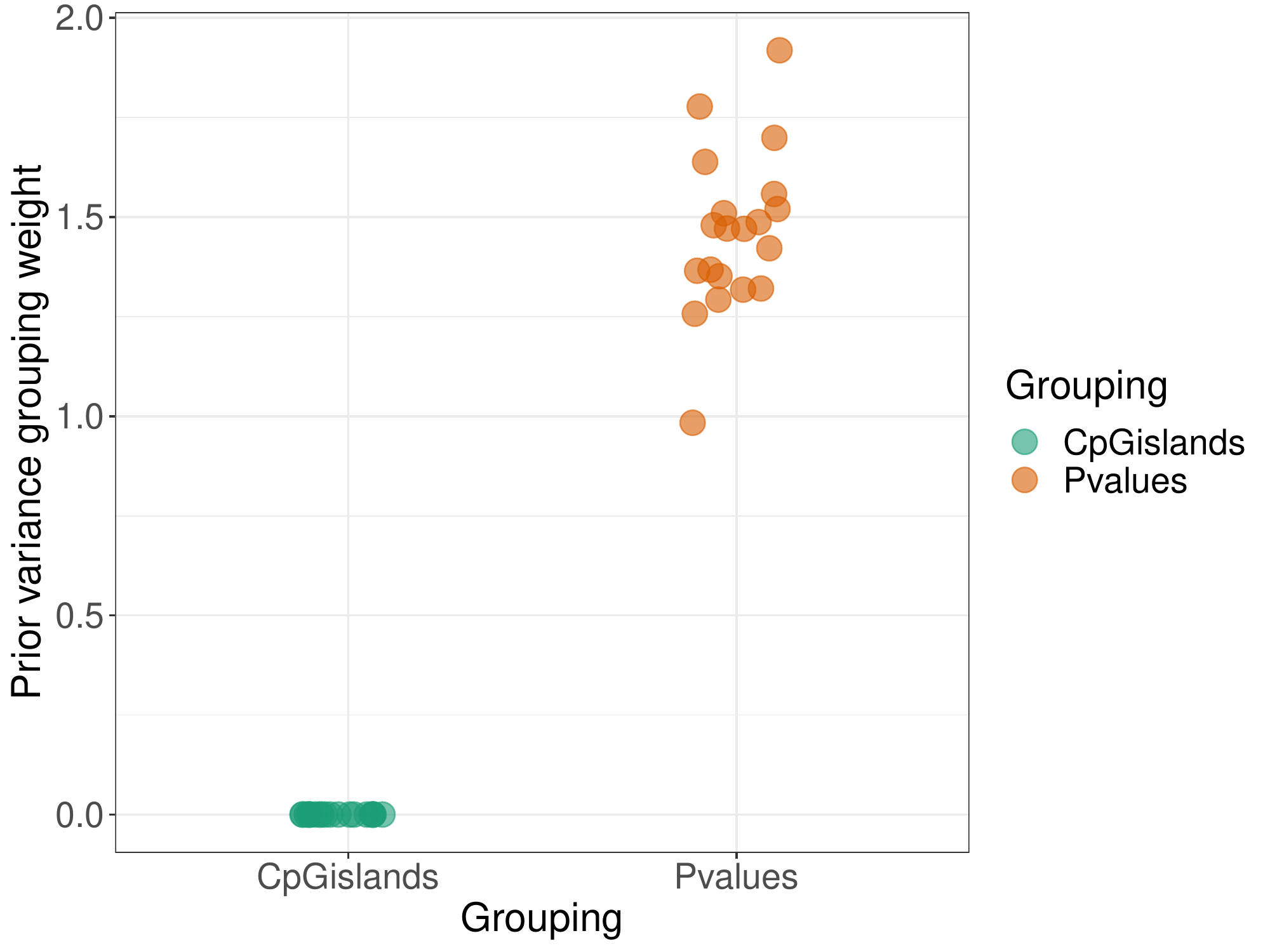}
    \end{subfigure}
    \begin{subfigure}[c]{0.32\textwidth}
    \centering
    \includegraphics[width=\linewidth]{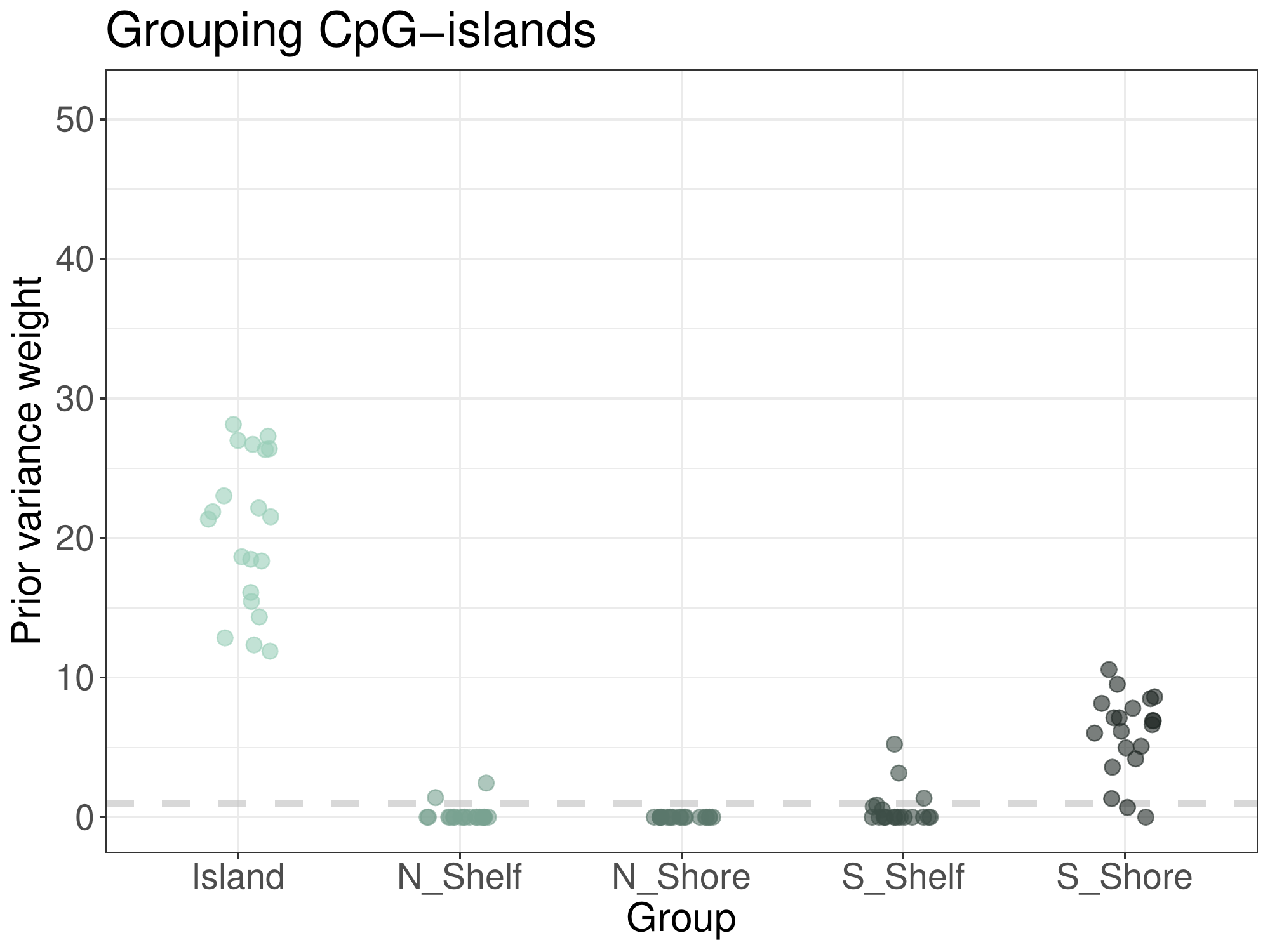}
    \end{subfigure}
    \begin{subfigure}[c]{0.32\textwidth}
    \centering
    \includegraphics[width=\linewidth]{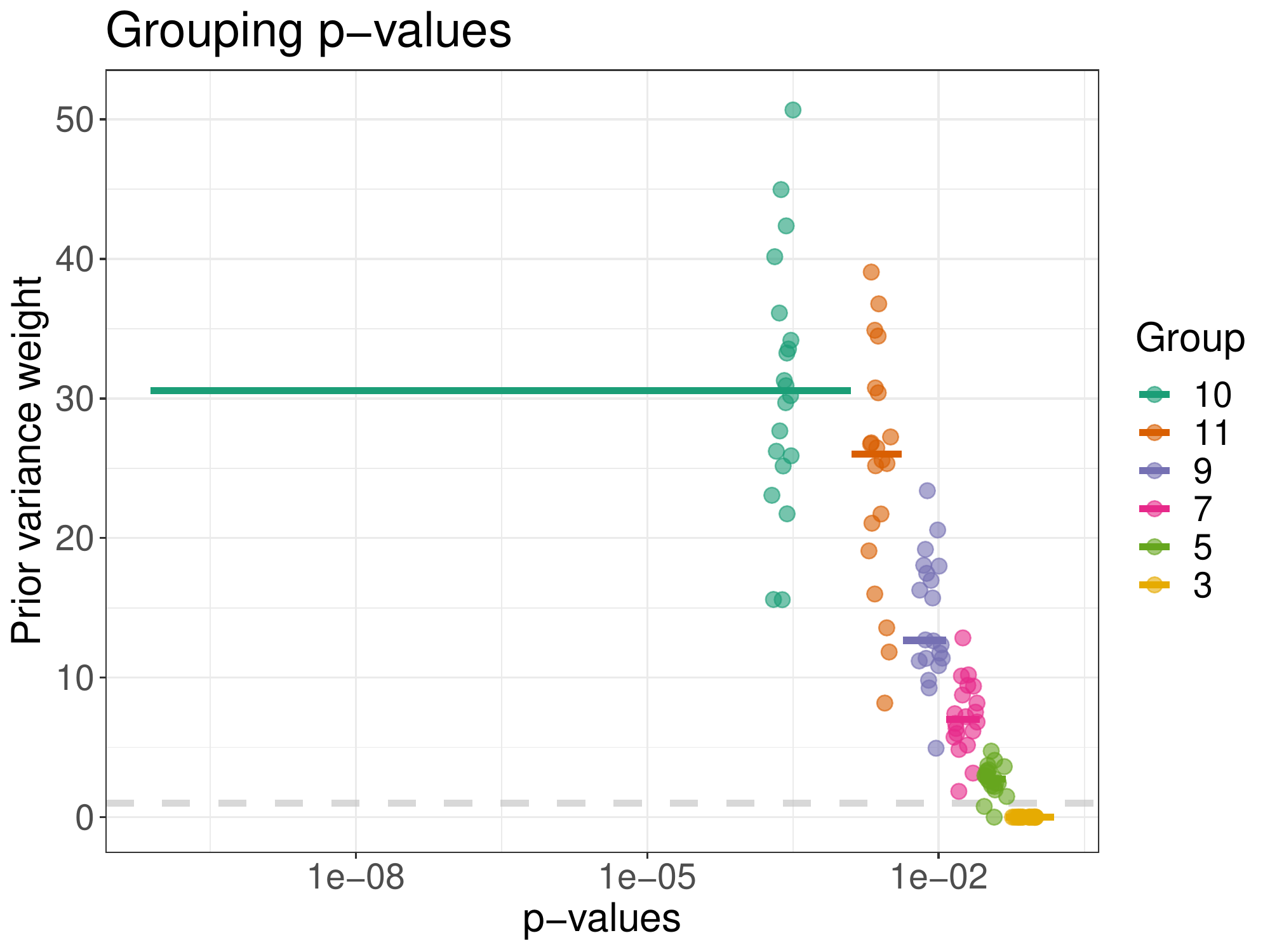}
    \end{subfigure}
    \caption{Results of 20-fold CV in Verlaat data example. Left: estimated co-data grouping weights. Middle: estimated group weights in \texttt{CpG-islands} grouping. Right: estimated local variance in \texttt{p-values} grouping; the median is shown from covariates in the leaf groups of the hierarchical tree illustrated in Figure \ref{fig:codataVerlaat} (horizontal line ranging from the minimum to maximum p-value in that group), and the corresponding estimates in the folds are shown (points, jittered along the median p-value in that group). A larger prior variance corresponds to a smaller penalty.}
    \label{fig:weightsVerlaat}
\end{figure}

\begin{figure}
    \centering
    \begin{subfigure}[c]{0.45\textwidth}
    \centering
    \includegraphics[width=\linewidth]{Figures/Paper1/FigmiRNADensityBetas.pdf}
    \end{subfigure}
    \begin{subfigure}[c]{0.45\textwidth}
    \centering
    \includegraphics[width=\linewidth]{Figures/Paper1/FigmiRNADensityBetas2.pdf}
    \end{subfigure}
    \caption{Verlaat data example. Left: histogram and density plot of absolute value of estimated regression coefficients using \texttt{ecpc} or \texttt{ordinary ridge}. Right: histogram of highest 0.1 quantile of the absolute value of the regression coefficients. \texttt{ecpc} results in more heavy-tailed distributed estimates compared to \texttt{ordinary ridge}.}
    \label{fig:Verlaatheavytails}
\end{figure}

\begin{figure}
    \centering
    \begin{subfigure}[c]{0.45\textwidth}
    \centering
    \includegraphics[width=\linewidth]{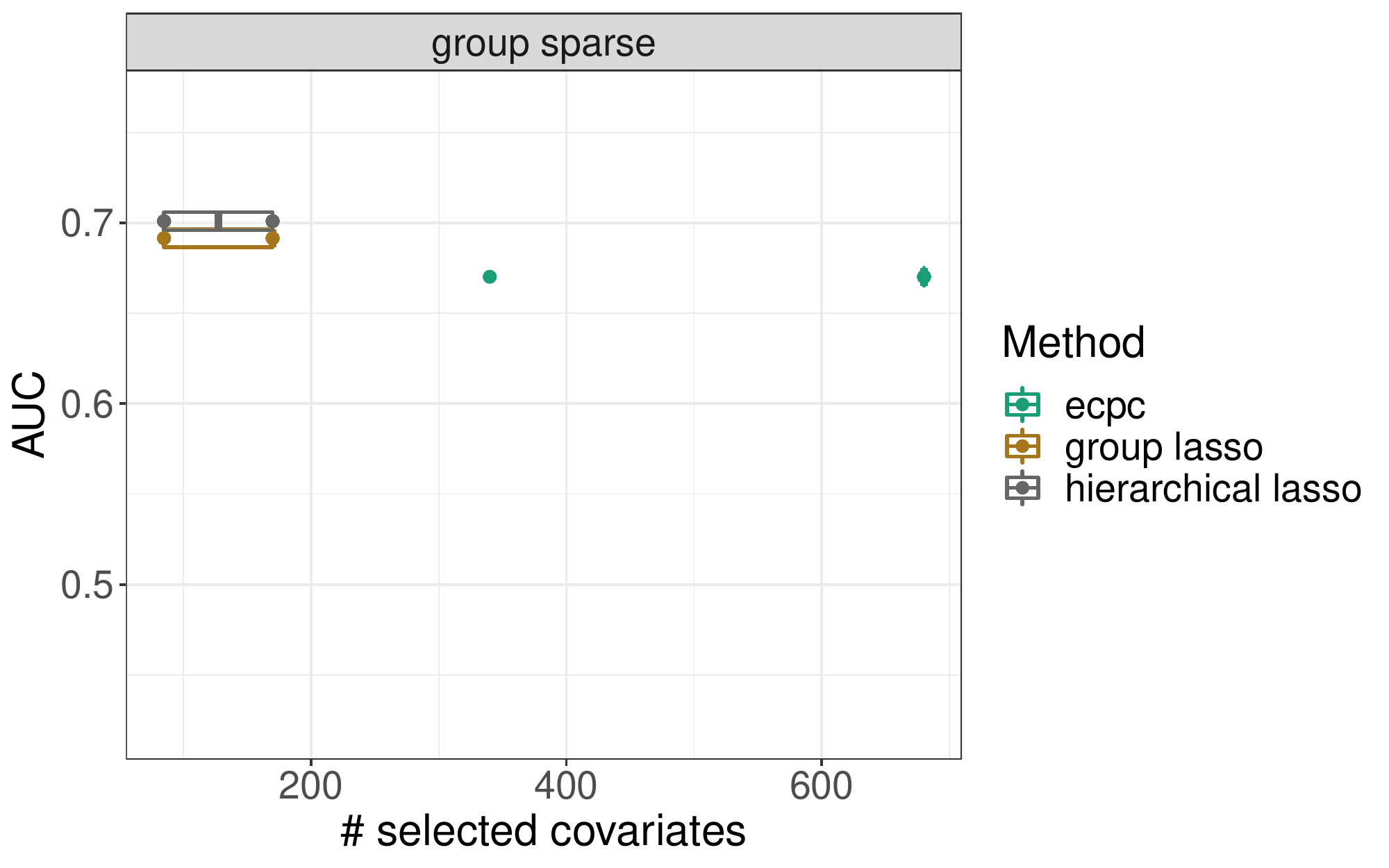}
    \end{subfigure}
    \caption{Results of 20-fold CV in Verlaat data example. AUC in various group sparse models}
    \label{fig:AUCVerlaatgroupsparse}
\end{figure}

\begin{figure}
    \centering
    \includegraphics[width=0.55\linewidth]{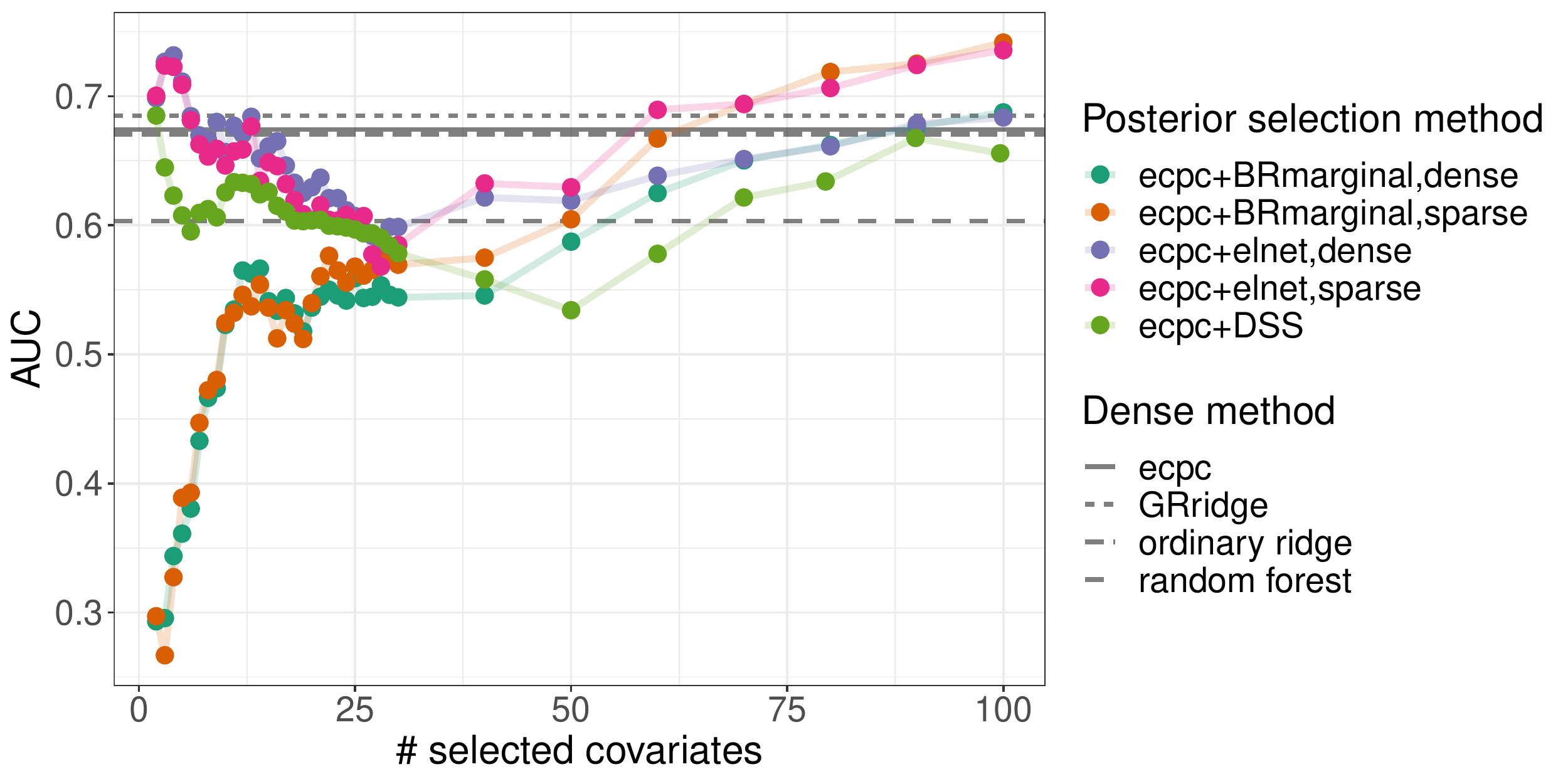}
    \caption{Results of 20-fold CV in Verlaat data example. AUC for sparse models using various post-hoc selection methods.}
    \label{fig:AUCposthocVerlaat}
\end{figure}

\begin{figure}
    \centering
    \begin{subfigure}[c]{0.45\textwidth}
    \centering
    \includegraphics[width=\linewidth]{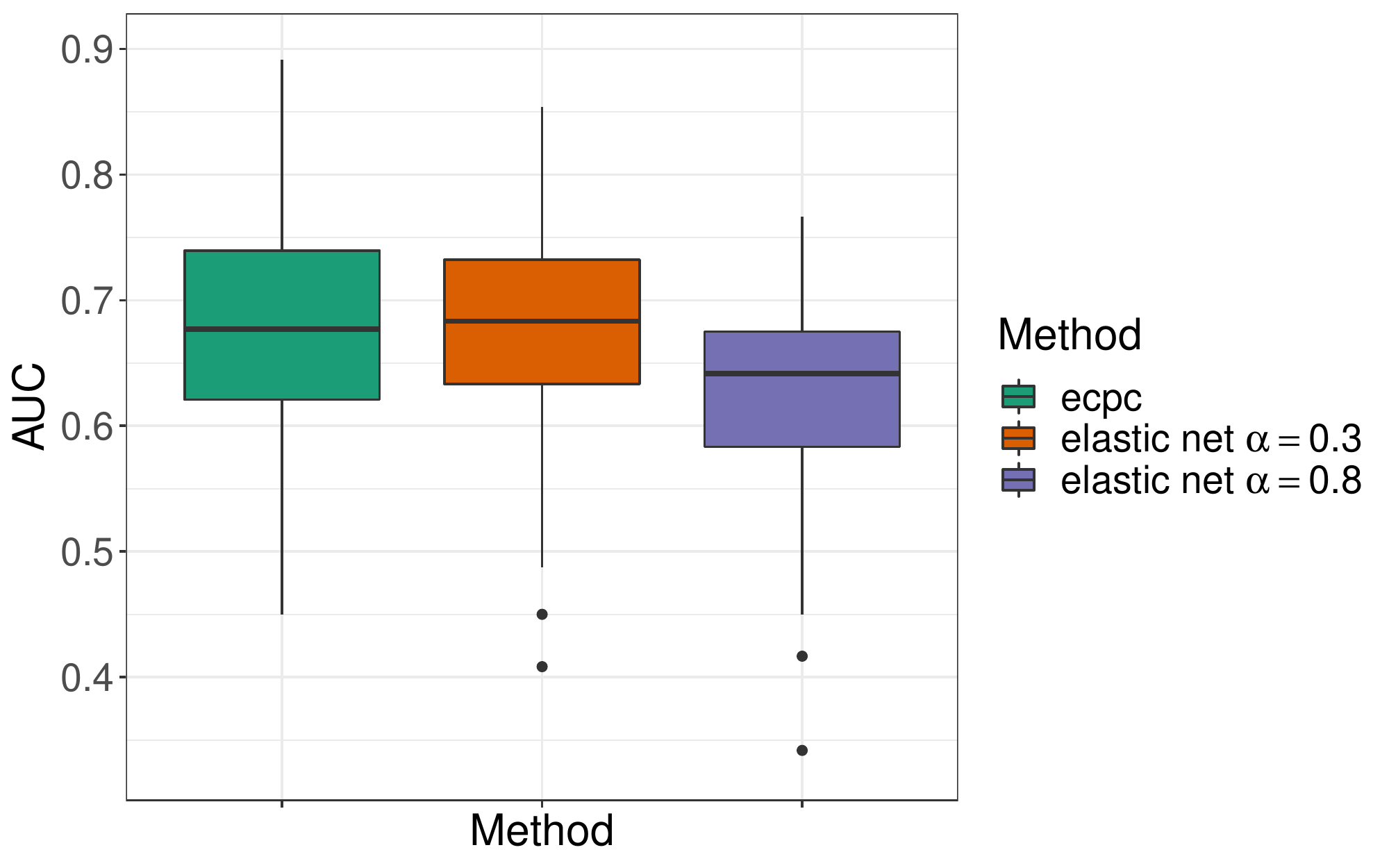}
    \end{subfigure}
    \begin{subfigure}[c]{0.45\textwidth}
    \centering
    \includegraphics[width=\linewidth]{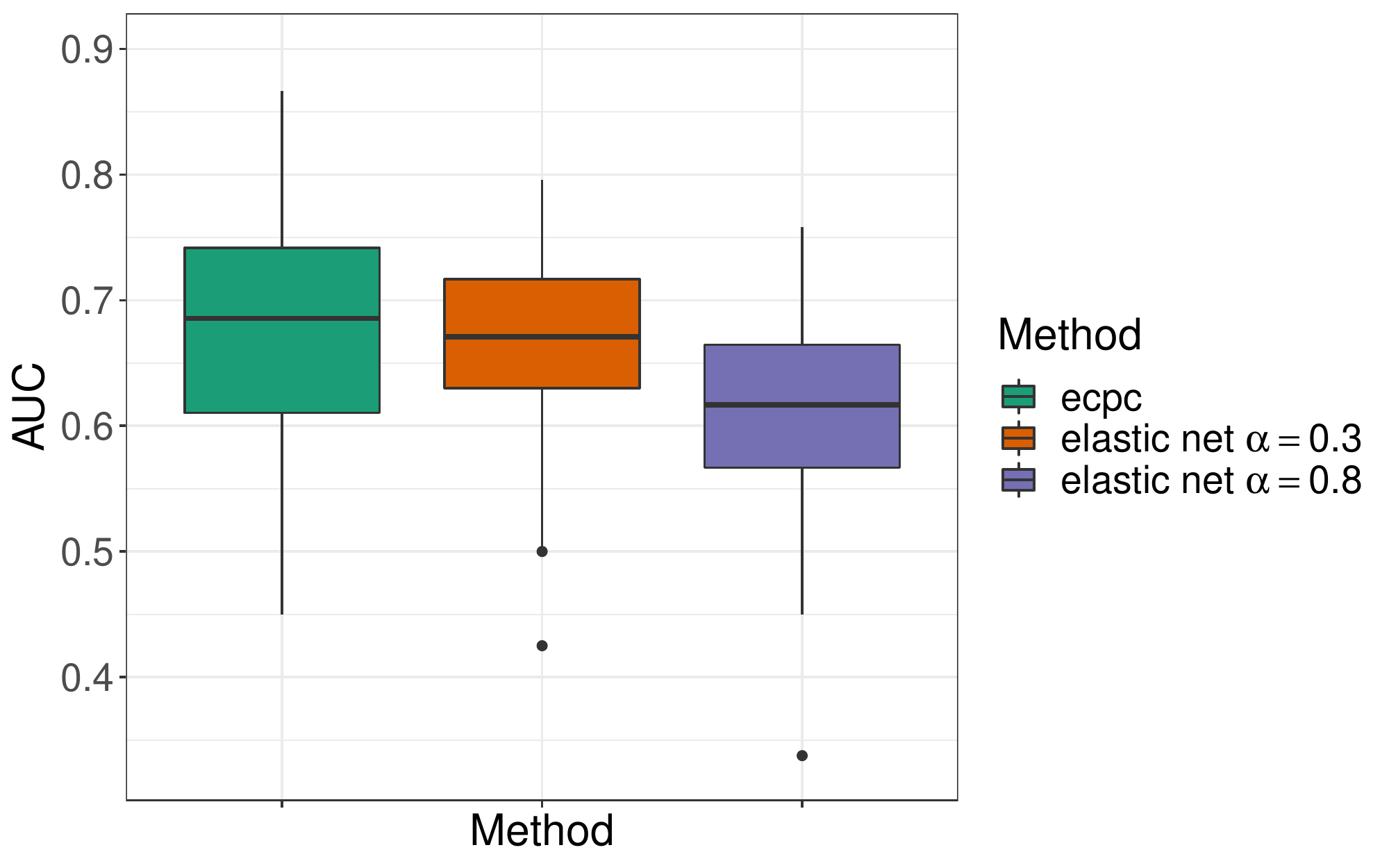}
    \end{subfigure}
    \caption{Results based on 50 stratified subsamples and corresponding test sets in Verlaat data example. Boxplot of the AUC performance of \texttt{ecpc}, \texttt{elastic net} with $\alpha=0.3$ and $\alpha=0.8$ on the test set based on selections of 25 covariates (left) or 50 covariates (right) in each subsample.}
    \label{fig:AUCsubsamplesVerlaat}
\end{figure}

\begin{figure}
    \centering
    \begin{subfigure}[c]{0.45\textwidth}
    \centering
    \includegraphics[width=\linewidth]{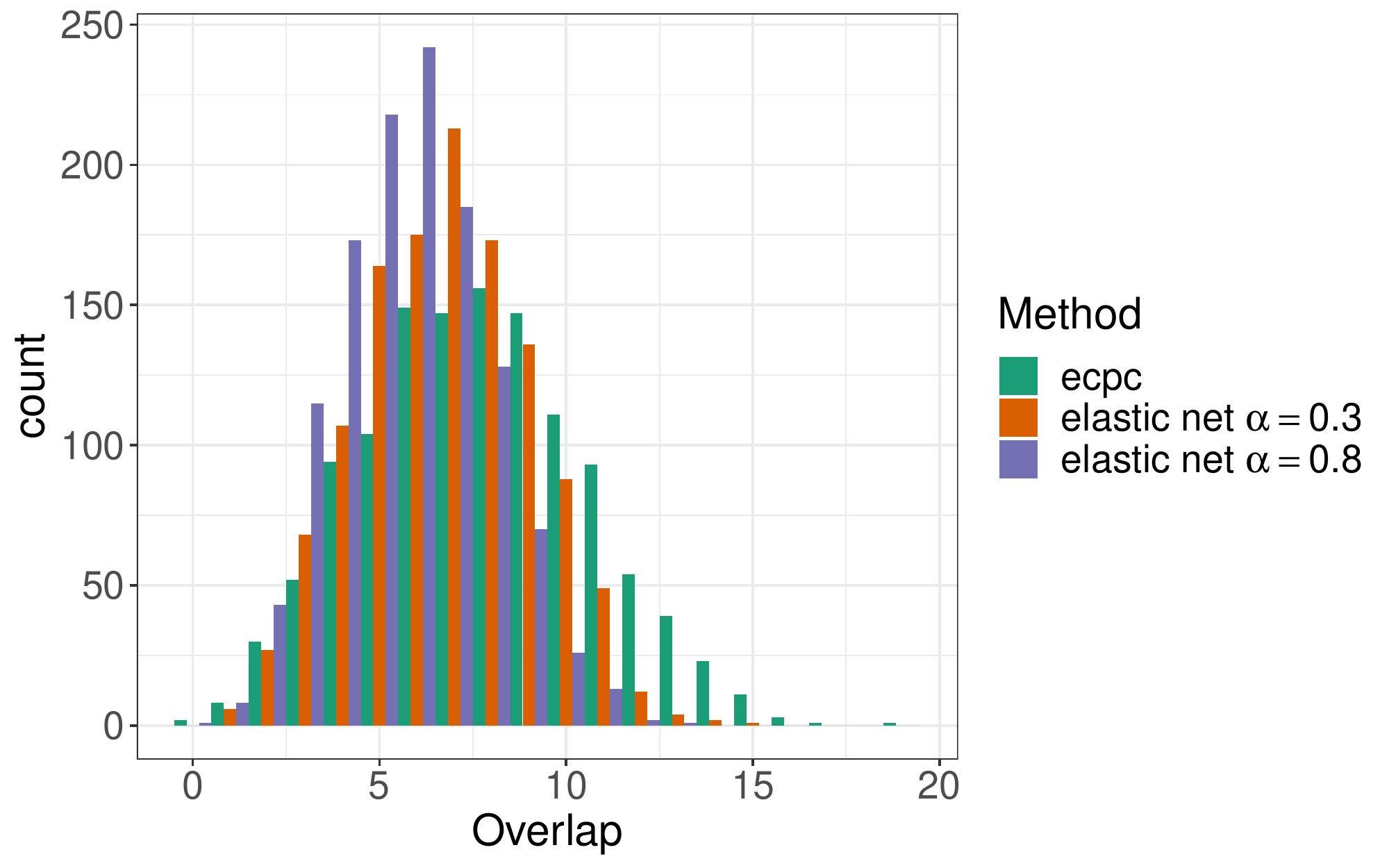}
    \end{subfigure}
    \begin{subfigure}[c]{0.45\textwidth}
    \centering
    \includegraphics[width=\linewidth]{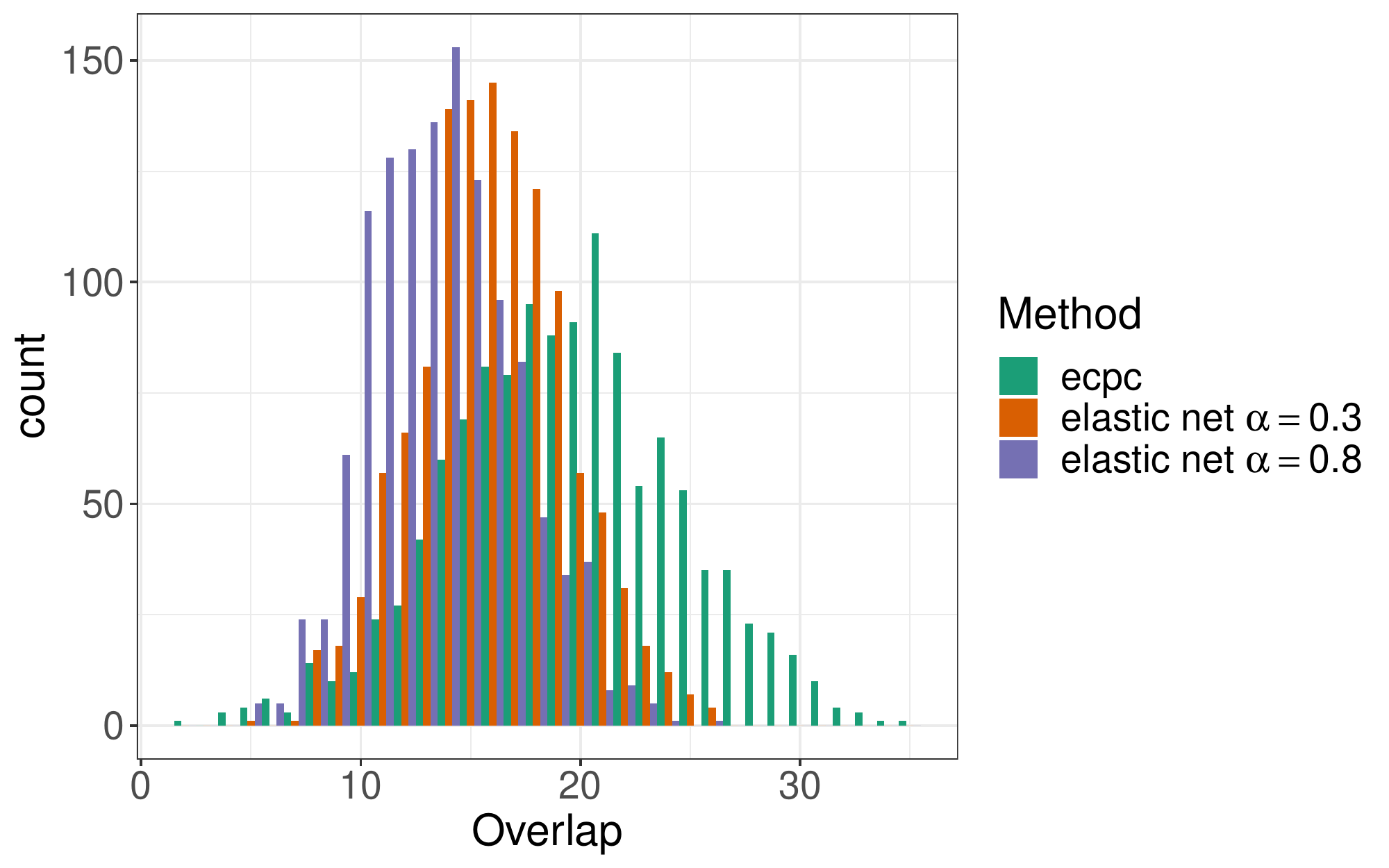}
    \end{subfigure}
    \caption{Results based on 50 stratified subsamples in Verlaat data example. Histogram of number of overlapping variables in pairwise comparisons of selections of 25 covariates (left) or 50 covariates (right) in each subsample, for the methods \texttt{ecpc}, \texttt{elastic net} with $\alpha=0.3$ and $\alpha=0.8$.}
    \label{fig:overlapVerlaat}
\end{figure}

\end{appendix}

\end{document}